\documentclass[11pt]{article}
\usepackage{amssymb}
\usepackage{amsmath}
\usepackage{graphicx,color}
\usepackage{amsfonts}
\usepackage[ignoreall]{geometry}
\usepackage{natbib}

 \oddsidemargin=0in
 \evensidemargin=0in
 \topmargin=0in
 \headsep=0in
 \headheight=0in
 \textheight=9in
 \textwidth=6.5in
\newtheorem{theorem}{Theorem}

\newtheorem{corollary}[theorem]{Corollary}

\newtheorem{lemma}[theorem]{Lemma}

\newtheorem{method}[theorem]{Method}
\newtheorem{proposition}[theorem]{Proposition}

\newenvironment{proof}[1][Proof]{\noindent\textbf{#1.} }{\ \rule{0.5em}{0.5em}}
\title{Bootstrapping a Change-Point Cox Model for Survival Data}

\author{Gongjun Xu, Bodhisattva Sen and Zhiliang Ying\\ Columbia University}

\begin{document}
\maketitle

\begin{abstract}
This paper investigates the (in)-consistency of various bootstrap methods for making inference on a change-point in time in the Cox model with right censored survival data. A criterion is established for the consistency  of any bootstrap method. It is shown that the usual nonparametric bootstrap is inconsistent for the maximum partial likelihood estimation of the change-point.  A new model-based bootstrap approach is proposed and its consistency established. Simulation studies are carried out to assess the performance of various bootstrap schemes.
\end{abstract}
\section{Introduction}
The proportional hazards model of \cite{Cox72} specifies that the hazard function of survival time for a subject with possibly time-dependent covariate vector $Z$ is 
\begin{equation}\label{eq1}
\lambda(t | Z(s),s\leq t) = \exp\left( \beta'_0 Z(t)\right)
\lambda_0(t),
\end{equation}
where $\beta_0$ is a $p$-dimensional vector of regression parameters and $\lambda_0$ an unknown baseline hazard function. 
Inference on the regression parameter $\beta_0$ is usually based on the 
partial likelihood \citep{Cox1975}. The theoretical properties of the maximum partial likelihood estimator (MPLE) of $\beta_0$ have been studied extensively in the literature; see \cite*{AndersonGill}, \cite{FlemingHarrington}, and \cite{KalbfleischPrentice}.

It is sometimes plausible to postulate that the regression coefficient changes its value at a certain time, resulting in a change-point extension of the Cox model. For clinical trial data,  \cite{Meinert86} and \cite{Zucker90}  argue that the treatment effect may manifest only after a period of time. 
To model such lag effect, a two-phase Cox model with a change-point in time is usually considered
 and the hazard function is written as
\begin{equation}\label{cox}
\lambda(t | Z(s),s\leq t) = \exp\left(\alpha'_0 Z(t) 1_{t\leq \zeta_0} + \beta'_0 Z(t)1_{t > \zeta_0}\right) \lambda_0(t),
\end{equation}
where a second regression parameter vector $\alpha_0$ is added to model \eqref{eq1} and $\zeta_0$ is the change-point parameter. It is clear that estimation of the change-point $\zeta_0$ is an important step in the model based inference. For the identifiability of model \eqref{cox}, we assume throughout that $\alpha_0\neq \beta_0$ since otherwise  this model reduces to \eqref{eq1}  and $\zeta_0$ is not identifiable.

Model \eqref{cox} has been extensively studied in the literature. \cite*{Liang90} considered the problem of testing the null hypothesis of no change-point effect based on a maximal score statistic. 
 \cite*{Luo97} focused on testing $H_0: \zeta=\zeta_0$ v.s. $H_1: \zeta\neq\zeta_0$ for a pre-specified $\zeta_0$ and derived the asymptotic distribution of the  partial likelihood ratio test  statistic under $H_0$.
For estimation of the change-point parameter $\zeta_0$,  \cite{Luo1996} and \cite{Pons2002}  showed that the MPLE of  $\zeta_0$ is $n^{-1}$ consistent while that of the regression parameter vector is $n^{-1/2}$ consistent. This is largely due to the fact that the partial likelihood function is not differentiable with respect to the change-point parameter and therefore the usual Taylor expansion is not applicable. This ``nonstandard" asymptotic behavior of the MPLE of $\zeta_0$ is typical in change-point regression problems; see  \cite{kosorok2007}, \cite*{lan2009change}, and \cite{SeijoSen2011} for examples of different models.

Although the asymptotic distribution of the MPLE of $\zeta_0$ has been derived in the literature \citep{Luo1996,Pons2002}, it cannot be directly used for making inference for $\zeta_0$ due to the presence of  nuisance parameters.
Bootstrap methods bypass the difficulty of estimating the nuisance parameters
and are generally reliable in standard ${n^{-1/2}}$ convergence problems; see \cite{efron1993introduction} and \cite{boots1997}. When the bootstrap is applied to nonstandard problems such as the change-point model \eqref{cox}, however, it may yield invalid confidence intervals (CIs) for $\zeta_0$. 
The failure of the usual bootstrap methods in nonstandard situations has been documented  in the literature;  see  \cite{abrevaya2005bootstrap} and \cite*{sen2010inconsistency} for situations giving rise to $n^{1/3}$ asymptotic; see \cite*{bose2001generalised} for general $M$-estimation problems.
However, the change-point problem for the Cox model \eqref{cox} is indeed quite different from the problems considered by the above authors, and the performance of different bootstrap methods has not been investigated.  

Various bootstrap procedures have been applied to the standard Cox model \eqref{eq1} \citep{boots1997}. However, as to be shown in Section \ref{SectionCheck}, the commonly used bootstrap methods, such as sampling directly from the empirical  distribution  (ED) and sampling conditional on covariates \citep[]{Deborah1994}, provide invalid CIs for the change-point parameter in model \eqref{cox}.  Indeed, we show that the bootstrap estimates constructed by these methods are the smallest maximizers of certain stochastic processes and, conditional on the data, these processes do not have any weak limit. 
This strongly suggests not only the inconsistency but also the nonexistence of any weak limit for the corresponding bootstrap estimates. 

To get consistent bootstrap procedures, we develop a new bootstrap approach that, conditional on the covariates,  draws samples from a smooth approximation to the distribution of the survival time and from an estimate of the distribution of the censoring time. A key step in the new approach is the smooth approximation to the distribution of the survival time, which makes the bootstrap scheme successfully mimic the local behavior of the true distribution function at the location of $\zeta_0$. 
As a result, the proposed  approach yields asymptotically valid CIs for $\zeta_0$.
Furthermore, the asymptotic theory is also validated through simulation studies with reasonable sample sizes. 

The rest of this paper is organized as follows. In Section \ref{SectionModel} we describe the model setup and introduce different  bootstrap schemes. In Section \ref{SectionGeneral}, we state a series of convergence results. In Section \ref{SectionCheck} we study the inconsistency of the standard bootstrap methods, including sampling from the ED, and we prove the consistency of the smooth and the $m$-out-of-$n$ bootstrap procedures. We compare the finite sample performance of different bootstrap methods through a simulation study in Section \ref{SectionSim}.  Proofs of the main theorems are  in Section \ref{SectionProof}.  
Proofs of several lemmas are provided in the Appendix. 

\section{Model setup and bootstrap schemes}\label{SectionModel}
We use $T$ to denote survival time and $C$ censoring time. Throughout, $a\wedge b=\min\{a,b\}$ and $a\vee b =\max\{a,b\}.$
Let $\tilde T=T\wedge C$ and $\delta=1_{T\leq C}$ indicating failure
($1$) or censoring ($0$). 
Furthermore, there is a $p$-dimensional covariate process $Z(t)$, c\'agl\'ad (left-continuous with right-hand limits), which may include an individual's treatment assignment and certain relevant characteristics. 
In this paper we focus on the external time-dependent covariate and assume that $Z$ is observed over the study interval $[0,\tau]$, $\tau<\infty$. 
An external time-dependent covariate means that its value path is  not directly generated by the individual under the study. Examples include the age of an individual and the air pollution level for asthma study; see Chapter 6.3.1 in \cite{KalbfleischPrentice} for more discussion.
Furthermore, we assume that  $Z$ is of bounded total variation on $[0,\tau]$ and the covariance matrix $Var(Z(t))$ is strictly positive definite for any $t\in [0,\tau]$.

Given covariate $Z$, the survival time $T$ is assumed to be conditionally independent of the censoring time $C$. The hazard rate function of $T$ follows the change-point Cox model \eqref{cox} with $\alpha_0$ and $\beta_0$ belonging to bounded convex sets $\Theta_\alpha$ and $\Theta_\beta$ in $\mathbb{R}^{p}$, respectively. 
We assume that the baseline hazard function $\lambda_0(\cdot)$ is bounded on $[0,\tau]$ with $\inf_{t\in[0,\tau]}\lambda_0(t)>0$ and the conditional distribution of censoring time $G(\cdot|Z)$ satisfies $\sup_{z\in {\cal V}}G(\tau |z)<1$, where ${\cal V}$ is the set of all possible paths of $Z$.  To ensure the identifiability of $\zeta_0$, we further assume that $\lambda_0(\cdot)$ and  $G(\cdot|z)$, for any $z\in {\cal V}$, are continuous at $\zeta_0$. 

The observed data $(\tilde T_i, \delta_i, Z_i) ,i=1,...,n,$ consist of $n$ i.i.d. realizations of $(\tilde T, \delta, Z)$. The Cox partial likelihood \citep{Cox1975} is 
\begin{equation}\label{PL}
L_n(\alpha,\beta,\zeta) = \prod_{1\leq i \leq n, \delta_i=1} \frac{e^{\alpha' Z_i(T_i) 1_{T_i \leq \zeta} + \beta' Z_i(T_i)1_{T_i > \zeta}}}{\sum_{1\leq j\leq n, \tilde T_j\geq T_i}e^{\alpha' Z_j(T_i) 1_{T_i\leq \zeta} + \beta' Z_j(T_i)1_{T_i > \zeta}}}.
\end{equation}
Let $l_n(\alpha,\beta,\zeta)=\log L_n(\alpha,\beta,\zeta)$, which is continuous in $\alpha$ and $\beta$ but c\'{a}dl\'{a}g in $\zeta$. In fact, it is a step function in $\zeta$ and hence could have multiple maximizers. To avoid ambiguity, we say that $(\tilde \alpha_n',\tilde \beta_n',\tilde \zeta_n)' \in \Theta:=\Theta_\alpha\times\Theta_\beta\times[0,\tau]$ is a maximizer if
$$l_n(\tilde \alpha_n,\tilde \beta_n,\tilde \zeta_n-) \vee l_n(\tilde \alpha_n,\tilde \beta_n,\tilde \zeta_n) 
= \sup_{( \alpha', \beta', \zeta)'\in\Theta} l_n( \alpha, \beta, \zeta).$$
Since, for each $\zeta$, $l_n(\alpha, \beta,\zeta)$ as a function of  $\alpha$ and $\beta$ has a unique maximizer, we can choose as our MPLE the maximizer with the smallest value of $\zeta$. In other words, our estimator $(\hat \alpha_n',\hat \beta_n',\hat \zeta_n)'\in \Theta$ will be the only maximizer 
such that if $(\tilde \alpha_n',\tilde \beta_n',\tilde \zeta_n)'\in \Theta$ 
 is any other maximizer, then $\hat\zeta_n<\tilde \zeta_n.$
In this case, we say that $\hat\theta_n := (\hat \alpha_n',\hat \beta_n',\hat \zeta_n)'$ is the smallest argmax of $l_n$ and write it as
\begin{equation}\label{MPLEcox}
\hat\theta_n :=(\hat\alpha_n', \hat\beta_n',\hat\zeta_n)' 
:=  \hbox{sargmax}_{( \alpha', \beta', \zeta)'\in \Theta} l_n( \alpha, \beta, \zeta) .
\end{equation}

As discussed in the Introduction, it is not practical to directly use the limiting distribution of  
$\hat\zeta_n$ for constructing CIs for $\zeta_0$. Thus it is desirable to develop bootstrap approaches.

\subsection{Bootstrap procedures}\label{sectionbs}
We start with a brief review of  bootstrap procedures.
Consider a sample $\mathbf{X}_n = \{X_1,\cdots, X_n\}\stackrel{\text{iid}}{\sim}  F_X$. Suppose that we are interested in estimating the distribution function $F_{R_n}$ of a random variable $R_n(\mathbf{X}_n, F_X)$. 
A bootstrap procedure generates ${\mathbf X}_n^{*} = \{X_1^{*},\ldots, X_{m_n}^{*}\} \stackrel{\text{iid}}{\sim} \hat{F}_{X,n}$ given ${\mathbf X}_n$, where $\hat{F}_{X,n}$ is an estimator of $F_X$ from ${\mathbf X}_n$ and $m_n$ is a constant depending on $n$, and then estimates  $F_{R_n}$ by $F^*_{R_n}$, the conditional distribution function of $R_n({\mathbf X}_n^{*},\hat{F}_{X,n})$ given ${\mathbf X}_n$.
Let $d$ denote a metric metrizing  weak convergence of distributions. We say that $F^*_{R_n}$ is {\it weakly consistent} if $d(F_{R_n},F^*_{R_n})\stackrel{}{\rightarrow} 0$ in probability. If $F_{R_n}$ has a weak limit $F_{R}$, then weak consistency requires $F^*_{R_n}$ to converge weakly to $F_{R}$, in probability.

In the current context, we are interested in the distribution of $n(\hat\zeta_n-\zeta_0)$. Then for a consistent bootstrap procedure, the conditional distribution of $m_n(\hat\zeta^*_n-\hat\zeta_n)$ given the data must provide a good approximation to the distribution function of $n(\hat\zeta_n-\zeta_0)$,  where $\hat\zeta^*_n$ is the estimator of $\zeta_0$ obtained from the bootstrap sample. 
In the following we introduce several bootstrap methods commonly used in the literature for the model \eqref{cox}. We start with the classical bootstrap based on the  ED. 

\begin{method}[Classical bootstrap]\label{methodclassic}
Draw a random sample $\{(\tilde T^*_{n,i}, \delta^*_{n,i}, Z^*_{n,i}): i=1,\cdots,n\}$ from the ED of the data $\{(\tilde T_{i}, \delta_{i}, Z_{i}): i=1,\cdots,n\}$. 
\end{method}

An alternative to the usual nonparametric bootstrap method (Method \ref{methodclassic}) considered in non-regular problems is the $m$-out-of-$n$ bootstrap; see, e.g., \cite*{B1997}.

\begin{method}[$m$-out-of-$n$ bootstrap]\label{methodmofn}
 Choose an increasing sequence $\{m_n\}_{n=1}^\infty$ such that $m_n=o(n)$ and $m_n\rightarrow\infty$.  Draw a random sample $\{(\tilde T^*_{n,i}, \delta^*_{n,i}, Z^*_{n,i}): i=1,\cdots,m_n\}$ from the ED of the data $\{(\tilde T_{i}, \delta_{i}, Z_{i}): i=1,\cdots,n\}$. 
\end{method}

Two widely used conditional bootstrap procedures for the Cox model are given in Methods \ref{methodcond1} and \ref{methodcond2} below; see \cite{Deborah1994}. These methods are model-based and need estimators of the conditional distributions of $T$ and $C$ given $Z$.

\begin{method}[Bootstrap conditional on covariates]\label{methodcond1}~\newline
\indent 1. Fit the Cox regression model and construct an estimator of the conditional distribution of $T$ given $Z$ as
\begin{equation}\label{F}
\hat F^b_n( t | Z) =  1- \exp\Big(-\int_0^t e^{\hat\alpha_n' Z(s) 1_{s\leq \hat\zeta_n} +\hat \beta'_n Z(s)1_{s > \hat\zeta_n}}d\hat\Lambda^b_{n,0}(s)\Big),
\end{equation}
where $\hat\Lambda^b_{n,0}(s)$  is the Breslow estimator of the cumulative baseline hazard function $\Lambda_0$, i.e.,
$$\hat\Lambda^ b_{n,0}(t)  = \int_0^t \Big(\sum_{j=1}^n Y_j(s)e^{\hat\alpha_n' Z_j(s) 1_{s\leq \hat\zeta_n} +\hat \beta'_n Z_j(s)1_{s > \hat\zeta_n}}\Big)^{-1}d\Big(\sum_{i=1}^n N_i(s)\Big)$$
with $Y_{i}(t) = 1_{\tilde{T}_{i}\geq t}$ and $N_{i}(t) =  1_{\tilde T_{i}\leq t, \delta_i=1}.$ 
In addition, we construct a conditional distribution estimator $\hat G_n(\cdot |Z)$ of $G(\cdot | Z)$; see Section \ref{sectionbs2} for more discussion on estimating  $G(\cdot|Z)$. \\
\indent 2. For given $Z_1, \cdots, Z_n$, generate i.i.d. replicates $\{T^*_{n,i}, C^*_{n,i}: i=1,\cdots,n\}$ from the conditional distribution estimators $\{\hat F^b_n(\cdot |Z_i), \hat G_n(\cdot |Z_i): i=1,\cdots, n\}$, respectively. 
Then we obtain a bootstrap sample
$\{(\tilde T^*_{n,i}, \delta^*_{n,i}, Z_{i}): i=1,\cdots,n\}$, where $\tilde T^*_{n,i} = T^*_{n,i}\wedge C^*_{n,i}$ and  $\delta^*_{n,i}= 1_{T^*_{n,i}\leq C^*_{n,i}}$.
\end{method}

\begin{method}[Bootstrap conditional on  covariates and censoring]\label{methodcond2}~\newline
\indent 1. Same as Step 1 in Method \ref{methodcond1}.\\
\indent 2. For given $Z_1, \cdots, Z_n$, generate $T^*_{n,i}$ from $\hat F^b_n(\cdot |Z_i)$. If $\delta_i =0$, let $C^*_{n,i} =  C_i$;  otherwise, generate $C^*_{n,i}$  from $\hat G_n(\cdot|Z_i)$ conditioning on $C^*_{n,i}>T_i$. 
\end{method}

Methods \ref{methodclassic}, \ref{methodcond1} and \ref{methodcond2} are the most widely used bootstrap methods for the Cox regression model. In the following sections, we demonstrate through theoretical derivation and simulation the inconsistency of these methods for constructing CIs for $\zeta_0$. 
To get a consistent estimate of the distribution of $n(\hat\zeta_n-\zeta_0)$,
we propose the following smooth bootstrap procedures. 

\begin{method}[Smooth bootstrap conditional on covariates]\label{methodsmooth1}~\newline
\indent 1.Choose an appropriate nonparametric smoothing procedure (e.g., kernel estimation method in \cite{Wells94}) to build an estimator $\hat\lambda_{n,0}$ of $\lambda_0$. 
The associated estimator of $F(t|Z)$ is
\begin{equation}\label{FS}
\hat F^s_n( t | Z) =  1- \exp\Big(-\int_0^t e^{\hat\alpha_n' Z(s) 1_{s\leq \hat\zeta_n} +\hat \beta'_n Z(s)1_{s > \hat\zeta_n}}\hat\lambda_{n,0}(s)ds\Big).
\end{equation}
\indent 2. For given $Z_1, \cdots, Z_n$, generate i.i.d. replicates $\{T^*_{n,i}, C^*_{n,i}: i=1,\cdots,n\}$ from the conditional distribution estimatiors $\{\hat F^s_n(\cdot |Z_i), \hat G_n(\cdot |Z_i):  i=1,\cdots, n\}$, respectively. Then we obtain a bootstrap sample $\{\tilde T^*_{n,i}, \delta^*_{n,i}, Z_i: i=1,\cdots,n\}$. 
\end{method}

\begin{method}[Smooth bootstrap conditional on  covariates and  censoring]\label{methodsmooth2}~\newline
\indent 1. Same as Step 1 in Method \ref{methodsmooth1}.\\
\indent 2. For given $Z_1, \cdots, Z_n$, generate $T^*_{n,i}$ from $\hat F^s_n(\cdot |Z_i)$. If $\delta_i =0$, let $C^*_{n,i} =  C_i$;  otherwise, generate $C^*_{n,i}$  from $\hat G_n(\cdot|Z_i)$ conditioning on $C^*_{n,i}>T_i$. 
\end{method}

We will use a general convergence result established in Section \ref{SectionGeneral} to prove that the smooth bootstrap procedures (Methods \ref{methodsmooth1} and \ref{methodsmooth2}) and the $m$-out-of-$n$ procedure (Method \ref{methodmofn}) are  consistent. 
We will also illustrate through a simulation study that  the smooth bootstrap methods  outperform the  $m$-out-of-$n$ method. 

\section{A general  convergence result}\label{SectionGeneral}
In this section we prove a general convergence theorem for triangular arrays of random variables in the non-regular Cox proportional hazard model with a change-point in time. This theorem will be applied to show the consistency of the bootstrap procedures introduced in the previous section. 

We first introduce some notation.
Let $\mathbb{P}$ be a distribution satisfying the change-point Cox model \eqref{cox} for some parameter 
$\theta_0:=(\alpha_0',\beta_0',\zeta_0)'\in \Theta:=\Theta_\alpha\times\Theta_\beta\times[0,\tau]$. 
 Consider a triangular array of independent random samples 
$\{(\tilde T_{n,i},\delta_{n,i}, Z_{n,i}): i=1,\cdots, m_n\}$ defined on a probability space $(\Omega, \cal A, {\bf P})$, where $\tilde T_{n,i}=T_{n,i}\wedge  C_{n,i}$, $\delta_{n,i}=1_{T_{n,i}\leq  C_{n,i}}$, and $m_n\rightarrow \infty$ as $n\rightarrow\infty$. We use $\mathbf{E}$ to denote the expectation operator with respect to ${\bf P}$. Furthermore, we assume that $\{(\tilde T_{n,i}, \delta_{n,i}, Z_{n,i}): i=1,\cdots, m_n\}$ jointly follows a distribution $\mathbb{Q}_{n}$, and for each $i$,  the distribution of $(\tilde T_{n,i}, \delta_{n,i}, Z_{n,i})$ is $\mathbb{Q}_{n,i}$. 

As in Section \ref{SectionModel}, we assume that under $\mathbb{Q}_n$, the covariate process $Z(t)$ is c\'agl\'ad and has bounded total variation on $[0,\tau]$. 
We write $Z^{\otimes 0}=1$, $Z^{\otimes
1}=Z$, and $Z^{\otimes 2}=ZZ'$.
For the $i$th subject, let
$Y_{n,i}(t) = 1_{\tilde{T}_{n,i}\geq t}$ and $N_{n,i}(t) =  1_{\tilde T_{n,i}\leq t, \delta_{n,i}=1}.$
For $\gamma\in \mathbb{R}^p$ and $k=0,1$ and $2$,  let
\begin{eqnarray*}\label{gamma}
S_{n,k}(t;\gamma) &=& \frac{1}{m_n}\sum_{i=1}^{m_n}Y_{n,i}(t)Z_{n,i}^{\otimes k}(t)\exp(\gamma' Z_{n,i}(t)),\\
s_{n,k}(t;\gamma) &=&  \mathbb{Q}_n \Big(\frac{1}{m_n}\sum_{i=1}^{m_n}Y_{n,i}(t)Z_{n,i}^{\otimes k}(t)\exp(\gamma' Z_{n,i}(t))\Big),\\
s_{k}(t;\gamma) &=& \mathbb{P} \left(Y(t)Z^{\otimes k}(t)\exp(\gamma' Z)\right),\\
A_{n,k}(t) &= &
\mathbb{Q}_n\Big(\frac{1}{m_n}\sum_{i=1}^{m_n} \int_0^t Z_{n,i}^{\otimes k}(s) dN_{n,i}(s)\Big),\\
A_k(t) &=& \mathbb{P}\Big(\int_0^t Z^{\otimes k}(s) dN(s)\Big) 
= \int_0^t s_k(s; \alpha_0 1_{s\leq \zeta_0} + \beta_0 1_{s> \zeta_0} ) \lambda_0(s)ds,
 \end{eqnarray*}
where we use $\mathbb{Q}_n(\cdot)$ and $\mathbb{P}(\cdot)$ to denote the expectation operators under the distributions $\mathbb{Q}_n$ and $\mathbb{P}$, respectively. 
We write 
$$\bar Z_{n}(t;\gamma) = \frac{S_{n,1}(t;\gamma)}{S_{n,0}(t;\gamma)},~~
\bar z_{n}(t;\gamma) = \frac{s_{n,1}(t;\gamma)}{s_{n,0}(t;\gamma)},~~
\bar z(t;\gamma) = \frac{s_{1}(t;\gamma)}{s_{0}(t;\gamma)}.$$
Further we denote the ratio between ${S_{n,0}(t;\gamma_1)} $ and ${S_{n,0}(t;\gamma_2)} $ by 
\begin{align*}
&R_{n}(t;\gamma_1,\gamma_2) = \frac{S_{n,0}(t;\gamma_1)}{S_{n,0}(t;\gamma_2)} . 
\end{align*}
Similarly we write 
  \begin{align*}
r_{n}(t;\gamma_1,\gamma_2) = \frac{s_{n,0}(t;\gamma_1)}{s_{n,0}(t;\gamma_2)},~~
 ~ r(t;\gamma_1,\gamma_2) = \frac{s_{0}(t;\gamma_1)}{s_{0}(t;\gamma_2)}. 
  \end{align*}


Using the above notation, for $\theta=(\alpha',\beta',\zeta)'$,  the log partial likelihood function of $\{(\tilde T_{n,i}, \delta_{n,i}, Z_{n,i}): i=1,\cdots, m_n\}$ takes the form
 \begin{eqnarray*}
l^*_n(\theta) =  \sum_{i=1}^{m_n}
 \int_0^\tau \Big((\alpha 1_{s\leq \zeta} + \beta 1_{s> \zeta} )'Z_{n,i}
 - \log S_{n,0} (s; \alpha 1_{s\leq \zeta} + \beta 1_{s> \zeta})\Big)
dN_{n,i}(s).
\end{eqnarray*}
Denote the MPLE of $l^*_n(\theta)$ by $\theta^*_n=({\alpha^*_n}', {\beta^*_n}',\zeta^*_n)' $, i.e.,
$$ \theta^*_n :=  \hbox{sargmax}_{\theta\in \Theta}
l^*_n( \theta) .$$
Let $\theta_n=(\alpha_n', \beta_n',\zeta_n)' $ be given by 
\begin{eqnarray*}
&& \theta_n 
:= \hbox{sargmax}_{\theta \in \Theta}
 \mathbb{Q}_{n}\biggr(\frac{1}{m_n}\sum_{i=1}^{m_n}
 \int_0^\tau \big((\alpha 1_{s\leq \zeta} + \beta 1_{s> \zeta} )'Z_{n,i}\\
&&~~~~~~~~~~~~~~~~~~~~~~~~ - \log s_{n,0} (s; \alpha 1_{s\leq \zeta} + \beta 1_{s> \zeta})\big)
dN_{n,i}(s) \biggr).
\end{eqnarray*}
The existence of $\theta_n$ is guaranteed as the above objective function is concave in $\alpha$ and $\beta$ for every fixed $\zeta$ and bounded and c\'adl\'ag as a function of $\zeta$. 
When $\mathbb{Q}_n$ is the ED of a sample generated from model \eqref{cox}, $\theta_n$ becomes the usual MPLE $\hat\theta_n$ of $l_n(\theta)$ as defined in \eqref{MPLEcox}.

In the following, we derive sufficient conditions on the distribution ${\mathbb Q}_n$ that  guarantees the weak convergence of $(\sqrt{m_n}(\alpha_n^*-\alpha_n)', \sqrt{m_n}(\beta_n^*-\beta_n)', {m_n}(\zeta^*_n-\zeta_n))'$.

\subsection{Consistency and the rate of convergence}

We first show the consistency of the MPLE $\theta_n^*$ of $l_n^*$, whose proof is given in Section \ref{SectionProof}. We need the following assumption.

\begin{itemize}
\item[A1.]   For $k=0,1$ and $2$,  as $n\rightarrow\infty$,
$$\sup_{t\in [0,\tau], \gamma\in \Theta_\alpha\cup\Theta_\beta}|s_{n,k}(t;\gamma) - s_{k}(t;\gamma)| 
 \xrightarrow[]{} 0~
\mbox{ and } ~
\sup_{t\in [0,\tau]}\left|A_{n,k}(t) - A_k(t) \right| \xrightarrow[]{} 0,$$
where $|\cdot|$ denotes the $L_1$ norm. 
\end{itemize}

Condition A1 indicates that $\mathbb{Q}_n$ approaches  the distribution satisfying the Cox model \eqref{cox} in the sense that  the difference between expectations of $S_{n,k}$ ($A_{n,k}$) under distributions $\mathbb{Q}_n$ and $\mathbb{P}$ goes to 0 as $n\rightarrow \infty$. 
When $\mathbb{Q}_n$ is the ED of a sample from model \eqref{cox}, the uniform law of large numbers implies A1; see Section \ref{sectionbs1} for more details.

\begin{theorem}\label{theorem1}
Under condition $A1$, $\theta_n  \xrightarrow[]{}\theta_0$ and $\theta_n^* \xrightarrow[]{\mathbf{P}} \theta_0$.
\end{theorem}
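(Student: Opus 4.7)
The plan is a standard two-step argmax consistency argument. First, a direct computation (interchanging sum, integral, and $\mathbb{Q}_n$-expectation) rewrites
$$M_n(\theta) := \mathbb{Q}_n\bigl(l_n^*(\theta)/m_n\bigr) = \int_0^\tau \gamma_\theta(s)'\, dA_{n,1}(s) - \int_0^\tau \log s_{n,0}(s; \gamma_\theta(s))\, dA_{n,0}(s),$$
where $\gamma_\theta(s) := \alpha\, 1_{s \leq \zeta} + \beta\, 1_{s > \zeta}$. Define $M(\theta)$ by the same expression with $s_{n,0}, A_{n,k}$ replaced by $s_0, A_k$. The assumptions $\inf_t \lambda_0(t) > 0$ and $\sup_z G(\tau|z) < 1$ together with boundedness of $Z$ force $s_0$ to be continuous in $\gamma$ and bounded away from $0$ on the compact set $[0,\tau] \times (\Theta_\alpha \cup \Theta_\beta)$, so $s_{n,0}$ is eventually bounded below by a positive constant and $\log$ is uniformly Lipschitz on the relevant range. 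Condition A1 then yields the uniform convergence $\sup_{\theta \in \Theta}|M_n(\theta) - M(\theta)| \to 0$.

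The technical heart of the argument is showing that $M$ has a unique maximizer at $\theta_0$. For each fixed $\zeta$, $M(\cdot,\cdot,\zeta)$ decouples into an $\alpha$-integral over $[0,\zeta]$ and a $\beta$-integral over $(\zeta,\tau]$, and both are strictly concave in their free parameter by the standard Cox identifiability argument using the positive-definiteness of $Var(Z(s))$. Set $M^*(\zeta) := \sup_{\alpha,\beta} M(\alpha,\beta,\zeta)$. At $\zeta_0$ the pointwise first-order conditions are solved by $(\alpha_0, \beta_0)$, giving $M^*(\zeta_0) = M(\theta_0)$. For $\zeta \neq \zeta_0$, on the interval $(\zeta \wedge \zeta_0,\, \zeta \vee \zeta_0]$ the fitted coefficient is forced to be matched against the \emph{wrong} true coefficient, and since $\alpha_0 \neq \beta_0$ while $\lambda_0$ and $G(\cdot|z)$ are continuous and nondegenerate near $\zeta_0$, this mismatch produces a strict penalty of order $|\zeta - \zeta_0|$. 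Hence $M^*(\zeta) < M^*(\zeta_0)$ for all $\zeta \neq \zeta_0$, so $\theta_0$ is the unique maximizer of $M$.

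Given these two ingredients, $\theta_n \to \theta_0$ follows from a routine argmax argument on the compact parameter space $\Theta$: any subsequential limit $\tilde\theta$ of $\theta_n$ must satisfy $M(\tilde\theta) \geq M(\theta_0)$ by the uniform convergence above together with the definition of $\theta_n$, forcing $\tilde\theta = \theta_0$. For the stochastic claim $\theta_n^* \xrightarrow{\mathbf{P}} \theta_0$ I would upgrade this to the uniform law of large numbers $\sup_{\theta \in \Theta}|l_n^*(\theta)/m_n - M_n(\theta)| \xrightarrow{\mathbf{P}} 0$ for the triangular array. The class of integrands indexed by $\theta$ is uniformly bounded (by boundedness of $Z$ and of $\Theta_\alpha, \Theta_\beta$) and is built from Lipschitz functions of $(\alpha,\beta)$ composed with the VC class $\{1_{s \leq \zeta}: \zeta \in [0,\tau]\}$, so a standard bracketing/symmetrization argument delivers the uniform LLN; the sample-dependence of $S_{n,0}$ inside the $\log$ is handled by first showing $\sup_{s,\gamma}|S_{n,0}(s;\gamma) - s_{n,0}(s;\gamma)| \xrightarrow{\mathbf{P}} 0$ and then linearizing via the uniform lower bound on $s_{n,0}$. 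A final argmax argument concludes. The main obstacle is the identifiability step: extracting the strict separation $M^*(\zeta) < M^*(\zeta_0)$ requires a careful quantitative use of $\alpha_0 \neq \beta_0$ together with the continuity of $\lambda_0$ and $G(\cdot|z)$ at $\zeta_0$, which is essentially the population-level identifiability result underlying the MPLE consistency in \cite{Luo1996} and \cite{Pons2002}.
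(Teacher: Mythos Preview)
Your overall strategy matches the paper's: show uniform convergence of the deterministic and sample criteria to a common limit $X(\theta)$, verify that $\theta_0$ is its unique maximizer, and invoke an argmax theorem. One slip worth flagging: the displayed identity $\mathbb{Q}_n\bigl(l_n^*(\theta)/m_n\bigr) = \int \gamma_\theta'\, dA_{n,1} - \int \log s_{n,0}(\cdot;\gamma_\theta)\, dA_{n,0}$ is not a correct computation, since $l_n^*$ contains $\log S_{n,0}$ and $\mathbb{Q}_n[\log S_{n,0}] \neq \log s_{n,0}$ in general. This is harmless, though, because the paper \emph{defines} $\theta_n$ as the sargmax of your right-hand side (with $s_{n,0}$ inserted directly), not as the sargmax of $\mathbb{Q}_n(l_n^*/m_n)$; so it is a mislabeling of $M_n$ rather than a gap in the argument.

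On the stochastic side, where you sketch a bracketing/VC uniform LLN for the triangular array and handle $S_{n,0}$ by linearizing $\log$ after proving $\sup_{s,\gamma}|S_{n,0}-s_{n,0}|\to 0$, the paper takes a slightly different technical route: it controls the martingale residuals $M_{n,k}$ via Lenglart's inequality (display~\eqref{inequality1}), obtains $\sup_{s,\gamma}|S_{n,0}-s_{n,0}|\to 0$ from the manageability Lemma~\ref{lemma1}, and then lifts pointwise-in-$(\alpha,\beta)$ convergence to uniform convergence by the Andersen--Gill concavity argument rather than by bracketing. Both routes are standard and deliver the same uniform convergence, after which the paper invokes Corollary~3.2.3(ii) of van der Vaart--Wellner where you invoke a direct subsequential argmax argument. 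Your identifiability sketch for $M^*(\zeta)<M^*(\zeta_0)$ is in fact more explicit than the paper's, which simply asserts that $\theta_0$ is the unique maximizer of $X$.
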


We consider the rate of convergence of $\theta^*_n$ and show that the estimators of the ``regular'' parameters, $\alpha^*_n$ and $\beta^*_n$, converge at a rate of ${m_n}^{-1/2}$ while the change-point $\zeta^*_n$ converges at rate  $m_n^{-1}$. To guarantee the right rate of convergence, we need the following condition. 

\medskip
A2. There exist positive constants $\rho_1$ and $\rho_2$ such that, for any sequence $\{h_n\}$ satisfying $ h_n\rightarrow\infty$ and $h_n/m_n \rightarrow 0$ as $n\rightarrow\infty$, 
the following holds:
\begin{eqnarray*}
 \lim_{n\rightarrow\infty}\sup_{|h|> \frac{h_n}{m_n}}
\frac{1}{h}\biggr|\int_{\zeta_n}^{\zeta_n+h} dA_{n,1}(s)
- \int_{\zeta_n}^{\zeta_n+h}{\bar z_{n}(s; \alpha_n 1_{s\leq \zeta_n} + \beta_n 1_{s> \zeta_n})}
 dA_{n,0}(s)\biggr|
= 0,
\end{eqnarray*} 
and
\begin{eqnarray*}
&& \rho_1<\lim_{n\rightarrow\infty}\inf_{|h|> \frac{h_n}{m_n}}\frac{1}{h}
\left |{A_{n,0}(\zeta_n+h) - A_{n,0}(\zeta_n) }\right|\\
&&~~~~~~~~~~~~~~~~~~~~\leq
\lim_{n\rightarrow\infty}\sup_{|h|> \frac{h_n}{m_n}}\frac{1}{h}
\left |{A_{n,0}(\zeta_n+h) - A_{n,0}(\zeta_n) }\right|
<\rho_2. 
\end{eqnarray*}

 Note that condition A2 holds if  under $\mathbb{Q}_n$ the survival time  $T$ has uniformly bounded baseline hazard rate function $\lambda_{n,0}$ in some neighborhood of $\zeta_n$. In this case, $A_{n,0}$ has right derivative $s_{n,0}(\zeta_n;\beta_n)\lambda_{n,0}(\zeta_n)$ at $\zeta_n^+$ and left derivative $s_{n,0}(\zeta_n;\alpha_n)\lambda_{n,0}(\zeta_n)$ at $\zeta_n^{-}$, which implies A2. 

\begin{theorem}\label{theorem2}
Under conditions $A1$ and $A2$, 
$$\big|\big(\sqrt{m_n}(\alpha_n^*-\alpha_n)', \sqrt{m_n}(\beta_n^*-\beta_n)', {m_n}(\zeta^*_n-\zeta_n)\big)' \big|
= O_{\mathbf{P}}(1).$$
\end{theorem}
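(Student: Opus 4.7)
The strategy is to apply the standard rate-of-convergence theorem for M-estimators (in the spirit of van der Vaart and Wellner, Theorem 3.2.5) with a pseudo-distance that mixes the two intrinsic scales of the problem. Write $M_n(\theta)=m_n^{-1}l_n^*(\theta)$ and $M(\theta)=\mathbb{Q}_n M_n(\theta)$, so $\theta_n^*=\mathrm{sargmax}\,M_n$ and $\theta_n=\mathrm{sargmax}\,M$. Introduce
$$d(\theta,\theta_n)^2 \;=\; |\alpha-\alpha_n|^2+|\beta-\beta_n|^2+|\zeta-\zeta_n|,$$
so that proving $d(\theta_n^*,\theta_n)=O_{\mathbf P}(m_n^{-1/2})$ gives the joint claim exactly: the $(\alpha,\beta)$-components contract at $m_n^{-1/2}$ and the $\zeta$-component at $m_n^{-1}$.

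First I would establish a curvature lower bound on the population criterion: for all $\theta$ in a (small, fixed) neighborhood of $\theta_n$, which exists eventually by the consistency in Theorem \ref{theorem1},
$$M(\theta_n)-M(\theta)\;\gtrsim\;|\alpha-\alpha_n|^2+|\beta-\beta_n|^2+|\zeta-\zeta_n|.$$
The quadratic part in $(\alpha,\beta)$ comes from strict concavity of the Cox log partial likelihood in the regression parameters, inherited from the positive definiteness of $\mathrm{Var}(Z(t))$ and transferred uniformly to the array $\mathbb{Q}_n$ via A1. The linear-in-$|\zeta-\zeta_n|$ part is where A2 does the work: profiling out $(\alpha,\beta)$, one writes $M(\alpha_n,\beta_n,\zeta_n)-M(\alpha_n,\beta_n,\zeta)$ as an integral over the interval between $\zeta_n$ and $\zeta$ in which $dA_{n,1}$ can be replaced by $\bar z_n\,dA_{n,0}$ with an $o(|\zeta-\zeta_n|)$ error thanks to the first display of A2, while the second display (combined with $\alpha_0\neq\beta_0$, hence $\alpha_n\neq\beta_n$ eventually) forces the remaining piece to be bounded below by a positive multiple of $|\zeta-\zeta_n|$.

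Second I would bound the empirical process modulus
$$\phi_n(\delta)\;:=\;\mathbf{E}^*\sup_{d(\theta,\theta_n)\leq\delta}\sqrt{m_n}\,\bigl|(M_n-M)(\theta)-(M_n-M)(\theta_n)\bigr|.$$
The integrand class is VC-type: $\{1_{s\leq\zeta}\}$ is a VC class, the exponential factors are Lipschitz in $(\alpha,\beta)$ after localization, and $\log S_{n,0}$ is Lipschitz because $S_{n,0}$ is bounded away from $0$ on $[0,\tau]$ by A1 and the moment assumptions on $Z$. A maximal inequality then yields $\phi_n(\delta)\lesssim\delta$, because the $(\alpha,\beta)$-directions contribute $\delta$ by smoothness while the indicator direction, constrained to $|\zeta-\zeta_n|\leq\delta^2$, contributes $L_2$-modulus of order $\sqrt{\delta^2}=\delta$. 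Since $\phi_n(\delta)/\delta$ is (essentially) constant and $1<2$, the rate theorem solves $r_n=\sqrt{m_n}$ and delivers $d(\theta_n^*,\theta_n)=O_{\mathbf P}(m_n^{-1/2})$, which unpacks into the three stated rates.

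\textbf{Main obstacle.} The delicate step is the curvature lower bound near the change-point, which has to hold uniformly along the triangular array $\{\mathbb{Q}_n\}$ using only A1 and A2 and \emph{without} assuming smoothness of an array-specific baseline hazard. The cleanest route is to profile over $(\alpha,\beta)$ first, verify continuity of the profile $\zeta\mapsto (\hat\alpha(\zeta),\hat\beta(\zeta))$ at $\zeta_n$ via A1, and then read the linear drift directly off A2. One must also check that replacing $\bar z_n$ by $\alpha_n$ or $\beta_n$ on the "wrong" side of $\zeta_n$ (the mechanism producing the kink) yields a strictly positive discrepancy, which is precisely where the identifiability condition $\alpha_0\neq\beta_0$ is essential.
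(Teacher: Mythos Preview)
Your route differs genuinely from the paper's. They argue directly from the basic inequality $0\le X_n(\theta_n^*)-X_n(\theta_n)$: a second-order Taylor expansion in $(\alpha,\beta)$ at $(\alpha_n,\beta_n)$ produces linear score terms $I_1,I_2$ (shown to be $O_{\mathbf P}(1)$ via Lenglart's inequality), negative-definite quadratic terms $I_3,I_4$ (via uniform convergence of $Q_n$ to $Q$), and change-point drift terms $a_n,b_n$ which are bounded away from zero using A2 on the event $\{m_n|\zeta_n^*-\zeta_n|>h_n\}$. Rearranging the resulting display yields the three rates simultaneously; no abstract rate theorem is invoked. Your mixed pseudo-distance $d^2=|\alpha-\alpha_n|^2+|\beta-\beta_n|^2+|\zeta-\zeta_n|$ together with vdV--W Theorem~3.2.5 is the natural empirical-process repackaging of the same idea, but two points need more care than you indicate.

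First, $l_n^*$ is not an i.i.d.\ sum because of the term $\log S_{n,0}$, so $\theta_n$ is \emph{not} the maximizer of $\mathbb{Q}_nM_n$ as you write---it maximizes the version with $\log s_{n,0}$ in place of $\log S_{n,0}$. You need the preliminary linearization $S_{n,0}\to s_{n,0}$ (the paper's Lemma~\ref{lemma1}) before the centered process $M_n-M$ becomes a genuine empirical process to which a maximal inequality applies.

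Second, and this is the real gap, A2 does \emph{not} deliver the curvature $M(\theta_n)-M(\theta)\gtrsim|\zeta-\zeta_n|$ for all $\zeta$ in a neighborhood of $\zeta_n$: it only bounds $|A_{n,0}(\zeta_n+h)-A_{n,0}(\zeta_n)|/|h|$ for $|h|>h_n/m_n$ with $h_n\to\infty$. When $\mathbb{Q}_n$ is discrete (e.g.\ the empirical distribution, which Theorem~\ref{theorem2} must cover for the $m$-out-of-$n$ application), $A_{n,0}$ is a step function and the population criterion is \emph{flat} in $\zeta$ on an interval around $\zeta_n$, so the curvature hypothesis of vdV--W Theorem~3.2.5 fails outright. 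The paper's direct argument sidesteps this because it only needs the lower bound on $-(a_n+b_n)$ on the set $\{m_n|\zeta_n^*-\zeta_n|>h_n\}$, which is exactly the regime where A2 bites. To salvage your approach you would have to rerun the peeling argument by hand on the outer shells, at which point you are essentially reproducing the paper's computation.
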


\subsection{Asymptotic distribution}
To compute the asymptotic distribution of $\theta^*_n$, we need the following assumption. 

\begin{itemize}
\item[A3.] For any $t\in \mathbb{R}$, $h_1<h_2$ and $0\notin (h_1,h_2)$,
\begin{eqnarray*} 
{\mathbb Q}_n\biggr(\sum_{k=1}^{m_n}
\int_{\zeta_n+h_1/m_n}^{\zeta_n+h_2/m_n} e^{\imath t (\alpha_n-\beta_n)'Z_{n,k}(s)} dN_{n,k}(s)\biggr)
~~~~~~~~~~~~\\
~~~~~~~~~~~~\rightarrow 
s_{0}\left(\zeta_0;\gamma_0+\imath	t(\alpha_0-\beta_0)\right)\lambda_0(\zeta_0)(h_2-h_1),
\end{eqnarray*}
where $\imath$ is the imaginary unit and $\gamma_0 = \alpha_01_{h_2\leq 0}+\beta_01_{0\leq h_1}.$
\end{itemize}

Condition A3 holds if  under $\mathbb{Q}_n$ the survival time  $T$ has uniformly bounded baseline hazard rate function $\lambda_{n,0}$  converging uniformly to $\lambda_0$ in some neighborhood of $\zeta_0$. This is satisfied by the smooth bootstrap methods introduced in Section \ref{sectionbs} and therefore guarantees their consistency; see Section \ref{sectionsmooth} for more details. 

\smallskip
We write $X_n(\theta) = {m_n}^{-1}( l^*_n(\theta) - l^*_n(\theta_n)).$
Note that  $\theta^*_n$ is also the maximizer of $X_n$. For $h:=(h_\alpha',h_\beta',h_\zeta)' \in\mathbb{R}^{p}\times \mathbb{R}^{p}\times \mathbb{R}$, consider the multiparameter process 
$$U^*_n(h):= m_n X_n\Big(\alpha_n+\frac{h_\alpha}{\sqrt{m_n}},\beta_n+\frac{h_\beta}{\sqrt{m_n}},\zeta_n+\frac{h_\zeta}{m_n}\Big) $$
and observe that 
\begin{equation}\label{argmaxu}
(\sqrt{m_n}(\alpha^*_n-\alpha_n)', \sqrt{m_n}(\beta^*_n-\beta_n)', {m_n}(\zeta^*_n-\zeta_n))' = \mbox{sargmax}_{h \in\mathbb{R}^{2p+1}}  U^*_n(h).
\end{equation}

We proceed to describe the limit law of the process $U^*_n$. Let 
$\Gamma^-$ and $\Gamma^+$ be two homogenous Poisson processes with intensities 
$$\gamma^- = s_0(\zeta_0; \alpha_0)\lambda_0(\zeta_0) \mbox{ and } \gamma^+ = s_0(\zeta_0; \beta_0)\lambda_0(\zeta_0),$$ respectively. 
Define two  sequences of i.i.d. random variables $\mathbf{v}^{-} = (v_i^-)_{i=1}^\infty$ and $\mathbf{v}^{+} = (v_i^+)_{i=1}^\infty$ such that $v_i^-$ and $v_i^+$ follow distributions:
 $$\mathbf{P}(v_i^- \leq z) = s_{0}(\zeta_0;\alpha_0)^{-1} 
 \mathbf{E}\left[1_{Z(\zeta_0)\leq z} Y(\zeta_0)e^{\alpha_0'Z(\zeta_0)}\right]$$
 and 
 $$\mathbf{P}(v_i^+ \leq z) = s_{0}(\zeta_0;\beta_0)^{-1} 
 \mathbf{E}\left[1_{Z(\zeta_0)\leq z} Y(\zeta_0)e^{\beta_0'Z(\zeta_0)}\right].$$
Additionally, take two Gaussian $\mathbb{R}^p$-valued random vectors 
$$U_1\sim N\biggr(0,\int_0^{\zeta_0} Q(s;\alpha_0)s_0(s;\alpha_0)\lambda_0(s)ds\biggr), 
~U_6\sim N\biggr(0,\int_{\zeta_0}^\tau Q(s;\beta_0)s_0(s;\beta_0)\lambda_0(s)ds\biggr),$$
where for $s\in\mathbb{R}$  and $\gamma\in \mathbb{R}^p$, $Q(s;\gamma)$ is defined as
 \begin{eqnarray}\label{QQQ}
 Q(s;\gamma) = \frac{s_{2}(s; \gamma)} {s_{0}(s; \gamma)} 
- {\bar z(s; \gamma)}^{\otimes 2}.
\end{eqnarray}
 
Suppose that $\Gamma^-$,$\Gamma^+$, $\mathbf{v}^{-}, \mathbf{v}^{+}$, $U_1$ and $U_6$ are all independent.
For $h_\zeta\in \mathbb{R}$, define the vector-valued process $(U_2,U_3,U_4, U_5)$ as
\begin{eqnarray*}
U_2(h_\zeta) &:=& 1_{h_\zeta<0} \Big(\Gamma^-(-h_\zeta)
\log{r(\zeta_0;\alpha_0,\beta_0)} 
+\sum_{1\leq i\leq \Gamma^-(-h_\zeta)}(\beta_0-\alpha_0) v_i^- \Big),\\
U_3(h_\zeta) &:=& 1_{h_\zeta<0} \Gamma^-(-h_\zeta),\\
U_4(h_\zeta) &:=& 1_{h_\zeta>0} \Big(\Gamma^+(h_\zeta)
\log{r(\zeta_0;\beta_0,\alpha_0)} +\sum_{1\leq i\leq \Gamma^+(h_\zeta)}(\alpha_0-\beta_0) v_i^+ \Big),\\
U_5(h_\zeta) &:=& 1_{h_\zeta>0} \Gamma^+(h_\zeta).
\end{eqnarray*}
Furthermore, define processes $J(h_\zeta) := U_{3}(h_\zeta) + U_{5}(h_\zeta)$ and
\begin{eqnarray*}
 U(h_\alpha, h_\beta,h_\zeta) &:=& h_\alpha' U_{1} - \frac{1}{2}h_\alpha'
 \biggr(\int_0^{\zeta_0} Q(s; \alpha_0)s_0(s;\alpha_0)\lambda_0(s)ds\biggr)h_\alpha + U_{2}(h_\zeta)\\
&&+ h_\beta' U_{6} - \frac{1}{2}h_\beta'
\biggr(\int_{\zeta_0}^\tau Q(s; \beta_0)s_0(s;\beta_0)\lambda_0(s)ds\biggr)h_\beta + U_{4}(h_\zeta).
\end{eqnarray*}
Observe that $J$ is the sequence of jumps of $U$. 
Our goal is to show that the asymptotic distribution of the MPLE is exactly that of the smallest argmax of $U$. Before doing this, we state the following result about the smallest argmax of $U$. 

\begin{lemma}\label{lemma4}
Let  $\phi = (\phi_\alpha',\phi_\beta',\phi_\zeta)' = sargmax_{h\in\mathbb{R}^{2p+1}}U(h)$ with $\phi_\alpha,\phi_\beta$ and $\phi_\zeta$ corresponding to the first $p$, the second $p$ and the last component of $\phi$, respectively. Then $\phi$ is well-defined. Moreover, $\phi_\alpha,\phi_\beta$ and $\phi_\zeta$ are mutually independent and 
\begin{eqnarray}\label{dis1}
\phi_\alpha &\sim& N\biggr(0,\Big(\int_0^{\zeta_0} Q(s;\alpha_0)s_0(s;\alpha_0)\lambda_0(s)ds\Big)^{-1}\biggr), \\\label{dis2}
\phi_\beta &\sim& N\biggr(0,\Big(\int_{\zeta_0}^\tau Q(s;\beta_0)s_0(s;\beta_0)\lambda_0(s)ds\Big)^{-1}\biggr).
\end{eqnarray}
\end{lemma}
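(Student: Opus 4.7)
The plan is to exploit the fact that $U$ separates into three independent pieces depending on disjoint coordinate blocks, reducing the problem to three independent argmax problems. Writing $V_\alpha := \int_0^{\zeta_0} Q(s;\alpha_0)s_0(s;\alpha_0)\lambda_0(s)\,ds$ and $V_\beta := \int_{\zeta_0}^\tau Q(s;\beta_0)s_0(s;\beta_0)\lambda_0(s)\,ds$, we have
\begin{equation*}
U(h_\alpha,h_\beta,h_\zeta) = \bigl(h_\alpha' U_1 - \tfrac12 h_\alpha' V_\alpha h_\alpha\bigr) + \bigl(h_\beta' U_6 - \tfrac12 h_\beta' V_\beta h_\beta\bigr) + \bigl(U_2(h_\zeta)+U_4(h_\zeta)\bigr),
\end{equation*}
and the three summands are mutually independent by the independence assumptions imposed on $U_1,U_6,\Gamma^\pm,\mathbf{v}^\pm$. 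Hence the sargmax of $U$ coincides with the triple $(\phi_\alpha,\phi_\beta,\phi_\zeta)$ of argmaxes of the three summands; once each is shown to be well-defined, mutual independence and the product-measure form of the joint law are automatic.

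For $\phi_\alpha$ and $\phi_\beta$, the quadratic forms are strictly concave provided $V_\alpha$ and $V_\beta$ are positive definite; this follows from $Q(s;\gamma) = \mathrm{Var}(Z(s)\mid Y(s)=1)$ under a tilted measure (with density proportional to $e^{\gamma'Z(s)}$) together with the standing assumption that $\mathrm{Var}(Z(t))$ is strictly positive definite on $[0,\tau]$ and that $\lambda_0>0$. The unique maximizers are then $\phi_\alpha = V_\alpha^{-1} U_1$ and $\phi_\beta = V_\beta^{-1} U_6$, which are Gaussian with mean zero and covariances $V_\alpha^{-1} V_\alpha V_\alpha^{-1} = V_\alpha^{-1}$ and $V_\beta^{-1}$ respectively, matching \eqref{dis1}--\eqref{dis2}.

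The main obstacle is the third block: showing that $U_\zeta(h_\zeta) := U_2(h_\zeta)+U_4(h_\zeta)$ attains its supremum and has a unique smallest argmax. The process is piecewise constant, right-continuous, and zero at $h_\zeta=0$, with jumps driven by the independent Poisson processes $\Gamma^\pm$. The key is a drift computation: on $h_\zeta<0$ the expected increment per unit time is $\gamma^-\bigl(\log r(\zeta_0;\alpha_0,\beta_0) + \mathbf{E}[(\beta_0-\alpha_0)'v_1^-]\bigr)$, which by Jensen's inequality applied to $e^{(\beta_0-\alpha_0)'Z(\zeta_0)}$ under the tilted law defining $v_i^-$ is strictly negative whenever $\alpha_0\neq\beta_0$; symmetrically the drift on $h_\zeta>0$ is strictly negative. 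Thus $U_\zeta(h_\zeta)\to -\infty$ almost surely as $|h_\zeta|\to\infty$ by the strong law for compound Poisson processes, so the supremum is attained on a compact set. The compound Poisson structure (continuous jump distribution for the marks combined with the log-ratio drift) ensures that maximizing values of $U_\zeta$ occur only at isolated jump locations with probability one, so the set of maximizers is discrete, and the smallest argmax is well-defined. Finally, $\phi_\zeta$ is a functional of $(\Gamma^\pm,\mathbf{v}^\pm)$, hence independent of $(U_1,U_6)$, completing the claim of mutual independence.
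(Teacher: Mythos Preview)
Your proof is correct and follows essentially the same route as the paper: decompose $U$ into three independent blocks, maximize the two strictly concave quadratics to obtain $\phi_\alpha=V_\alpha^{-1}U_1$ and $\phi_\beta=V_\beta^{-1}U_6$, and for $\phi_\zeta$ show $U_2+U_4\to-\infty$ via a negative-drift computation (the paper writes the drift as $(\alpha_0-\beta_0)'\bar z(\zeta_0;\beta_0)-\log r(\zeta_0;\alpha_0,\beta_0)<0$, which is exactly your Jensen step under the tilted law). One minor imprecision worth cleaning up: the marks $v_i^\pm$ need not have a continuous law (e.g.\ $Z$ may be binary, as in the paper's simulation), and since $U_\zeta$ is piecewise constant its set of maximizers is a finite union of intervals rather than a discrete set; this does not create a gap, because once the supremum is attained on a compact set the smallest argmax (in the paper's sargmax convention) is automatically well-defined for any c\`adl\`ag step function, so your extra ``isolated maximizers'' argument is unnecessary.
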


We are now in a position to give our main result.
To state the result, we need to introduce some further notation. For any given compact set $K\subset \mathbb R^{d}, d\in \mathbb{N}$, we define the space ${\cal D}_K$ as the Skorohod space of functions $f: K\rightarrow \mathbb{R}$ having ``quadrant limits'' and continuous from above; see \cite{Neuhaus} and  \cite{SeijoSen2011} for more information about this space. Further, we take ${\cal D}_K$ as a metric space endowed with the Skorohod metric, which ensures the existence of conditional probability distributions for its random elements; see \cite{Neuhaus} and  Theorem 10.2.2 of \cite{MR1932358}.  
\begin{theorem}\label{theorem3}
Under conditions $A1$-$A3$, 
for a compact rectangle $\Theta \subset \mathbb{R}^{2p+1}$,
$U^*_n$ converges weakly in the Skorohod topology to $U$ in ${\cal D}_\Theta$. Moreover,
$$\left(\begin{array}{c} \sqrt{m_n}(\alpha_n^*-\alpha_n) \\ \sqrt{m_n}(\beta_n^*-\beta_n)\\
m_n(\zeta_n^*-\zeta_n)\end{array}\right) \leadsto sargmax_{h\in\mathbb{R}^{2p+1}} {U(h)},
$$
where $\leadsto$ denotes weak convergence. 
\end{theorem}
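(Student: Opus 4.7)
The plan is to first expand $U_n^*(h)$ in its defining form, split the integrand according to whether $s\leq \zeta_n$, $s\in(\zeta_n\wedge\zeta_n+h_\zeta/m_n,\,\zeta_n\vee\zeta_n+h_\zeta/m_n]$, or $s$ is on the other side of $\zeta_n$, and then isolate a ``regular'' piece $V_n(h_\alpha,h_\beta)$ that depends only on the $\sqrt{m_n}$-perturbations of $\alpha$ and $\beta$ at $\zeta=\zeta_n$, and a ``jump'' piece $W_n(h_\zeta)$ that collects all contributions coming from the thin interval between $\zeta_n$ and $\zeta_n+h_\zeta/m_n$. A routine cross-term estimate, using that this thin interval has length $O(1/m_n)$ while the perturbations in $\alpha,\beta$ are $O(1/\sqrt{m_n})$, shows the interaction is $o_{\mathbf P}(1)$ uniformly on compact sets of $h$.

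Next I would handle $V_n$ in the classical way. Writing $dN_{n,i}$ as compensator plus martingale increment under $\mathbb{Q}_n$ (or, equivalently, using the Taylor expansion of $\log S_{n,0}$ in $\gamma$ together with A1), $V_n(h_\alpha,h_\beta)$ becomes
\begin{equation*}
h_\alpha^\prime M_n^\alpha+h_\beta^\prime M_n^\beta-\tfrac12 h_\alpha^\prime I_n^\alpha h_\alpha-\tfrac12 h_\beta^\prime I_n^\beta h_\beta+o_{\mathbf P}(1),
\end{equation*}
with $I_n^\alpha\to\int_0^{\zeta_0}Q(s;\alpha_0)s_0(s;\alpha_0)\lambda_0(s)ds$ by A1 and the analogous limit for $I_n^\beta$; the score vectors $M_n^\alpha,M_n^\beta$ are martingale-type sums over disjoint intervals $[0,\zeta_n)$ and $(\zeta_n,\tau]$, so Rebolledo's CLT (combined with A1 for the predictable variation) delivers joint convergence to independent Gaussians $U_1,U_6$.

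The main obstacle is $W_n(h_\zeta)$. For $h_\zeta<0$, on the interval $(\zeta_n+h_\zeta/m_n,\zeta_n]$ each observed jump of $N_{n,i}$ contributes the discrete increment $(\alpha_n-\beta_n)^\prime Z_{n,i}(s) + \log[S_{n,0}(s;\beta_n)/S_{n,0}(s;\alpha_n)]$, and similarly for $h_\zeta>0$ with the roles reversed. I would represent $W_n$ via the marked point process $\sum_{i,k}\epsilon_{(m_n(T_{n,i,k}-\zeta_n),\,Z_{n,i}(T_{n,i,k}))}$ of rescaled jump times and attached covariates. The limit of this marked point process on $[h_1,h_2]\times\mathbb R^p$ is an inhomogeneous Poisson point process with two pieces (one on either side of $0$) whose Laplace/characteristic functionals are precisely what A3 furnishes: taking $t\mapsto \mathbb Q_n(\exp(\imath t(\alpha_n-\beta_n)^\prime Z_{n,i}(s)))$ inside the compensator and invoking A3 shows that the rate of the limiting process is $s_0(\zeta_0;\alpha_0)\lambda_0(\zeta_0)$ (resp.\ $s_0(\zeta_0;\beta_0)\lambda_0(\zeta_0)$) with marks distributed as $\mathbf v^-$ (resp.\ $\mathbf v^+$), giving exactly $\Gamma^-,\mathbf v^-,\Gamma^+,\mathbf v^+$. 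Uniform convergence of $S_{n,0}$ to $s_0$ at the scale $1/m_n$ near $\zeta_0$ (from A1) replaces the $\log$ terms by their limits $\log r(\zeta_0;\alpha_0,\beta_0)$ and $\log r(\zeta_0;\beta_0,\alpha_0)$, producing exactly $U_2(h_\zeta)+U_4(h_\zeta)$. Independence of $V_n$ and $W_n$ in the limit is automatic since $W_n$ uses only the contributions in a shrinking neighbourhood whose total mass under the compensator of $V_n$ tends to $0$.

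Having established finite-dimensional convergence of $U_n^*$ to $U$ with the required independence structure, I would upgrade it to weak convergence in $\mathcal D_\Theta$ by verifying tightness in the Skorohod topology as in \cite{SeijoSen2011}: $V_n$ is continuous with equicontinuous modulus controlled by the convergent quadratic form, while $W_n$ is a monotone step functional in $h_\zeta$ with controlled number of jumps on any compact, matching the cadlag/quadrant-limit structure of the limit. Finally, to pass from process convergence to argmax convergence I would invoke the rate bound of Theorem \ref{theorem2}, which confines $(\sqrt{m_n}(\alpha_n^*-\alpha_n),\sqrt{m_n}(\beta_n^*-\beta_n),m_n(\zeta_n^*-\zeta_n))$ to a compact $K\subset\mathbb R^{2p+1}$ with arbitrarily high probability, and then apply an argmax continuous-mapping theorem on $\mathcal D_K$ (as in \cite{SeijoSen2011}); this requires showing that $\mathrm{sargmax}\,U$ is a.s.\ well-defined and that $U$ has a unique (smallest) maximizer, both of which follow from Lemma \ref{lemma4} and the fact that $U_2+U_4$ drifts to $-\infty$ almost surely. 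Combining with \eqref{argmaxu} yields the stated weak convergence of the rescaled MPLE.
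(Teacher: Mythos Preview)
Your decomposition of $U_n^*$ into a regular piece $V_n(h_\alpha,h_\beta)$ and a jump piece $W_n(h_\zeta)$, with the cross term negligible because the thin interval has length $O(1/m_n)$ while the $\alpha,\beta$-perturbations are $O(1/\sqrt{m_n})$, is exactly the content of the paper's Lemma~\ref{lemma2}, and the remaining finite-dimensional-plus-tightness argument is its Lemma~\ref{lemma3}; so the overall architecture is the same. Where you differ is in the tools. For $V_n$ the paper does not use Rebolledo but computes characteristic functions directly, Taylor-expanding the exponential and exploiting independence across rows; this distinction matters because in the general triangular-array setting of Section~\ref{SectionGeneral} the measure $\mathbb{Q}_n$ is not assumed to carry a hazard or martingale structure (for the ED-based bootstraps it is discrete), so a compensator decomposition under $\mathbb{Q}_n$ is not available and your parenthetical ``Taylor expansion of $\log S_{n,0}$'' is really the operative route. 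For $W_n$ your marked point-process framing is a legitimate and arguably more transparent alternative to the paper's direct characteristic-function calculation; condition A3 is precisely the Laplace-functional convergence it requires. For tightness the paper uses Billingsley's moment criterion (Theorem 15.6) rather than an informal jump-count bound; note also that $W_n$ is a step function in $h_\zeta$ but not monotone, since the increments $(\alpha_n-\beta_n)'Z_{n,i}-\log r_n$ can take either sign, so that word should be dropped.

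One point the paper makes explicit that you should too: the argmax continuous-mapping theorem invoked (Theorem 3.1 of \cite{SeijoSen2011b}) requires joint weak convergence of the pair $(U_n^*,J_n^*)$, where $J_n^*=U_{n,3}+U_{n,5}$ is the jump-count process, not merely of $U_n^*$ alone; the paper records this joint convergence in Lemma~\ref{lemma3}. Your point-process argument delivers it automatically (the counting process converges jointly with the marked process), but it should be stated.
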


We  consider the MPLE $\hat\theta_n$ of $l_n(\theta)$ as defined in \eqref{MPLEcox}.
In this case, we can take $m_n=n$, $\mathbb{Q}_{n,i} = \mathbb{P}$ and $\theta_n = \theta_0$. Then conditions A1-A3 automatically hold and we immediately obtain the following corollary from Theorem \ref{theorem3}; see also \cite{Pons2002}. 
\begin{corollary}
Under the model setup in Section \ref{SectionModel}, for the MPLE $\hat\theta_n=(\hat\alpha_n',\hat\beta_n',\hat\zeta_n)'$, 
$$\left(\begin{array}{c} \sqrt{n}(\hat\alpha_n-\alpha_0) \\ \sqrt{n}(\hat\beta_n-\beta_0)\\
n(\hat\zeta_n-\zeta_0)\end{array}\right) \leadsto sargmax_{h\in\mathbb{R}^{2p+1}} U(h).$$
\end{corollary}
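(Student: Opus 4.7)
The plan is to derive the corollary as a direct application of Theorem \ref{theorem3} by specializing the triangular-array setup to the i.i.d.\ case. Concretely, I would take $m_n = n$ and $\mathbb{Q}_n = \mathbb{P}^{\otimes n}$, so that $\mathbb{Q}_{n,i} = \mathbb{P}$ for every $i\leq n$. With this choice, the log partial likelihood $l_n^*$ coincides with $l_n$ defined in \eqref{PL}, so $\theta_n^* = \hat\theta_n$. Moreover, since each $(\tilde T_{n,i},\delta_{n,i},Z_{n,i})$ has distribution $\mathbb{P}$, the functions $s_{n,k}(t;\gamma)$ and $A_{n,k}(t)$ agree identically with $s_k(t;\gamma)$ and $A_k(t)$ as defined in Section \ref{SectionGeneral}. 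Consequently, the population objective defining $\theta_n$ is the standard expected log partial likelihood, whose sargmax is $\theta_0$ by identifiability of the change-point Cox model (so in particular $\theta_n = \theta_0$ for all $n$, consistent with Theorem \ref{theorem1}).

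Once this identification is made, verifying conditions A1--A3 is straightforward under the assumptions of Section \ref{SectionModel}. For A1, the suprema in question vanish \emph{exactly} (not merely in the limit) because $s_{n,k}\equiv s_k$ and $A_{n,k}\equiv A_k$. For A2, since $A_0(t) = \int_0^t s_0(s;\alpha_0 1_{s\leq\zeta_0}+\beta_0 1_{s>\zeta_0})\lambda_0(s)\,ds$, the map $h\mapsto (A_0(\zeta_0+h)-A_0(\zeta_0))/h$ has well-defined one-sided limits $s_0(\zeta_0;\alpha_0)\lambda_0(\zeta_0)$ and $s_0(\zeta_0;\beta_0)\lambda_0(\zeta_0)$ as $h\to 0^{\mp}$; both are strictly positive thanks to the assumptions $\inf_{t\in[0,\tau]}\lambda_0(t)>0$ and positive-definiteness of $\mathrm{Var}(Z(t))$, and both are finite thanks to boundedness of $\lambda_0$ and of $Z$. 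The continuity of $\lambda_0$ and of $s_0(\cdot;\gamma)$ at $\zeta_0$ (the latter following from continuity at $\zeta_0$ of $\lambda_0$ and $G(\cdot|z)$) furnishes uniform two-sided bounds on the difference quotient for $|h|$ in a vanishing neighborhood of $0$; boundedness on the complement of such a neighborhood is immediate, so the sup/inf in A2 are between two positive constants $\rho_1,\rho_2$.

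For A3, I would use the Doob--Meyer compensator of $N$ under $\mathbb{P}$: since $N$ has intensity $Y(s)e^{(\alpha_0 1_{s\leq\zeta_0}+\beta_0 1_{s>\zeta_0})'Z(s)}\lambda_0(s)$, Fubini gives
\begin{align*}
\mathbb{Q}_n\biggl(\sum_{k=1}^{m_n}\int_{\zeta_0+h_1/m_n}^{\zeta_0+h_2/m_n}\!\!e^{\imath t(\alpha_0-\beta_0)'Z_{n,k}(s)}dN_{n,k}(s)\biggr)
= m_n\int_{\zeta_0+h_1/m_n}^{\zeta_0+h_2/m_n}\!\!s_0\bigl(s;\gamma_0(s)+\imath t(\alpha_0-\beta_0)\bigr)\lambda_0(s)\,ds,
\end{align*}
where $\gamma_0(s) = \alpha_0 1_{s\leq\zeta_0}+\beta_0 1_{s>\zeta_0}$ and $s_0(\cdot;\gamma)$ extends continuously to complex arguments because $Z$ is bounded. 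The hypothesis $0\notin(h_1,h_2)$ forces $\gamma_0(s)$ to be constant on the shrinking interval of integration, equal to $\alpha_0 1_{h_2\leq 0}+\beta_0 1_{0\leq h_1}$, and continuity of $s_0(\cdot;\gamma)$ and $\lambda_0$ at $\zeta_0$ then yields the target limit $s_0(\zeta_0;\gamma_0+\imath t(\alpha_0-\beta_0))\lambda_0(\zeta_0)(h_2-h_1)$.

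With A1--A3 verified and $\theta_n=\theta_0$, Theorem \ref{theorem3} applied with $m_n=n$ yields the stated weak convergence. I do not anticipate a genuine obstacle; the only subtlety worth taking care of is confirming that the population criterion in the definition of $\theta_n$ is uniquely maximized at $\theta_0$ (so $\theta_n=\theta_0$ rather than a merely convergent sequence), which follows from the strict concavity of the profiled likelihood in $(\alpha,\beta)$ together with identifiability of $\zeta_0$ under the assumption $\alpha_0\neq\beta_0$.
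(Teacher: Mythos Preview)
Your proposal is correct and follows exactly the paper's own approach: take $m_n=n$, $\mathbb{Q}_{n,i}=\mathbb{P}$, $\theta_n=\theta_0$, observe that conditions A1--A3 hold automatically, and invoke Theorem \ref{theorem3}. The paper dispenses with this in a single sentence, whereas you have spelled out the verification of A1--A3 in more detail than strictly needed, but the logic is identical.
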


\section{Large sample properties of the bootstrap procedures}\label{SectionCheck}

In this section we use the results from the previous section to prove the (in)-consistency of different bootstrap methods introduced in Section \ref{sectionbs}.
In Section \ref{inconsistbs}, we argue that the classical bootstrap method (Method \ref{methodclassic}) and the conditional methods (Methods \ref{methodcond1} and \ref{methodcond2}) are inconsistent. 
 In  Section \ref{consistbs}, we prove the consistency of the smooth bootstrap (Methods \ref{methodsmooth1} and \ref{methodsmooth2}) and the $m$-out-of-$n$ bootstrap (Method \ref{methodmofn}). 
 
 Recall the notation and definitions in the beginning of Section \ref{SectionModel}. In particular, note that we have i.i.d. random vectors $(\tilde T_i, \delta_i, Z_i)_{i=1}^\infty$ from \eqref{cox}. 
Let  ${\cal X}$ be the $\sigma$-algebra generated by the sequence $(\tilde T_i, \delta_i, Z_i)_{i=1}^\infty.$
For a metric space $(\mathtt{X}, d)$, consider $\mathtt{X}$-valued random elements $(V_n)_{n=1}^\infty$ and $V$ defined on the probability space $(\Omega, {\cal A}, \mathbf{P})$. 
We say that $V_n$ converges conditionally in probability to $V$, in probability, if for any given 
$\epsilon>0$
$$ \mathbf{P}(d(V_n,V)>\epsilon ~|~ \mathcal{X})\xrightarrow[]{\mathbf{P}} 0,$$
and we write $V_n\xrightarrow[\mathbf{P}]{\mathbf{P}_{\mathcal{X}}} V.$

\subsection{Inconsistent bootstrap methods}\label{inconsistbs}
\subsubsection{Classical bootstrap} \label{sectionbs1}
Consider the classical bootstrap Method \ref{methodclassic} introduced in Section \ref{sectionbs}. We  set  $m_n=n$ and $\mathbb{Q}_{n,j}, j=1,\cdots,n,$ to be the ED of the data $\{(\tilde T_{i}, \delta_{i}, Z_{i}): i=1,\cdots,n\}$. This implies that for $k=0,1,2,$
\begin{eqnarray}\label{Scls}
s_{n,k}(t;\gamma)&=&
 \mathbb{Q}_n \biggr(\frac{1}{m_n}\sum_{i=1}^{m_n} Y_{n,i}(t)Z_{n,i}^{\otimes k}(t)\exp(\gamma' Z_{n,i}(t))\biggr)
 \notag\\
&=&
\frac{1}{n}\sum_{i=1}^{n}Y_{i}(t)Z_{i}^{\otimes k}(t)\exp(\gamma' Z_{i}(t)),\\
\label{Acls}
A_{n,k}(t)
 &=&\mathbb{Q}_n\biggr(\frac{1}{m_n}\sum_{i=1}^{m_n} \int_0^t Z_{n,i}^{\otimes k}(s) dN_{n,i}(s)\biggr)\notag\\
&=&\frac{1}{n}\sum_{i=1}^{n} \int_0^t Z_{i}^{\otimes k}(s) dN_{i}(s).
\end{eqnarray}
Therefore, $\theta_n = \hat\theta_n$ and condition A1 holds. Apply Theorem \ref{theorem1} and we have that the bootstrap estimator $\theta^*_n$ converges conditionally in probability to the true value $\theta_0$, in probability. 
\begin{proposition}
For  Method \ref{methodclassic}, 
$\theta^*_n \xrightarrow[\mathbf{P}]{\mathbf{P}_{\cal X}}\theta_0.$ 
\end{proposition}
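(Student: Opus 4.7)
The plan is to verify that the classical bootstrap fits the triangular array framework of Section \ref{SectionGeneral} with $m_n=n$ and $\mathbb{Q}_n$ equal to the empirical distribution of the original data, so that the assertion follows from Theorem \ref{theorem1} once condition A1 is checked. Expressions \eqref{Scls} and \eqref{Acls} already display $s_{n,k}(t;\gamma)$ and $A_{n,k}(t)$ as empirical averages over the original sample, so A1 is a uniform law of large numbers statement.

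First, I would argue that A1 holds with probability one. Because $Z$ is c\`agl\`ad and of bounded total variation on $[0,\tau]$ it is uniformly bounded, and the compactness of $\Theta_\alpha\cup\Theta_\beta$ then makes the function classes
\[
\bigl\{(\tilde T,\delta,Z)\mapsto Y(t)Z^{\otimes k}(t)e^{\gamma'Z(t)}:t\in[0,\tau],\gamma\in\Theta_\alpha\cup\Theta_\beta\bigr\},\quad k=0,1,2,
\]
uniformly bounded and Glivenko--Cantelli (a standard bracketing/VC argument suffices, exploiting Lipschitz dependence on $\gamma$ and the right-continuity with respect to $t$). The analogous argument applied to $\int_0^{\cdot}Z^{\otimes k}(s)\,dN(s)$, which is bounded and monotone-like in $t$, gives uniform convergence of $A_{n,k}$ to $A_k$. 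Hence both uniform convergences in A1 hold almost surely.

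Second, on the almost sure event $E$ where A1 holds, $\mathbb{Q}_n$ is deterministic once we condition on $\mathcal{X}$, so Theorem \ref{theorem1}, applied on the conditional probability space given $\mathcal{X}$, yields $\theta_n=\hat\theta_n\to\theta_0$ and $\mathbf{P}(|\theta_n^{*}-\theta_0|>\epsilon\mid\mathcal{X})\to 0$ for every $\epsilon>0$. Since $\mathbf{P}(E)=1$, this gives exactly $\theta_n^{*}\xrightarrow[\mathbf{P}]{\mathbf{P}_{\mathcal X}}\theta_0$. An equivalent, slightly slicker finish is to quote only the unconditional conclusion $\theta_n^{*}\xrightarrow{\mathbf{P}}\theta_0$ of Theorem \ref{theorem1} and then apply Markov's inequality to the $[0,1]$-valued random variables $p_n(\epsilon):=\mathbf{P}(|\theta_n^{*}-\theta_0|>\epsilon\mid\mathcal X)$: since $\mathbf{E}[p_n(\epsilon)]=\mathbf{P}(|\theta_n^{*}-\theta_0|>\epsilon)\to 0$, we must have $p_n(\epsilon)\xrightarrow{\mathbf{P}} 0$.

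The only step requiring care is the first one: stating Theorem \ref{theorem1} with ``A1 holds a.s.'' really does give conditional consistency. This is essentially a measurability bookkeeping issue, handled by working on the event $E$ where A1 holds deterministically and noting that, conditional on $\mathcal{X}$ and inside $E$, the hypothesis of Theorem \ref{theorem1} is satisfied with a non-random $\mathbb{Q}_n$. Once this is acknowledged the proposition is immediate; the substantive content of the paper, namely that this method is \emph{not} consistent for $n(\hat\zeta_n^{*}-\hat\zeta_n)$, appears only at the next (convergence-in-distribution) level and is not in conflict with the consistency claimed here.
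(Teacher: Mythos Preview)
Your proposal is correct and follows essentially the same route as the paper: identify $\mathbb{Q}_n$ with the empirical distribution so that \eqref{Scls}--\eqref{Acls} hold, verify A1 by a uniform law of large numbers, and invoke Theorem~\ref{theorem1}. The paper is terser---it simply asserts ``condition A1 holds'' and cites the ULLN---whereas you spell out the Glivenko--Cantelli argument and are explicit about passing from unconditional to conditional convergence (your Markov-inequality finish is a clean way to do this that the paper leaves implicit), but there is no substantive difference in strategy.
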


As for the weak convergence, 
we  show in Lemma \ref{lemmancc} that condition A3 does not hold. 
Hence, Theorem \ref{theorem3} is not applicable in this case.  

\begin{lemma}\label{lemmancc}
 For Method \ref{methodclassic},
 there is  $h_0>0$ such that for any $h>h_0$, the sequences 
\begin{equation}\label{1e}
\biggr\{\sum_{i=1}^n \int_{\hat\zeta_n}^{\hat\zeta_n +\frac{h}{n}} dN_{i}\left(s\right)\biggr\}_{n=1}^\infty ~~and~~
\biggr\{\sum_{i=1}^n \int_{\hat\zeta_n- \frac{h}{n}}^{\hat\zeta_n} dN_{i}\left(s\right)\biggr\}_{n=1}^\infty  
\end{equation} 
 do not converge in probability. Furthermore, 
\begin{equation}\label{2e}
\biggr\{ \sum_{i=1}^{n} 
\int_{\hat\zeta_n}^{\hat\zeta_n+h/n} 
\phi_i(s) dN_{i}(s)
\biggr\}_{n=1}^\infty ~~and~~
\biggr\{\sum_{i=1}^n \int_{\hat\zeta_n- \frac{h}{n}}^{\hat\zeta_n} \phi_i(s) dN_{i}\left(s\right)\biggr\}_{n=1}^\infty,
\end{equation}
where $\phi_i(s):= e^{\imath t\left( (\hat\alpha_n-\hat\beta_n)'Z_{i}(s) - 
\log{r_{n}(s;\hat\alpha_n,\hat\beta_n)} \right)} -1$,    do not converge in probability.
\end{lemma}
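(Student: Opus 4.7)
The plan is to rewrite each sequence as a functional of the local point process of scaled failure times near $\zeta_0$, identify its weak limit as a non-degenerate random variable, and conclude that convergence in probability to any deterministic constant (which is what would have verified condition $A3$) is impossible.

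Set $\Pi_n := \sum_{i=1}^n \mathbf{1}_{\delta_i = 1}\,\delta_{n(T_i - \zeta_0)}$ and $V_n := n(\hat\zeta_n - \zeta_0)$. After the change of variable $u = n(s - \zeta_0)$, the two sequences in \eqref{1e} become $\Pi_n([V_n, V_n + h))$ and $\Pi_n([V_n - h, V_n))$, while the two sequences in \eqref{2e} become $\sum_{x \in \Pi_n \cap [V_n, V_n + h)} \phi_{i(x)}(\zeta_0 + x/n)$ and its mirror image, where $i(x)$ denotes the index of the observation realizing the scaled failure time $x$. A Poisson limit then applies: using the continuity of $\lambda_0$ and $G(\cdot|z)$ at $\zeta_0$ together with the definition of $s_0$, the density of a failure observation near $\zeta_0$ has left and right limits $\gamma^-/n$ and $\gamma^+/n$ in rescaled time, so $\Pi_n$ restricted to any compact set converges weakly to a Poisson point process $\Pi$ on $\mathbb{R}$ with intensity $\gamma^-$ on $(-\infty, 0)$ and $\gamma^+$ on $(0, \infty)$. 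For \eqref{2e}, I would extend this to a marked Poisson process whose points on $(0,\infty)$, respectively $(-\infty,0)$, carry i.i.d.\ marks distributed as $v^+$, respectively $v^-$.

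Second, I couple $V_n$ with $\Pi_n$ on a common probability space by unwinding the construction of $U$ in the proof of Theorem \ref{theorem3}: the jump components $U_2, U_3, U_4, U_5$ are built precisely from these Poisson limits, so one obtains $V_n \leadsto \phi_\zeta$ jointly with $\Pi_n \leadsto \Pi$, under the identification $\Gamma^+(\cdot) = \Pi((0, \cdot])$ and $\Gamma^-(\cdot) = \Pi([-\cdot, 0))$. The window functional $(V, \Pi) \mapsto \Pi([V, V + h))$ is a.s.\ continuous at the joint limit (since $\Pi$ is a.s.\ simple and a.s.\ has no atom at the random endpoint $\phi_\zeta + h$), so continuous mapping yields $\Pi_n([V_n, V_n + h)) \leadsto \Pi([\phi_\zeta, \phi_\zeta + h))$, with the obvious marked analogue for \eqref{2e}.

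Finally, the limit is non-degenerate once $h$ is large enough. Since $\hat\zeta_n$ is an observed failure time, $\phi_\zeta$ is almost surely a point of $\Pi$, and conditional on $(\phi_\zeta, \mathrm{sign}(\phi_\zeta))$ the strong Markov property makes $\Pi((\phi_\zeta, \phi_\zeta + h))$ Poisson with parameter at least $(\gamma^- \wedge \gamma^+) h$. Choosing $h_0 = 1/(\gamma^- \wedge \gamma^+)$ forces this parameter to exceed $1$ for every $h > h_0$, so the integer-valued weak limit takes at least two values with positive probability; an integer-valued sequence converging weakly to a non-constant limit cannot converge in probability to any constant, and the sequences in \eqref{1e} therefore fail to converge in probability. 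For \eqref{2e}, the complex-valued limit is a sum over the marked Poisson points of a non-constant complex function of the marks, which has a continuous distribution and in particular is not a.s.\ constant, giving the same conclusion. The main technical obstacle is the joint coupling of $(V_n, \Pi_n)$ (and the associated covariate marks for \eqref{2e}) on a single probability space, since the Corollary yields only the marginal convergence of $V_n$; making this coupling explicit requires a careful reading of the proof of Theorem \ref{theorem3}, where $\phi_\zeta$ is realized as an explicit functional of the same Poisson point process that is the local limit of $\Pi_n$.
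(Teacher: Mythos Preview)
Your argument has a genuine logical gap at its final step. You establish (modulo the joint coupling you flag as an obstacle) that $\hat E_n := \sum_i \int_{\hat\zeta_n}^{\hat\zeta_n + h/n} dN_i$ converges \emph{weakly} to the non-degenerate random variable $\Pi([\phi_\zeta,\phi_\zeta+h))$, and then write ``an integer-valued sequence converging weakly to a non-constant limit cannot converge in probability to any constant, and the sequences in \eqref{1e} therefore fail to converge in probability.'' The first clause is correct, but it does not justify the second: weak convergence to a non-degenerate law does \emph{not} preclude convergence in probability to a random variable with that law (take $X_n\equiv X$). The lemma asserts, and the paper proves, that the sequences fail to converge in probability to \emph{any} random variable---this is what is invoked in Theorems~\ref{theoremncc} and~\ref{theoremnc}, where the quantities at issue are $\mathcal X$-measurable and one must exclude every possible $\mathcal X$-measurable limit.

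The paper's proof supplies exactly the missing ingredient and is in fact simpler than your coupling programme. It never needs the joint convergence of $(V_n,\Pi_n)$: using only the tightness $n|\hat\zeta_n-\zeta_0|=O_{\mathbf P}(1)$ from Theorem~\ref{theorem2}, it sandwiches $\hat E_n$ between two counts $E_{n,1}\le \hat E_n\le E_{n,2}$ over \emph{deterministic} windows around $\zeta_0$. Each bound has a Poisson weak limit, and for $h$ large enough one can find integers $N_{1,h}<N_{2,h}$ with $\liminf_n\mathbf P(E_{n,1}>N_{2,h})>2\epsilon$ and $\liminf_n\mathbf P(E_{n,2}<N_{1,h})>2\epsilon$, hence $\mathbf P(\hat E_n>N_{2,h},\ \text{i.o.})>0$ and $\mathbf P(\hat E_n<N_{1,h},\ \text{i.o.})>0$. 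The decisive step---absent from your sketch---is the Hewitt--Savage $0$--$1$ law: $\hat E_n$ is a symmetric function of the i.i.d.\ data, so both tail events are permutation invariant and therefore have probability one. This forces $\limsup_n\hat E_n\ge N_{2,h}>N_{1,h}\ge\liminf_n\hat E_n$ almost surely, ruling out any almost-sure limit; rerunning the argument along an arbitrary subsequence then excludes any in-probability limit. Your weak-limit approach cannot reproduce this conclusion without some additional device (a $0$--$1$ law or an explicit oscillation argument) that distinguishes the actual sequence from a stationary one with the same marginal law.
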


The following theorem shows that, conditional on the data,  $(U^*_n)_{n=1}^\infty$ does not have any weak limit in probability.  Consider the Skorohod space ${\cal D}_{\Theta}$  with compact set $\Theta\subset \mathbb{R}^{2p+1}$. We say that $(U^*_n)_{n=1}^\infty$ has no weak limit in probability in ${\cal D}_{\Theta}$ if there is no probability measure $\mu$ defined on  ${\cal D}_{\Theta}$ such that $\rho(\mu_n,\mu)\xrightarrow[]{\mathbf{P}} 0,$ 
where $\mu_n$ is the conditional distribution of $U^*_n$ given $\mathcal{X}$, and $\rho$ is a metric metrizing weak convergence on ${\cal D}_{\Theta}$.

\begin{theorem}\label{theoremncc}
There is a compact set $\Theta\in \mathbb{R}^{2p+1}$ such that, conditional on the data, $U^*_n$ does not have a weak limit in probability in ${\cal D}_\Theta$.
\end{theorem}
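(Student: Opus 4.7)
The approach is proof by contradiction, combining Lemma \ref{lemmancc} with a direct computation of the conditional jump law of $U^*_n$. The key observation is that under Method \ref{methodclassic}, bootstrap samples are drawn from the ED, so the jumps of $U^*_n(h_\alpha,h_\beta,h_\zeta)$ in the $h_\zeta$-direction are supported (given $\mathcal{X}$) at the deterministic rescaled failure locations $\{n(\tilde T_j-\hat\zeta_n):\delta_j=1\}$, with multiplicities equal to the multinomial resampling counts. Consequently, conditional on $\mathcal{X}$, any jump-counting functional of $U^*_n$ on a fixed $h_\zeta$-interval is completely driven by the non-convergent empirical counts identified in Lemma \ref{lemmancc}.

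Fix $h>h_0$ with $h_0$ from Lemma \ref{lemmancc}, and take $\Theta$ to be any compact rectangle in $\mathbb{R}^{2p+1}$ whose interior contains the origin and whose $h_\zeta$-projection contains $[0,h]$. Suppose for contradiction that there is a probability measure $\mu$ on $\mathcal{D}_\Theta$ with $\rho(\mu_n,\mu)\to 0$ in $\mathbf{P}$-probability. Passing to a subsequence along which this convergence is almost sure, for $\mathbf{P}$-a.e.\ $\omega$ the conditional law $\mu_n$ converges weakly in $\mathcal{D}_\Theta$ to $\mu$. Applying portmanteau to the functional \emph{``number of jumps in the $h_\zeta$-slice over $(0,h]$''}---Skorohod-continuous at paths with no jumps at the endpoint $h$, so one chooses $h$ to be a $\mu$-continuity point---we conclude that, conditional on $\mathcal{X}$, the count $J^*_n(h):=\sum_i\int_{\hat\zeta_n}^{\hat\zeta_n+h/n} dN^*_{n,i}(s)$ converges in distribution to a fixed $\mathbb{Z}_+$-valued random variable along the subsequence.

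To close the argument, the conditional law of $J^*_n(h)$ given $\mathcal{X}$ is Binomial$(n,K_n(h)/n)$, where $K_n(h):=\sum_i\int_{\hat\zeta_n}^{\hat\zeta_n+h/n} dN_i(s)$ is exactly one of the non-convergent sequences appearing in \eqref{1e}. Since $K_n(h)=O_{\mathbf{P}}(1)$ (it counts original failures in a window of width $h/n$ around the consistent estimator $\hat\zeta_n$), this Binomial is close in total variation to Poisson$(K_n(h))$; in particular, weak convergence of $J^*_n(h)\mid\mathcal{X}$ forces $K_n(h)$ to converge in probability along the subsequence---e.g., because its conditional mean is exactly $K_n(h)$ and converges to the mean of the limit by uniform integrability (supplied by standard Binomial tail bounds). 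Since every subsequence admits a further subsequence with the same deterministic limit, $K_n(h)$ converges in $\mathbf{P}$-probability, contradicting Lemma \ref{lemmancc}.

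The main obstacle I anticipate is the Skorohod bookkeeping: verifying continuity of the chosen jump-counting functional on $\mathcal{D}_\Theta$ at $\mu$-almost every path (dictating the choice of $h$ as a $\mu$-continuity point), and propagating conditional weak convergence to convergence of conditional means. If an adversarial $\mu$ is engineered so that the raw jump count under $\mu$ is somehow insensitive to $K_n(h)$, one would repeat the argument with the $\phi_i$-weighted continuous functional corresponding to the second displayed sequence in \eqref{2e}, thereby exploiting both non-convergence statements of Lemma \ref{lemmancc} in tandem.
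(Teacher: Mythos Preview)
Your strategy---assume a conditional weak limit and derive a contradiction via Lemma \ref{lemmancc}---is sound, but the paper takes a shorter route that sidesteps the Skorohod bookkeeping you flag as an obstacle. Rather than extracting a jump-counting functional from the whole path, the paper simply evaluates at a single point: if $U^*_n$ had a conditional weak limit in $\mathcal{D}_\Theta$, then for a suitable $h>0$ the scalar $U^*_n(\mathbf 0,\mathbf 0,h)$ would have a conditional weak limit. The paper computes the conditional characteristic function of $U^*_n(\mathbf 0,\mathbf 0,h)$ directly (via the same product-over-observations expansion that drives Lemma \ref{lemma3}) and shows it equals, up to $o_{\mathbf P}(1)$, the exponential of exactly the $\phi_i$-weighted sum in \eqref{2e}; Lemma \ref{lemmancc} then blocks convergence. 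So the paper uses the \emph{second} non-convergent sequence of Lemma \ref{lemmancc} rather than the first, and never needs continuity of any path functional on $\mathcal{D}_\Theta$.

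Your route also carries a gap beyond the Skorohod technicalities. Because Method \ref{methodclassic} resamples with replacement, distinct bootstrap draws can land on the same original failure time; when this happens $U^*_n(\mathbf 0,\mathbf 0,\cdot)$ exhibits a \emph{single} jump at that location whose size is the multiplicity times the unit contribution. Hence the ``number of jumps in $(0,h]$'' functional applied to $U^*_n$ returns the number of \emph{distinct} original failures in the window that were resampled at least once---call it $\tilde J_n(h)$---and not $J^*_n(h)=\sum_i\int_{\hat\zeta_n}^{\hat\zeta_n+h/n} dN^*_{n,i}(s)$. The conditional law of $\tilde J_n(h)$ is not Binomial$(n,K_n(h)/n)$ but a sum of $K_n(h)$ (negatively dependent) Bernoulli$\bigl(1-(1-1/n)^n\bigr)$ indicators. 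One can still force convergence of $K_n(h)$ from convergence of $\tilde J_n(h)$ (the map $k\mapsto\mathrm{Binomial}(k,1-e^{-1})$ is injective on $\mathbb Z_+$, and the dependence is negligible since $K_n(h)=O_{\mathbf P}(1)$ while $n\to\infty$), so your argument is repairable, but not as written. The point-evaluation/characteristic-function approach avoids all of this and explains why the paper bothers to state the second half of Lemma \ref{lemmancc}.
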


\begin{proof}[Proof of Theorem \ref{theoremncc}]
It suffices to show that there is some $h>0$ such that, conditional on the data,  $U^*_n(\mathbf 0,\mathbf 0,h)$ does not have a weak limit in probability. In this case, 
$$U^*_n(\mathbf 0,\mathbf 0,h) 
= \sum_{i=1}^{n} \int_{\hat\zeta_n}^{\hat\zeta_n+h/n} \big( (\hat\alpha_n-\hat\beta_n)'Z_{n,i}(s) - 
\log{R_{n}(s;\hat\alpha_n,\hat\beta_n)}\big) dN_{n,i}(s).$$
Consider the conditional characteristic function of $U^*_n(\mathbf 0,\mathbf 0,h)$ given ${\cal X}$. A similar argument as in the proof of Lemma \ref{lemma3} implies that 
\begin{eqnarray*}
&&\mathbf{E} [e^{\imath tU^*_n(\mathbf 0,\mathbf 0,h)} |{\cal X}] =(1+o_{\mathbf{P}}(1))
\exp\Big(\sum_{i=1}^{n} 
\int_{\hat\zeta_n}^{\hat\zeta_n+h/n} \phi_i(s)dN_{i}(s)\Big),
\end{eqnarray*}
where $\phi_i(s)$ is defined as in Lemma \ref{lemmancc}.
Then Lemma \ref{lemmancc} implies the desired conclusion.
\end{proof}

The result that  $U^*_n$  does not have any weak limit in probability makes the existence of a weak limit for  $n(\zeta^*_n-\hat\zeta_n)$ very unlikely; see \eqref{argmaxu}. But a complete proof of the non-existence may be complicated due to the non-linearity of the smallest argmax functional. 
For this reason, theoretically we do not  pursue this problem any further, and we will use simulation results to illustrate the inconsistency in Section \ref{SectionSim}.

\subsubsection{Conditional bootstrap} \label{sectionbs2}
For  Methods \ref{methodcond1} and \ref{methodcond2} in Section \ref{sectionbs}, we consider that $m_n=n$, $\mathbb{Q}_{n,i}(Z_{n,i}=Z_i)=1$,
and the cumulative hazard function of $T$ takes the form $\Lambda_{n,0} = \hat\Lambda^b_{n,0}$, where  $\hat\Lambda^b_{n,0}$ is the Breslow estimator as defined in Method \ref{methodcond1}.  Therefore, for $k=0,1,2$,
\begin{eqnarray*}
s_{n,k}(t;\gamma) 
= \frac{1}{n}\sum_{i=1}^{n}\mathbb{Q}_{n,i}(Y_{n,i}(t)|Z_i)Z_{i}^{\otimes k}(t)\exp(\gamma' Z_{i}(t))\end{eqnarray*}
and 
\begin{eqnarray*}
A_{n,k}(t) 
= \int_0^t s_{n,k}(s;\hat\alpha_n 1_{s\leq \hat\zeta_n} + \hat\beta_n 1_{s> \hat\zeta_n} ) d\hat\Lambda^b_{n,0}(s).\end{eqnarray*}
Thus $\theta_n = \hat\theta_n$. 

A uniformly consistent estimator of $G$ is usually needed in conditional bootstrap methods for the Cox model. We assume that
 \begin{eqnarray}\label{CCensor}
\sup_{t\in [0,\tau], z\in {\cal V}}|\hat G_{n}(t|z) - G(t|z) |\xrightarrow[]{\mathbf{P}} 0,
\end{eqnarray}
where ${\cal V}$ is the set of all possible sample paths of covariate $Z$. 
Note that $\hat G_n$ can be taken as the Kaplan-Meier estimator when $C_i$'s are i.i.d. or $Z_i's$ are time-independent and categorical; see also \cite{Beran81} for a class of nonparametric estimates of the conditional distribution.  For more general  time-dependent covariates $Z$, it is hard, if not impossible, to obtain a consistent estimator of $G$ without further model assumption. In the literature, a common approach is to assume that the censoring time follows the Cox model \eqref{eq1}, in which case a consistent estimator of $G$ can be constructed based on the usual Breslow estimator \citep{MR751780}. In this paper we assume that \eqref{CCensor} holds and do not go into the problem of estimating $G$ any further.

Under the setup in Section \ref{SectionModel}, it is known that $\sup_{t \in [0,\tau]}|\Lambda_{n,0}(t) - \Lambda_0(t)| \xrightarrow[]{\mathbf{P}} 0$ \citep{ABGK}. 
Together with \eqref{CCensor}, this implies condition A1.
Apply Theorem \ref{theorem1} and we have the following convergence result.
\begin{proposition}
For  Methods \ref{methodcond1} and \ref{methodcond2}, if \eqref{CCensor} holds, then
$\theta^*_n \xrightarrow[\mathbf{P}]{\mathbf{P}_{\cal X}}\theta_0.$ 
\end{proposition}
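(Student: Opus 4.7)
The strategy is to reduce the proposition to verifying condition A1 and then appeal to Theorem~\ref{theorem1}. As noted in the paragraph preceding the proposition, the bootstrap distributions defined by Methods~\ref{methodcond1} and~\ref{methodcond2} satisfy $\theta_n=\hat\theta_n$, and by consistency of the MPLE (e.g., Theorem~\ref{theorem1} applied to the ED, cf.\ Section~\ref{sectionbs1}) we have $\theta_n\to\theta_0$. Therefore, once A1 is verified, Theorem~\ref{theorem1} immediately yields $\theta_n^*\xrightarrow[\mathbf{P}]{\mathbf{P}_{\mathcal{X}}}\theta_0$.

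To establish A1, I would first compute $\mathbb{Q}_{n,i}(Y_{n,i}(t)\mid Z_i)=\mathbb{Q}_{n,i}(\tilde T_{n,i}^*\geq t\mid Z_i)$ in closed form. For Method~\ref{methodcond1}, conditional independence of $T_{n,i}^*$ and $C_{n,i}^*$ gives $\mathbb{Q}_{n,i}(Y_{n,i}(t)\mid Z_i)=(1-\hat F_n^b(t^-\mid Z_i))(1-\hat G_n(t^-\mid Z_i))$; for Method~\ref{methodcond2} the same quantity becomes a two-piece mixture indexed by the observed $\delta_i$, still built from $\hat F_n^b$, $\hat G_n$, and the original $C_i$. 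Consistency of $\hat\theta_n$, the known uniform consistency $\sup_t|\hat\Lambda_{n,0}^b(t)-\Lambda_0(t)|\xrightarrow[]{\mathbf{P}}0$, together with bounded total variation of $Z$ and boundedness of $\lambda_0$, give $\sup_{t,z\in\mathcal{V}}|\hat F_n^b(t\mid z)-F(t\mid z)|\xrightarrow[]{\mathbf{P}}0$ via the continuous-mapping structure of~\eqref{F}; combined with~\eqref{CCensor} this yields $\sup_{t,i}|\mathbb{Q}_{n,i}(Y_{n,i}(t)\mid Z_i)-\mathbf{P}(Y(t)=1\mid Z=Z_i)|\xrightarrow[]{\mathbf{P}}0$ (in Method~\ref{methodcond2}, additionally use that the $\delta_i$'s just split the sum into two uniformly controlled pieces).

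Plugging this into $s_{n,k}(t;\gamma)=n^{-1}\sum_{i=1}^n\mathbb{Q}_{n,i}(Y_{n,i}(t)\mid Z_i)Z_i^{\otimes k}(t)e^{\gamma'Z_i(t)}$ and decomposing $s_{n,k}-s_k$ as (i)~the plug-in error, handled by the previous step and boundedness of $Z$ and $\gamma$, plus (ii)~the empirical error $n^{-1}\sum_i\mathbf{P}(Y(t)=1\mid Z_i)Z_i^{\otimes k}(t)e^{\gamma'Z_i(t)}-s_k(t;\gamma)$, handled by a uniform law of large numbers (the indexing class is Glivenko--Cantelli because the functions are monotone-in-$t$ indicators times Lipschitz-in-$\gamma$ bounded factors), gives the first half of A1. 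For the second half, write $A_{n,k}(t)=\int_0^t s_{n,k}(s;\hat\alpha_n 1_{s\leq\hat\zeta_n}+\hat\beta_n 1_{s>\hat\zeta_n})\,d\hat\Lambda_{n,0}^b(s)$ and combine the uniform convergence of $s_{n,k}$ with that of $\hat\Lambda_{n,0}^b$ via integration by parts, using continuity of $\Lambda_0$ at $\zeta_0$ to absorb the jump in the integrand caused by replacing $\hat\zeta_n$ by $\zeta_0$.

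The main obstacle is the uniform Glivenko--Cantelli step in (ii) together with the plug-in uniformity in $t$ for $\hat F_n^b(\cdot\mid Z)$: these require carefully bundling the change-point discontinuity of the exponent, the total-variation bound on $Z$, and the compactness of $\Theta_\alpha\cup\Theta_\beta$ into a single envelope argument, and in Method~\ref{methodcond2} keeping track of the $\delta_i$-dependent mixture so that the limit still matches $s_k$ under the true model. Once this uniformity is in hand, the conclusion of A1 — and hence the proposition — follows directly from Theorem~\ref{theorem1}.
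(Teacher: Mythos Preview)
Your proposal is correct and follows essentially the same route as the paper: verify condition A1 from the uniform consistency of $\hat\Lambda_{n,0}^b$ together with~\eqref{CCensor}, and then invoke Theorem~\ref{theorem1}. The paper's argument is extremely terse (it simply asserts that these two ingredients imply A1), whereas you have spelled out the plug-in/empirical decomposition and the Glivenko--Cantelli step; note also that you need not separately derive $\theta_n\to\theta_0$, since Theorem~\ref{theorem1} already delivers both $\theta_n\to\theta_0$ and $\theta_n^*\xrightarrow[]{\mathbf{P}}\theta_0$ once A1 holds.
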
 

As with Method \ref{methodclassic},  we argue that   Methods \ref{methodcond1} and \ref{methodcond2} are also inconsistent. We start with the following lemma.
\begin{lemma}\label{lemmanc}
 For Methods \ref{methodcond1} and \ref{methodcond2},
 there is  $h_0>0$ such that for any $h>h_0$, the sequences 
$$\biggr\{n\biggr(\hat\Lambda^ b_{n,0}\Big(\hat\zeta_n +\frac{h}{n}\Big)
 -\hat\Lambda^ b_{n,0}\big(\hat\zeta_n\big)\biggr)\biggr\}_{n=1}^\infty ~~and~~
\biggr\{n\biggr(\hat\Lambda^ b_{n,0}\big(\hat\zeta_n \big)
 -\hat\Lambda^ b_{n,0}\Big(\hat\zeta_n-\frac{h}{n}\Big)\biggr)\biggr\}_{n=1}^\infty $$ 
 do not converge in probability.  
\end{lemma}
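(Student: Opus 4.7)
The plan is to reduce the statement to Lemma~\ref{lemmancc} by expressing the scaled Breslow increment as a nonzero multiple of a localized counting-process increment, up to an asymptotically negligible error. From the definition of $\hat\Lambda^b_{n,0}$, noting that for $s\in(\hat\zeta_n,\hat\zeta_n+h/n]$ the hazard exponent involves $\hat\beta_n$ only, we have
$$n\big[\hat\Lambda^b_{n,0}(\hat\zeta_n+h/n)-\hat\Lambda^b_{n,0}(\hat\zeta_n)\big]
=\int_{\hat\zeta_n}^{\hat\zeta_n+h/n}\frac{1}{S_{n,0}(s;\hat\beta_n)}\,d\Big(\sum_{i=1}^{n}N_i(s)\Big).$$

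First I would show that on this shrinking random interval the denominator is effectively the positive constant $s_0(\zeta_0;\beta_0)$. The uniform convergence of $S_{n,0}(\cdot;\gamma)$ to $s_0(\cdot;\gamma)$ on $[0,\tau]\times(\Theta_\alpha\cup\Theta_\beta)$ (the uniform law of large numbers, i.e., condition~A1 under the true model), combined with $\hat\beta_n\xrightarrow[]{\mathbf{P}}\beta_0$, $\hat\zeta_n\xrightarrow[]{\mathbf{P}}\zeta_0$, and continuity of $s_0(\cdot;\gamma)$ at $\zeta_0$ (which follows from continuity of $\lambda_0$ and $G(\cdot|z)$ at $\zeta_0$), gives
$$\sup_{s\in[\hat\zeta_n,\hat\zeta_n+h/n]}\big|S_{n,0}(s;\hat\beta_n)-s_0(\zeta_0;\beta_0)\big|\xrightarrow[]{\mathbf{P}}0.$$
Setting $M_n:=\sum_{i=1}^{n}\int_{\hat\zeta_n}^{\hat\zeta_n+h/n}dN_i(s)$, this allows me to write the displayed Breslow increment as $M_n/s_0(\zeta_0;\beta_0)+R_n$ with $|R_n|\leq\epsilon_n M_n$ and $\epsilon_n=o_{\mathbf{P}}(1)$.

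Next I would apply Lemma~\ref{lemmancc}, which provides $h_0>0$ such that for any $h>h_0$ the sequence $\{M_n\}$ does not converge in probability; the argument underlying that lemma also yields tightness of $\{M_n\}$ via convergence in distribution to the Poisson law with positive mean $s_0(\zeta_0;\beta_0)\lambda_0(\zeta_0)h$. Hence $R_n=o_{\mathbf{P}}(1)$, and therefore $n[\hat\Lambda^b_{n,0}(\hat\zeta_n+h/n)-\hat\Lambda^b_{n,0}(\hat\zeta_n)]=M_n/s_0(\zeta_0;\beta_0)+o_{\mathbf{P}}(1)$. If this sequence converged in probability, so would $M_n$, contradicting Lemma~\ref{lemmancc}. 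The left-sided increment $n[\hat\Lambda^b_{n,0}(\hat\zeta_n)-\hat\Lambda^b_{n,0}(\hat\zeta_n-h/n)]$ is handled by the symmetric argument, replacing $\hat\beta_n$ by $\hat\alpha_n$ and the constant $s_0(\zeta_0;\beta_0)$ by $s_0(\zeta_0;\alpha_0)$.

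The main obstacle is the first step: justifying uniform closeness of $S_{n,0}(\cdot;\hat\beta_n)$ to a constant over a random shrinking window around $\hat\zeta_n$. This requires combining the global uniform law of large numbers with the continuity of the limiting function $s_0(\cdot;\gamma)$ at $\zeta_0$, using the assumed continuity of $\lambda_0$ and $G(\cdot|z)$ at the change-point, together with a consistency-plus-equicontinuity argument at $(\zeta_0,\beta_0)$. The rest of the proof is purely algebraic after invoking Lemma~\ref{lemmancc}.
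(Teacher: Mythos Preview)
Your proposal is correct and follows essentially the same approach as the paper's proof: express the scaled Breslow increment over $(\hat\zeta_n,\hat\zeta_n+h/n]$ as $(1+o_{\mathbf P}(1))\,s_0(\zeta_0;\beta_0)^{-1}\sum_{i=1}^n\int_{\hat\zeta_n}^{\hat\zeta_n+h/n}dN_i(s)$ by replacing the denominator $S_{n,0}(s;\hat\beta_n)$ with the constant $s_0(\zeta_0;\beta_0)$, and then invoke Lemma~\ref{lemmancc}. Your additive decomposition $M_n/s_0(\zeta_0;\beta_0)+R_n$ with an explicit tightness argument for $M_n$ is just a slightly more detailed way of writing the paper's multiplicative $(1+o_{\mathbf P}(1))$ factor; one small caveat is that $M_n$ itself need not converge in distribution to a single Poisson law (only the sandwiching quantities $E_{n,1},E_{n,2}$ in the proof of Lemma~\ref{lemmancc} do), but tightness of $M_n$ still follows from that sandwich, so your conclusion stands.
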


\begin{proof}[Proof of Lemma \ref{lemmanc}]
We only need to show that the first sequence does not converge in probability. For $h>0$,
\begin{eqnarray*}
&&n\biggr(\hat\Lambda^ b_{n,0}\Big(\hat\zeta_n +\frac{h}{n}\Big)
 -\hat\Lambda^ b_{n,0}\big(\hat\zeta_n\big)\biggr)\\
&=& n\int_{\hat\zeta_n }^{\hat\zeta_n +\frac{h}{n}} 
\Big(\sum_{j=1}^n Y_j(s)e^{\hat\alpha_n' Z_j(s) 1_{s\leq \hat\zeta_n} 
 +\hat \beta'_n Z_j(s)1_{s> \hat\zeta_n}}\Big)^{-1}d\Big(\sum_{i=1}^n N_i(s)\Big)\\
&=& (1+o_{\mathbf{P}}(1)) \int_{\hat\zeta_n }^{\hat\zeta_n +\frac{h}{n}} 
s_{0}(s;\beta_0)^{-1}d\Big(\sum_{i=1}^n N_i(s)\Big)\\
&=& (1+o_{\mathbf{P}}(1)) s_{0}(\zeta_0;\beta_0)^{-1}
\sum_{i=1}^n  \int_{\hat\zeta_n}^{\hat\zeta_n +\frac{h}{n}} dN_i(s).
\end{eqnarray*}
Thus, it suffices to show that $\sum_{i=1}^n  \int_{\hat\zeta_n}^{\hat\zeta_n +\frac{h}{n}} dN_i(s)$ does not converge in probability. Apply Lemma \ref{lemmancc} and we have the desired conclusion.
\end{proof}

Based on Lemma \ref{lemmanc}, we further show that, conditional on the data, the sequence $\{U^*_n\}_{n=1}^{\infty}$ does not have a weak limit in probability.

\begin{theorem}\label{theoremnc}
There is a compact set $\Theta\in \mathbb{R}^{2p+1}$ such that, conditional on the data, $U^*_n$ does not have a weak limit in probability in ${\cal D}_\Theta$.
\end{theorem}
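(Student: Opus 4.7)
The plan is to mirror the proof of Theorem \ref{theoremncc}, substituting Lemma \ref{lemmanc} for Lemma \ref{lemmancc} at the critical step. As there, it suffices to find a single $h > 0$ for which, conditional on $\mathcal{X}$, the scalar $U^*_n(\mathbf{0}, \mathbf{0}, h)$ has no weak limit in probability; absence of a weak limit for this one-dimensional slice rules out a weak limit for the full process $U^*_n$ in $\mathcal{D}_\Theta$, since a conditional weak convergence in ${\cal D}_\Theta$ would force the conditional characteristic function of this slice to converge in probability at every $t$.

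First I would write $U^*_n(\mathbf{0},\mathbf{0},h)$ as $\sum_{i=1}^n \int_{\hat\zeta_n}^{\hat\zeta_n + h/n}\bigl((\hat\alpha_n - \hat\beta_n)' Z_i(s) - \log R_n(s; \hat\alpha_n, \hat\beta_n)\bigr)\, dN^*_{n,i}(s)$, where $N^*_{n,i}$ is the bootstrap counting process. The only structural difference from the classical-bootstrap setting is that under Methods \ref{methodcond1} and \ref{methodcond2} the $\mathbb{Q}_n$-compensator of $N^*_{n,i}$ is driven by $d\hat\Lambda^b_{n,0}$ rather than by the empirical intensity. Applying the exponential-formula computation used in the proof of Theorem \ref{theoremncc} (cf.\ the argument built on Lemma \ref{lemma3}), the conditional characteristic function of $U^*_n(\mathbf{0}, \mathbf{0}, h)$ given $\mathcal{X}$ reduces, after the standard local linearisation near $\hat\zeta_n$, to
\begin{equation*}
\mathbf{E}\bigl[e^{\imath t U^*_n(\mathbf{0},\mathbf{0},h)}\mid \mathcal{X}\bigr]
= (1 + o_{\mathbf{P}}(1))\exp\!\Bigl(C(t)\cdot n\bigl[\hat\Lambda^b_{n,0}(\hat\zeta_n + h/n) - \hat\Lambda^b_{n,0}(\hat\zeta_n)\bigr]\Bigr),
\end{equation*}
with $C(t)$ a nonzero complex constant whose real part is strictly negative for an appropriate choice of $t$.

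Next I would invoke Lemma \ref{lemmanc}: for any $h > h_0$, the factor $n[\hat\Lambda^b_{n,0}(\hat\zeta_n + h/n) - \hat\Lambda^b_{n,0}(\hat\zeta_n)]$ is bounded in probability but fails to converge in probability. Since $|\exp(C(t)x)|$ is a strictly monotone continuous function of $x\ge 0$ when $\operatorname{Re} C(t) < 0$, the exponential above also fails to converge in probability, and hence so does $\mathbf{E}[e^{\imath t U^*_n(\mathbf{0},\mathbf{0},h)} \mid \mathcal{X}]$ at this value of $t$. This precludes any conditional weak limit in probability for $U^*_n(\mathbf{0},\mathbf{0},h)$, and therefore for $U^*_n$ in $\mathcal{D}_\Theta$, proving the theorem.

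The main technical obstacle is the careful bookkeeping of the $o_{\mathbf{P}}(1)$ remainder in the characteristic-function computation: one must show that the bootstrap at-risk processes $Y^*_{n,i}$ may be frozen at $\hat\zeta_n$ over the $O(1/n)$ window, that the covariate-weighted sum $\sum_i Y^*_{n,i}(\hat\zeta_n) e^{\hat\beta_n' Z_i(\hat\zeta_n)}(e^{\imath t\phi_i} - 1)$ concentrates at the deterministic constant implied by condition A1, and that the step-function nature of $\hat\Lambda^b_{n,0}$ does not conspire with the integrand to restore convergence of the exponent. These steps parallel the classical-bootstrap analysis, with the conditional bootstrap merely substituting $d\hat\Lambda^b_{n,0}$ for the empirical intensity $dN_i$, so once the remainder is controlled the conclusion follows from Lemma \ref{lemmanc} essentially verbatim from the proof of Theorem \ref{theoremncc}.
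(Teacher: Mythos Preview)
Your proposal is correct and follows essentially the same approach as the paper's own proof: compute the conditional characteristic function of $U^*_n(\mathbf{0},\mathbf{0},h)$ via the Lemma~\ref{lemma3}-type exponential formula, factor out the Breslow increment $n[\hat\Lambda^b_{n,0}(\hat\zeta_n+h/n)-\hat\Lambda^b_{n,0}(\hat\zeta_n)]$, and then invoke Lemma~\ref{lemmanc}. The paper's $C(t)$ is explicitly $e^{-\log r(\zeta_0;\alpha_0,\beta_0)}\,s_0(\zeta_0;\imath t(\alpha_0-\beta_0)+\beta_0)-s_0(\zeta_0;\beta_0)$, and your added remark that $|\exp(C(t)x)|$ is strictly monotone in $x$ when $\operatorname{Re}C(t)<0$ makes explicit the final implication that the paper leaves to the reader.
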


\begin{proof}[Proof of Theorem \ref{theoremnc}]
For $h>0$, consider the conditional characteristic function of $U^*_n(\mathbf 0,\mathbf 0,h)$ given ${\cal X}$. A similar argument as in the proof of Lemma \ref{lemma3} implies that 
\begin{align*}
\mathbf{E}[e^{\imath tU^*_n(\mathbf 0,\mathbf 0,h)}|{{\cal X}}]  =~&
(1+o_{\mathbf{P}}(1))
 \exp\biggr\{n\Big(\hat\Lambda^ b_{n,0}\Big(\hat\zeta_n +\frac{h}{n}\Big)
 -\hat\Lambda^ b_{n,0}\left(\hat\zeta_n\right)\Big)\\
&~~~~~~\times \Big( e^{- \log{r(\zeta_0;\alpha_0,\beta_0)} }
  ~s_{0}(\zeta_0; \imath t (\alpha_0-\beta_0)+\beta_0)
-s_{0}(\zeta_0; \beta_0)\Big) \biggr\}.
\end{align*}
Hence,  Lemma \ref{lemmanc} implies the desired conclusion.
\end{proof}

\subsection{Consistent bootstrap methods}\label{consistbs}
In this section we show that the smooth bootstrap  (Methods \ref{methodsmooth1} and \ref{methodsmooth2}) and the $m$-out-of-$n$ bootstrap (Method \ref{methodmofn}) are consistent for constructing CIs for $\zeta_0$.

The results from Section \ref{SectionGeneral} can be directly applied to derive sufficient conditions on the distribution from which the bootstrap samples are generated. 
Let $\hat {\mathbb Q}_n$ be a distribution constructed from the data $\{(\tilde T_{i}, \delta_{i}, Z_{i}): i=1,\cdots,n\}$.  If conditions A1-A3 hold with 
${\mathbb Q}_n=\hat {\mathbb Q}_n$, then the weak convergence of the bootstrap estimate follows from Theorem \ref{theorem3} applied conditionally given the data.

\subsubsection{Smooth bootstrap}\label{sectionsmooth}
Consider Methods \ref{methodsmooth1} and \ref{methodsmooth2}. To prove the consistence, thanks to Theorem \ref{theorem3}, we only need to show conditions A1-A3 hold conditionally on the data with $m_n=n$ and ${\mathbb Q}_n$ the distribution of the bootstrap sample.  Recall that $\hat\lambda_{n,0}(\cdot)$ and $\hat G(\cdot|Z)$ are the estimated smooth baseline hazard rate function of $T$ and the conditional distribution of $C$ given $Z$, respectively. 
In addition to \eqref{CCensor}, we need the following convergence result:
\begin{eqnarray}\label{hazardc}
\sup_{t\in [0,\tau]}|\hat\lambda_{n,0}(t) - \lambda_0(t)|\xrightarrow[]{\mathbf{P}} 0.
\end{eqnarray}
Note that \eqref{hazardc} is fulfilled if $\hat\lambda_{n,0}$ is the usual kernel estimator \citep{Wells94}. 

Similarly as in Section \ref{sectionbs2}, we have 
$\theta_n=\hat\theta_n$, 
$s_{n,k}(t;\gamma) \xrightarrow[]{\mathbf{P}}  s_{k}(t;\gamma)$,
and $A_{n,k}(t)  \xrightarrow[]{\mathbf{P}} A_{k}(t).$
In addition,  
$$\int_{\hat\zeta_n}^{\hat\zeta_n+h} dA_{n,1}(s)
- \int_{\hat\zeta_n}^{\hat\zeta_n+h}
{\bar z_{n}(s; \hat\alpha_n 1_{s\leq \hat\zeta_n} + \hat\beta_n 1_{s> \hat\zeta_n})}
 dA_{n,0}(s)
= 0,
$$ and for any $t\in \mathbb{R}$, $h_1<h_2$ and $0\notin (h_1,h_2)$,
\begin{eqnarray*}
&& {\mathbb Q}_n\biggr(\sum_{i=1}^{n}
\int_{\hat\zeta_n+h_1/n}^{\hat\zeta_n+h_2/n} 
e^{\imath t (\hat\alpha_n-\hat\beta_n)'Z_{n,k}(s)} dN_{n,i}(s)\biggr)\\
&=&
n\int_{\hat\zeta_n+h_1/n}^{\hat\zeta_n+h_2/n} 
s_{0}\big(s;\gamma_n+\imath t(\hat\alpha_n-\hat\beta_n)\big)\hat\lambda_{0}(s)ds \\
&\xrightarrow[]{\mathbf{P}}&
s_{0}\big(\zeta_0;\gamma_0+\imath t(\alpha_0-\beta_0)\big)\lambda_0(\zeta_0)(h_2-h_1),
\end{eqnarray*}
where $\gamma_n = \alpha_n1_{h_2\leq 0}+\beta_n1_{0\leq h_1}$ and 
$\gamma_0 = \alpha_01_{h_2\leq 0}+\beta_01_{0\leq h_1}.$ Therefore, A1-A3 hold and  Theorems \ref{theorem1} and \ref{theorem3} give the weak consistency result.
\begin{proposition}
For  Methods \ref{methodsmooth1} and \ref{methodsmooth2}, if \eqref{CCensor} and \eqref{hazardc} hold, then $\theta^*_n\xrightarrow[\mathbf{P}]{\mathbf{P}_{\cal X}} \theta_0$ and conditional on the data,
\begin{equation}\label{propsmooth}
\left(\begin{array}{c} \sqrt{n}(\alpha_n^*-\hat\alpha_n) \\ \sqrt{n}(\beta_n^*-\hat\beta_n)\\
n(\zeta_n^*-\hat\zeta_n)\end{array}\right) \leadsto sargmax_{h\in\mathbb{R}^{2p+1}} U(h).
\end{equation}
\end{proposition}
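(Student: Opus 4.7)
The plan is to invoke Theorems \ref{theorem1} and \ref{theorem3} conditionally on $\mathcal{X}$, taking $m_n=n$, the ``true'' parameter $\theta_n=\hat\theta_n$, and $\mathbb{Q}_n$ equal to the conditional law of the bootstrap sample given $\mathcal{X}$. It then suffices to verify Conditions A1--A3 in $\mathbf{P}$-probability.

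For A1, note that under both Methods \ref{methodsmooth1} and \ref{methodsmooth2}, given $Z_i$, the survival time $T^*_{n,i}$ has hazard $\exp(\hat\alpha'_n Z_i(s)1_{s\leq\hat\zeta_n}+\hat\beta'_n Z_i(s)1_{s>\hat\zeta_n})\hat\lambda_{n,0}(s)$, while $C^*_{n,i}$ is drawn from $\hat G_n(\cdot|Z_i)$ (with the extra conditioning on $\{C^*_{n,i}>T_i\}$ when $\delta_i=1$ in Method \ref{methodsmooth2}). The uniform convergences \eqref{CCensor} and \eqref{hazardc}, the consistency $\hat\theta_n\xrightarrow{\mathbf{P}}\theta_0$ (from Theorem \ref{theorem1} applied to the original data), the bounded-variation assumption on $Z$, and the uniform law of large numbers over the $Z_i$'s give the required uniform convergence of $s_{n,k}$ and $A_{n,k}$ to $s_k$ and $A_k$.

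For A2, the key observation is that the bootstrap distribution is itself a Cox change-point model with coefficients $(\hat\alpha_n,\hat\beta_n,\hat\zeta_n)$, so $dA_{n,1}(s)=s_{n,1}(s;\cdot)\hat\lambda_{n,0}(s)ds$ and $dA_{n,0}(s)=s_{n,0}(s;\cdot)\hat\lambda_{n,0}(s)ds$, whence the integrand of the score-difference in A2 vanishes identically. For the derivative-bound on $A_{n,0}$, one-sided derivatives at $\hat\zeta_n$ equal $s_{n,0}(\hat\zeta_n;\hat\alpha_n)\hat\lambda_{n,0}(\hat\zeta_n)$ and $s_{n,0}(\hat\zeta_n;\hat\beta_n)\hat\lambda_{n,0}(\hat\zeta_n)$; by \eqref{hazardc}, $\inf_{[0,\tau]}\lambda_0>0$, and consistency of $\hat\theta_n$, these are eventually bounded strictly between positive constants $\rho_1,\rho_2$. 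For A3, a direct calculation yields
\[
\mathbb{Q}_n\Big(\sum_{i=1}^n\int_{\hat\zeta_n+h_1/n}^{\hat\zeta_n+h_2/n}e^{\imath t(\hat\alpha_n-\hat\beta_n)'Z_{n,i}(s)}dN_{n,i}(s)\Big)=n\int_{\hat\zeta_n+h_1/n}^{\hat\zeta_n+h_2/n}s_{n,0}(s;\gamma_n+\imath t(\hat\alpha_n-\hat\beta_n))\hat\lambda_{n,0}(s)\,ds,
\]
with $\gamma_n=\hat\alpha_n 1_{h_2\leq 0}+\hat\beta_n 1_{0\leq h_1}$; A1, \eqref{hazardc}, consistency of $\hat\theta_n$, and continuity of $\lambda_0$ at $\zeta_0$ send this to $s_0(\zeta_0;\gamma_0+\imath t(\alpha_0-\beta_0))\lambda_0(\zeta_0)(h_2-h_1)$.

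With A1--A3 verified in probability, Theorems \ref{theorem1} and \ref{theorem3} deliver the conditional consistency and weak convergence. The main obstacle is really just the conditioning bookkeeping: A1--A3 in those theorems are phrased for a deterministic sequence $\mathbb{Q}_n$, whereas here $\mathbb{Q}_n$ depends on the data. The standard remedy is the ``every subsequence has a further $\mathbf{P}$-almost-surely converging sub-subsequence'' argument, combined with the existence of regular conditional distributions on the Skorohod space $\mathcal{D}_\Theta$ (cf. Theorem 10.2.2 of \cite{MR1932358}); this transfers the pathwise conclusion of Theorem \ref{theorem3} into the conditional weak convergence in probability asserted in \eqref{propsmooth}, and a smallest-argmax continuous-mapping step (as in the proof of Theorem \ref{theorem3}, using Lemma \ref{lemma4}) yields the limit for the rescaled estimator.
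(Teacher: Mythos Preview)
Your proposal is correct and follows essentially the same approach as the paper: set $m_n=n$, $\theta_n=\hat\theta_n$, $\mathbb{Q}_n$ equal to the bootstrap law, verify A1 via \eqref{CCensor}, \eqref{hazardc} and consistency of $\hat\theta_n$, note that the first part of A2 vanishes identically because the bootstrap is itself a smooth change-point Cox model, and check A3 by the same direct calculation you write out. You are in fact slightly more careful than the paper on two points: you explicitly treat the second (two-sided derivative bound) part of A2, and you spell out the sub-subsequence device needed to pass from A1--A3 holding in $\mathbf{P}$-probability to the conditional conclusion of Theorem~\ref{theorem3}; the paper leaves both of these implicit.
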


\subsubsection{$m$-out-of-$n$ bootstrap}
Consider the $m$-out-of-$n$ bootstrap (Method \ref{methodmofn}). 
We will again use the consistency results established in Section \ref{SectionGeneral}. 
We set $m_n\rightarrow \infty$ and $m_n/n\rightarrow 0$ as $n\rightarrow\infty$.
Similar to the classical bootstrap, $\mathbb{Q}_{n,j}$, $j=1,\cdots,n$, is the ED of the data $\{(\tilde T_{i}, \delta_{i}, Z_{i}): i=1,\cdots,n\}$,  and \eqref{Scls} and \eqref{Acls} hold.
Therefore $\theta_n = \hat\theta_n$ and condition A1 holds. Consider the first equation in A2 and we have 
\begin{align*}
&
\sup_{|h|> h_n/m_n}\frac{1}{h}\biggr|\int_{\hat\zeta_n}^{\hat\zeta_n+h} dA_{n,1}(s)
- \int_{\hat\zeta_n}^{\hat\zeta_n+h}
{\bar z_{n}(s; \hat\alpha_n 1_{s\leq \hat\zeta_n} + \hat\beta_n 1_{s> \hat\zeta_n})}
 dA_{n,0}(s)\biggr|\\
=&
\sup_{|h|> h_n/m_n}
\frac{1}{nh}\biggr|\sum_{i=1}^{n}\int_{\hat\zeta_n}^{\hat\zeta_n+h} \big[Z_i
- {\bar z_{n}(s; \hat\alpha_n 1_{s\leq \hat\zeta_n} + \hat\beta_n 1_{s> \hat\zeta_n})}
\big] dN_{i}(s)\biggr|\\
\xrightarrow[]{\mathbf{P}}&\quad 0.
\end{align*} 
As to condition A3, for any $t\in \mathbb{R}$, $h_1<h_2$ and $0\notin (h_1,h_2)$,
\begin{eqnarray*}
&& {\mathbb Q}_n\biggr(\sum_{i=1}^{m_n}
\int_{\hat\zeta_n+h_1/m_n}^{\hat\zeta_n+h_2/m_n} 
e^{\imath t (\hat\alpha_n-\hat\beta_n)'Z_{n,i}(s)} dN_{n,i}(s)\biggr)\\
&=& 
\frac{m_n}{n}\sum_{i=1}^{n}\int_{\hat\zeta_n+h_1/m_n}^{\hat\zeta_n+h_2/m_n} 
e^{\imath t (\hat\alpha_n-\hat\beta_n)'Z_{i}(s)} dN_{i}(s) \\
&\xrightarrow[]{\mathbf{P}}&
s_{0}\left(\zeta_0;\gamma_0+\imath t(\alpha_0-\beta_0)\right)\lambda_0(\zeta_0)(h_2-h_1),
\end{eqnarray*}
where $\gamma_0 = \alpha_01_{h_2\leq 0}+\beta_01_{0\leq h_1}.$
Therefore, A1-A3 hold and we have the following proposition. 
\begin{proposition}
 For the $m$-out-of-$n$ bootstrap method, if $m_n\rightarrow \infty$ and $m_n/n\rightarrow 0$ as $n\rightarrow\infty$,
then $\theta^*_n\xrightarrow[\mathbf{P}]{\mathbf{P}_{\cal X}} \theta_0$ and conditional on the data
$$\left(\begin{array}{c} \sqrt{m_n}(\alpha_n^*-\hat\alpha_n) \\ \sqrt{m_n}(\beta_n^*-\hat\beta_n)\\
m_n(\zeta_n^*-\hat\zeta_n)\end{array}\right) \leadsto sargmax_{h\in\mathbb{R}^{2p+1}} U(h).
$$
\end{proposition}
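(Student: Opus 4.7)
The plan is to invoke Theorems \ref{theorem1} and \ref{theorem3} conditionally on the observed data $\mathcal{X}$, taking $\mathbb{Q}_n$ to be the empirical distribution of $\{(\tilde T_i,\delta_i,Z_i)\}_{i=1}^n$ and the triangular-array sample size to be $m_n$. The whole argument reduces to verifying that conditions A1--A3 hold in $\mathbf{P}$-probability for this choice of $\mathbb{Q}_n$; once those are in hand, Theorem \ref{theorem1} yields $\theta_n^*\xrightarrow[\mathbf{P}]{\mathbf{P}_{\mathcal{X}}}\theta_0$ and Theorem \ref{theorem3}, applied conditionally on $\mathcal{X}$, gives the joint weak limit with the centering $\theta_n=\hat\theta_n$.

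Under this choice, the identities \eqref{Scls}--\eqref{Acls} identify $s_{n,k}$ and $A_{n,k}$ with the empirical quantities $S_{n,k}$ and $\frac{1}{n}\sum_i\int_0^{\cdot} Z_i^{\otimes k}\,dN_i$ built from the original sample, so the population maximizer $\theta_n$ coincides with the MPLE $\hat\theta_n$. Condition A1 is then the usual uniform Glivenko--Cantelli statement over $(t,\gamma)\in[0,\tau]\times(\Theta_\alpha\cup\Theta_\beta)$, which holds because $Z$ is uniformly bounded and $\Theta_\alpha\cup\Theta_\beta$ is compact. For A2, the centered integral is, up to the factor $1/(nh)$, the empirical process $\sum_{i=1}^n\int_{\hat\zeta_n}^{\hat\zeta_n+h}[Z_i(s)-\bar z_n(s;\hat\alpha_n 1_{s\le\hat\zeta_n}+\hat\beta_n 1_{s>\hat\zeta_n})]\,dN_i(s)$, which is shown to be uniformly negligible over $|h|>h_n/m_n$ by a local maximal inequality; the two-sided bounds on $h^{-1}[A_{n,0}(\hat\zeta_n+h)-A_{n,0}(\hat\zeta_n)]$ then follow from the fact, noted just after the statement of A2 in Section \ref{SectionGeneral}, that continuity of $\lambda_0$ and $G(\cdot|z)$ at $\zeta_0$ gives strictly positive one-sided slopes of $A_0$ there. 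Condition A3 is the direct computation in the paragraph immediately preceding the statement: the bootstrap expectation equals $(m_n/n)\sum_i\int_{\hat\zeta_n+h_1/m_n}^{\hat\zeta_n+h_2/m_n}e^{\imath t(\hat\alpha_n-\hat\beta_n)'Z_i(s)}dN_i(s)$, which by consistency of $\hat\theta_n$ and continuity of $s_0(\zeta_0;\cdot)\lambda_0(\zeta_0)$ converges in $\mathbf{P}$-probability to $s_0(\zeta_0;\gamma_0+\imath t(\alpha_0-\beta_0))\lambda_0(\zeta_0)(h_2-h_1)$.

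The step that requires the most care, and the only place where $m_n=o(n)$ is essential, is the uniform control in A2 and A3 at scale $1/m_n$: the intervals $[\hat\zeta_n,\hat\zeta_n+h]$ are shrinking, so the global Glivenko--Cantelli bound used for A1 no longer suffices, but because $m_n/n\to 0$ the window $h/m_n$ with $|h|>h_n/m_n$ contains on the order of $n/m_n\to\infty$ original observations, so localized concentration around $\zeta_0$ goes through. Once A1--A3 are verified conditionally on $\mathcal{X}$, applying Theorems \ref{theorem1} and \ref{theorem3} in probability conditionally yields $\theta_n^*\xrightarrow[\mathbf{P}]{\mathbf{P}_{\mathcal{X}}}\theta_0$ together with the conditional weak-convergence of $(\sqrt{m_n}(\alpha_n^*-\hat\alpha_n)',\sqrt{m_n}(\beta_n^*-\hat\beta_n)',m_n(\zeta_n^*-\hat\zeta_n))'$ to $\mathrm{sargmax}_{h\in\mathbb{R}^{2p+1}}U(h)$, completing the proof.
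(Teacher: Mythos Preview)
Your proposal is correct and follows essentially the same route as the paper: take $\mathbb{Q}_n$ to be the empirical distribution, identify $\theta_n=\hat\theta_n$ via \eqref{Scls}--\eqref{Acls}, and verify A1--A3 conditionally so that Theorems \ref{theorem1} and \ref{theorem3} apply. Your discussion of why $m_n=o(n)$ is needed for the local-scale convergence in A2 and A3 is a helpful elaboration that the paper leaves implicit, and your remark on the two-sided bounds in A2 (which the paper omits in Section 4.2.2) fills a small gap there.
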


\section{Simulation}\label{SectionSim}
In this section we compare the finite sample performance of the different bootstrap schemes introduced in Section \ref{sectionbs}. 
We consider a single covariate $Z$ which has a Bernoulli distribution with parameter $0.5$. That is, a subject is equally likely to be assigned to the control group ($Z=0$) and the treatment group ($Z=1$). 
The model parameter values are set at $\alpha_0=0,\beta_0=-1.5,$ and $\zeta_0=1.$ 
The baseline hazard rate is assumed constant and taken as $\lambda_0(t)=0.5$. Note that, at $\zeta_0=1$, the cumulative mortality for the control group is $1-\exp(-0.5)=39\%$. The censoring times are chosen to be independent and follow an exponential distribution with rate parameter $0.1$ and truncated at $\tau=4$.
This results in a censoring rate of about $36\%$.
 Figure \ref{Figure1} gives the Kaplan-Meier curves of a simulated sample of size $n=1000$, which clearly shows the  lag feature around the change-point time $\zeta_0=1$. 

\begin{figure}[h]
\centering
    \includegraphics[width=0.45\textwidth]{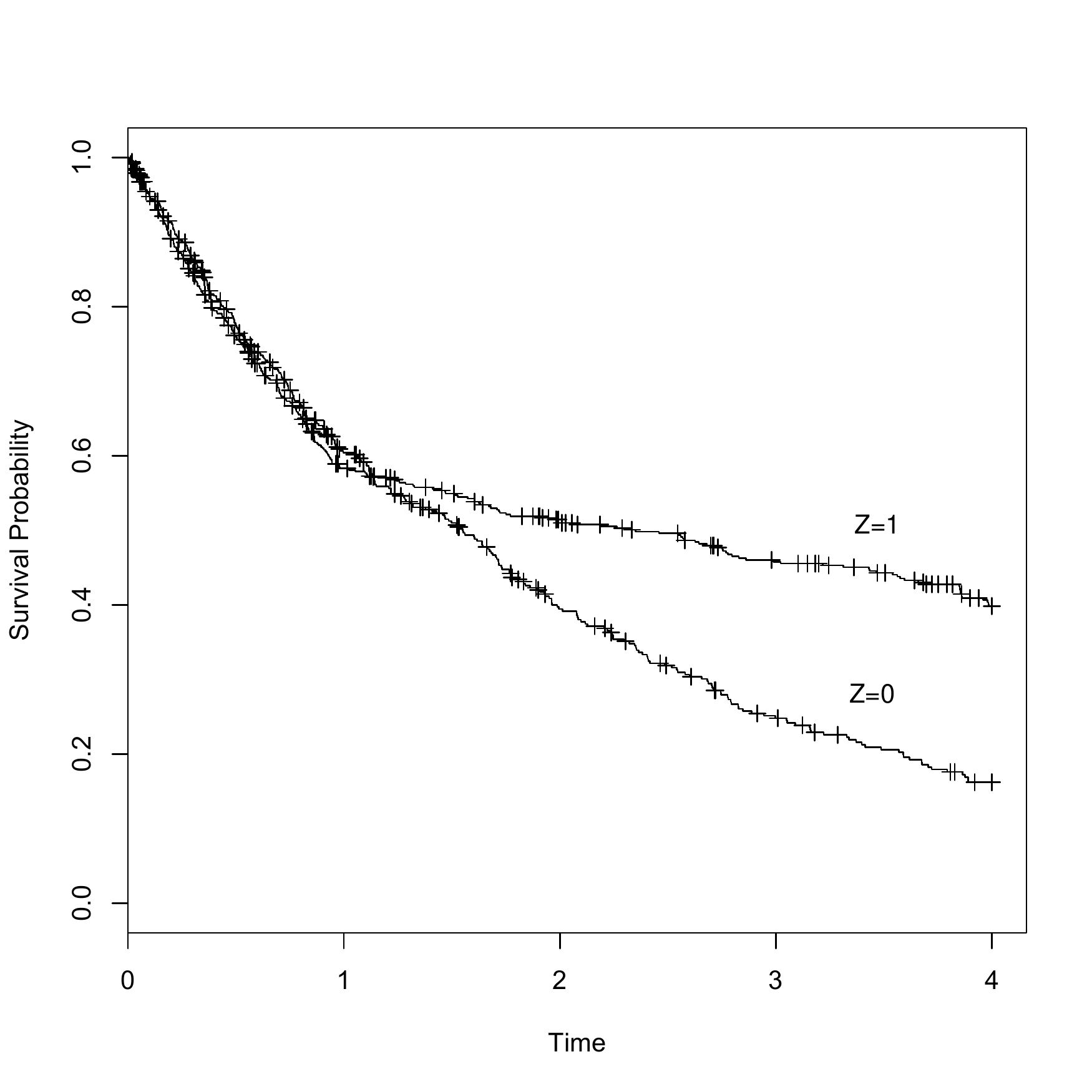}
  \caption{Kaplan-Meier curves of a simulated sample.}
  \label{Figure1}
\end{figure} 

We consider 1000 random samples of sample sizes $n=200, 500,1000$. For each simulated sample and for each bootstrap method,  $1000$ bootstrap replicates are generated to approximate the bootstrap distribution. The  conditional censoring distribution estimator $\hat G(\cdot |Z)$ is taken as the Kaplan-Meier estimators for each group ($Z=0$ and $Z=1$). For the smooth bootstrap, we use a kernel density estimator based on the Gaussian kernel and choose the so-called ``normal-reference rule'' \citep{Scott1992}. For the $m$-out-of-$n$ bootstrap, we try three different choices of $m_n$: $n^{4/5}$, $n^{9/10}$ and $n^{14/15}$. To reduce the computation complexity, we restrict $\zeta\in[0.5,1.5]$ when calculating the MPLE of $\zeta_0$.

Table \ref{t1} provides the simulation results of coverage proportions and average lengths of nominal $95\%$ CIs for $\zeta_0$ that are estimated using different bootstrap methods.
The first column (``Smooth'') gives the results of smooth bootstrap Method \ref{methodsmooth1}, the second column (``Classical'') corresponds to classical bootstrap Method \ref{methodclassic}, the third column (``Conditional'') corresponds to conditional bootstrap Method \ref{methodcond1}, and the last three columns correspond to the $m$-out-of-$n$ bootstrap with different choices of $m_n$. 
The results of Methods \ref{methodcond2} and \ref{methodsmooth2} are similar to those of Method \ref{methodcond1} and Method \ref{methodsmooth1} and therefore are not presented. 

\begin{table}[h]
\begin{center}
\begin{tabular}{|c|c|c|c|c|c|c|c|c|}
\hline
&~&Smooth&Classical&Conditional&$m_n=n^{4/5}$&$n^{9/10}$&$n^{14/15}$\\
\hline $n=300$
& Coverage & 0.96 & 0.88 & 0.87 & 0.97 &0.93 &0.91\\
 & Length    & 0.64 & 0.56 & 0.54 & 0.80 &0.69 &0.64\\
\hline $500$
& Coverage & 0.96 & 0.89 & 0.89 & 0.98 &0.96 &0.94\\
 &  Length   &0.46  & 0.48 & 0.46 & 0.77 &0.62 &0.58\\
\hline $1000$
& Coverage & 0.95 & 0.89 & 0.88 & 0.99 &0.97 &0.94\\
 &  Length  & 0.24 & 0.30 & 0.29 & 0.69 &0.48 &0.42\\
\hline
\end{tabular}
 \end{center}
\caption{The estimated coverage rates and average lengths of nominal 95\% CIs for $\zeta_0$.}
\label{t1}
\end{table}

We can see from Table \ref{t1} that the smooth bootstrap outperforms all the others in terms of coverage rate and average length. The $m$-out-of-$n$ bootstrap also performs reasonably well, but the average length is bigger than that of the smooth bootstrap. This may be due to the fact that the $m$-out-of-$n$ bootstrap method converges at rate $m_n^{-1}$ instead of $n^{-1}$.   Table \ref{t1}  also shows that the commonly used bootstrap Methods \ref{methodclassic} and \ref{methodcond1} provide under-coverage, which indicates their inconsistency.

To further illustrate the performance of different bootstrap methods, we compare the histograms of the distribution of $n(\hat\zeta_n-\zeta_0)$, obtained from 1000 random samples of sample size 1000, and its bootstrap estimates from a single sample. All bootstrap estimates are based on 1000 bootstrap replicates. It is clearly shown in Figure \ref{Figure2}  that the smooth bootstrap (top right panel) provides the best approximation to the actual distribution obtained from 1000 random samples (top left panel).  
\begin{figure}[h]
\centering
    \includegraphics[width=0.75\textwidth]{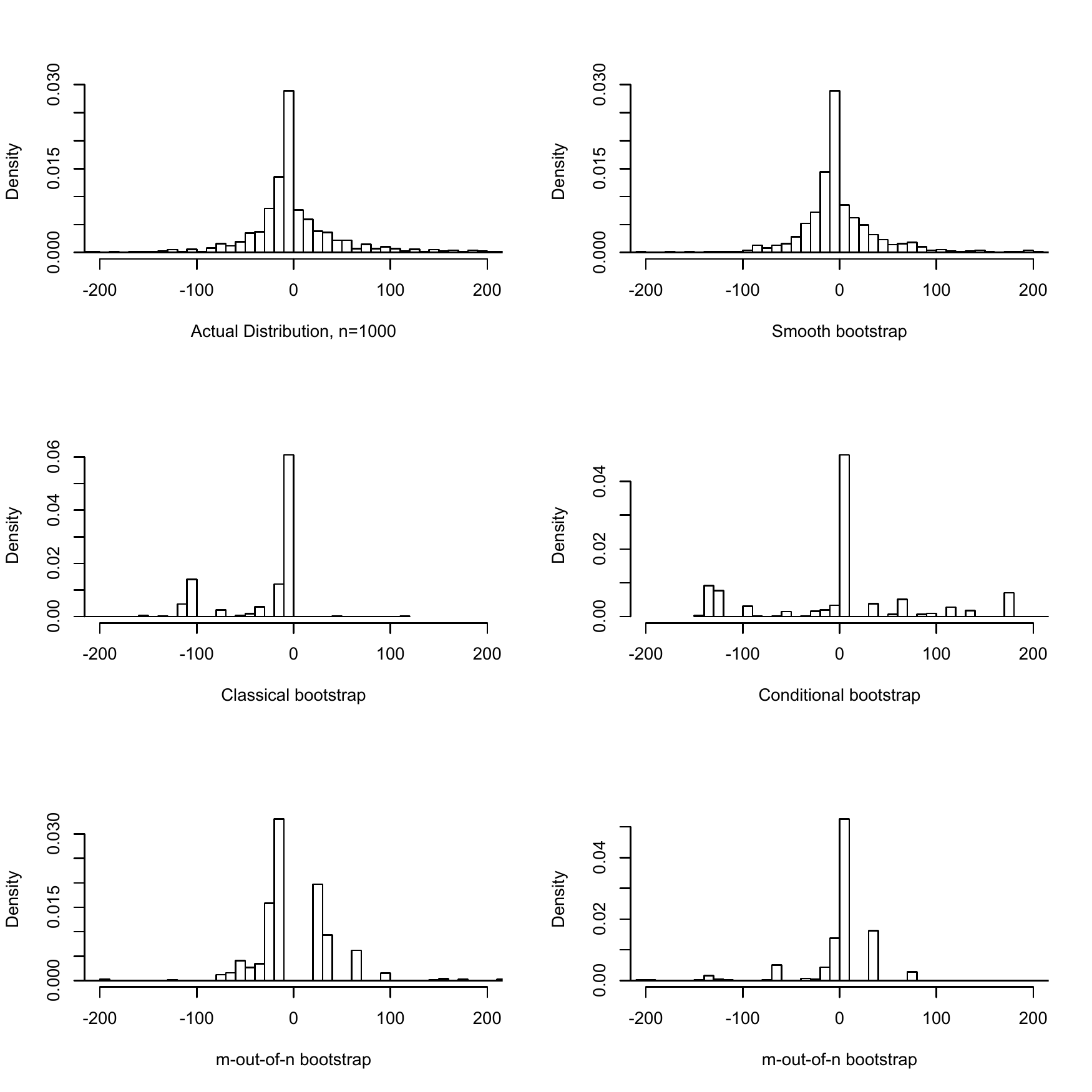}
  \caption{Histograms of the distribution of $n(\hat\zeta_n-\zeta_0)$ and its bootstrap estimates. The top left panel shows the distribution of $n(\hat\zeta_n-\zeta_0)$ obtained from 1000 random samples; the top right panel shows the distribution of $n(\hat\zeta^*_n-\hat\zeta_n)$ for the smooth bootstrap (Method \ref{methodsmooth1}), the middle left the classical bootstrap (Method \ref{methodclassic}), the middle right  the conditional bootstrap (Method \ref{methodcond1}). The bottom left panel shows the distribution of $m_n(\hat\zeta^*_n-\hat\zeta_n)$ with $m_n=n^{9/10}$, and the bottom right  for $m_n=n^{14/15}$.}
  \label{Figure2}
\end{figure}

\section{Proof of Theorems}\label{SectionProof}
This section contains proofs of Theorems \ref{theorem1}, \ref{theorem2} and \ref{theorem3}.
\subsection{Proof of Theorem \ref{theorem1}}

We first show that $\theta_n\rightarrow \theta_0$. Write  $\Theta =\Theta_\alpha\times \Theta_\beta\times [0,\tau]$.
By the definition, $\theta_n=(\alpha_n',\beta_n',\zeta_n)'$ is the smallest maximizer of
\begin{align*} 
X_0(\theta):=X_0(\alpha,\beta,\zeta)
:=&
 \int_0^\tau \big((\alpha-\alpha_0) 1_{s\leq \zeta} + (\beta-\beta_0) 1_{s> \zeta} \big)' d A_{n,1}(s)\notag\\
&-  \int_0^\tau\log \frac{s_{n,0} (s; \alpha 1_{s\leq \zeta} + \beta 1_{s> \zeta})}
{s_{n,0} (s; \alpha_0 1_{s\leq \zeta_0} + \beta_0 1_{s> \zeta_0})}
dA_{n,0}(s).
\end{align*}
 Thus, $X_0(\theta_n)\geq 0$. By condition A1, we have   
\begin{eqnarray*} 
|X_0(\theta_n)-X(\theta_n)|\leq \sup_{\theta\in\Theta}|X_0(\theta)-X(\theta)|\rightarrow 0,
\end{eqnarray*}
where 
\begin{align}\label{X}
X(\theta) =  \int_0^\tau \left((\alpha-\alpha_0)'s_1(s;\alpha_0) - s_{0}(s; \alpha_0)
\log {r(s; \alpha,\alpha_0)}\right)1_{s\leq \zeta\wedge\zeta_0} d\Lambda_0(s)\notag\\
+  \int_0^\tau \left((\alpha- \beta_0)'s_1(s;\beta_0) -s_{0}(s; \beta_0) 
\log {r(s; \alpha, \beta_0)}\right)1_{ \zeta_0<s\leq\zeta} d\Lambda_0(s)\notag\\
+ \int_0^\tau \left((\beta-\alpha_0)'s_1(s;\alpha_0)  - s_0(s;\alpha_0)
\log {r(s; \beta, \alpha_0)}\right)1_{\zeta<s \leq \zeta_0}d\Lambda_0(s)\notag\\
+  \int_0^\tau \left((\beta-\beta_0)'s_1(s;\beta_0)  - s_{0}(s; \beta_0)
\log {r(s; \beta, \beta_0)}\right)1_{s> \zeta\vee\zeta_0} d\Lambda_0(s).
\end{align}
Then by the continuous mapping theorem, the conclusion follows from the fact that $\theta_0$ is the unique maximizer of $X(\theta)$ and that $X(\theta_0)=0$.   

\smallskip
We now show the consistency of $\hat\theta^*_n$. 
For $\gamma_n\in\{\alpha_n,\beta_n\}$ and $\gamma\in\Theta_\alpha\cup\Theta_\beta$, let
\begin{eqnarray*}
w_n(t;\gamma) := \int_0^t (\gamma-\gamma_n)' dM_{n,1}(s) -  \int_0^t \log {r_{n}(s; \gamma, \gamma_n)}dM_{n,0}(s),
\end{eqnarray*}
where 
\begin{align}\label{labelM}
M_{n,k}(t) =\frac{1}{m_n}\sum_{i=1}^{m_n} \int_0^t Z_{n,i}^{\otimes k}(s) dN_{n,i}(s)-A_{n,k}(t), ~\mbox{ for } k =0,1.
\end{align}
For any $\epsilon_1>0$, we have
\begin{align*}
\mathbb{Q}_{n}\Big( \sup_{t\in [0,\tau]} |w_n(t;\gamma)|  \geq 2\epsilon_1\Big)
&\leq  \mathbb{Q}_{n}\Big( \sup_{t\in [0,\tau]} 
\Big| \int_0^t (\gamma-\gamma_n)' dM_{n,1}(s)\Big|^2  \geq \epsilon^2_1\Big)\notag\\
&+\mathbb{Q}_{n}\Big( \sup_{t\in [0,\tau]} 
\Big|\int_0^t \log {r_{n}(s; \gamma, \gamma_n)}dM_{n,0}(s)\Big|^2  \geq \epsilon^2_1\Big).\notag
\end{align*}
Then by Lenglart's inequality for c\'adl\'ag processes  \citep[p.35]{Jacob}, we have that for  
$\epsilon_2>0$, there exists a  constant $B >0$ such that
\begin{eqnarray}\label{inequality1}
&& \mathbb{Q}_{n}\Big( \sup_{t\in [0,\tau]} |w_n(t;\gamma)|  \geq 2\epsilon_1\Big)\notag\\
  & \leq& 2\Big(\frac{\epsilon_2}{\epsilon_1} + \frac{B}{\epsilon^2_1m_n}\Big)
 +   \mathbb{Q}_{n}\Big(\frac{1}{m_n^2}\sum_{i=1}^{m_n} 
 \int_0^t  \big((\gamma-\gamma_n)' Z_{n,i}(s)\big)^{2} dN_{n,i}(s)>\epsilon_2\Big)\notag\\
&& + 
  \mathbb{Q}_{n}\biggr(\frac{1}{m_n^2} \sum_{i=1}^{m_n} 
 \int_0^t  \big(\log {r_{n}(s; \gamma,\gamma_n)}\big)^{2} dN_{n,i}(s)>\epsilon_2
 \biggr)\nonumber\\ 
 &\leq & 2\Big(\frac{\epsilon_2}{\epsilon_1^2} + \frac{B}{\epsilon^2_1m_n}\Big)
+ \frac{2B}{\epsilon_2m_n^2},
\end{eqnarray}
where the last inequality follows from  Chebyshev inequality. Since $\epsilon_1$ and $\epsilon_2$ are arbitrary, it follows that
\begin{equation}\label{superw}
\sup_{t\in [0,\tau]} |w_n(t;\gamma)|  \xrightarrow[]{\mathbf{P}} 0, ~\hbox{ for }\gamma\in\Theta_\alpha\cup\Theta_\beta.
\end{equation}
On the other hand, by Lemma \ref{lemma1} in the Appendix, we have 
\begin{equation}\label{convergew}
\sup_{t\in [0,\tau]} \biggr|\frac{1}{m_n} \sum_{i=1}^{m_n}
 \int_0^t \left(\log{R_{n}(s; \gamma,\gamma_n)} 
-\log {r_{n}(s; \gamma, \gamma_n)} \right)
dN_{n,i}(s)\biggr|  \rightarrow 0.
\end{equation}
Thus, \eqref{superw} and \eqref{convergew} imply that for  $\theta=(\alpha',\beta',\zeta)'$ with  $\alpha\in\Theta_\alpha$ and $\beta\in\Theta_\beta$,
$$ \sup_{\zeta\in [0,\tau]} |X_n(\theta) - X^*_{n}(\theta)  |  \xrightarrow[]{\mathbf{P}} 0,$$
where $X_n(\theta) = {m_n}^{-1}( l^*_n(\theta) - l^*_n(\theta_n))$ and
\begin{align*}
X^*_{n}(\theta) =\,&
\mathbb{Q}_n\biggr( \frac{1}{m_n} \sum_{i=1}^{m_n} \int_0^\tau \left((\alpha-\alpha_n)'Z_{n,i}(s) -
 \log {r_{n}(s; \alpha, \alpha_n)}\right)1_{s\leq \zeta\wedge\zeta_n} dN_{n,i}(s)\biggr)\\
&+ \mathbb{Q}_n\biggr(\frac{1}{m_n} \sum_{i=1}^{m_n} \int_0^\tau \left((\alpha-\beta_n)'Z_{n,i}(s) -
 \log {r_{n}(s; \alpha, \beta_n)}\right)1_{\zeta_n<s\leq\zeta} dN_{n,i}(s)\biggr)\\
&+\mathbb{Q}_n\biggr(\frac{1}{m_n}  \sum_{i=1}^{m_n} \int_0^\tau \left((\beta-\alpha_n)'Z_{n,i}(s) 
- \log {r_{n}(s; \beta, \alpha_n)}\right)1_{\zeta<s \leq\zeta_n} dN_{n,i}(s)\biggr)\\
&+\mathbb{Q}_n\biggr(\frac{1}{m_n}  \sum_{i=1}^{m_n} \int_0^\tau \left((\beta-\beta_n)'Z_{n,i}(s)
- \log {r_{n}(s; \beta, \beta_n)}\right)1_{s> \zeta\vee\zeta_n} dN_{n,i}(s)\biggr).
\end{align*}
A similar argument as in the proof of Theorem II.1 in \cite{AndersonGill} implies that the convergence of $ |X_n(\theta) - X^*_{n}(\theta)  | $ is uniform in $\theta\in\Theta$.
Then by the result that $\theta_n\rightarrow\theta_0$ and condition A1, we have that 
$$\sup_{\theta\in\Theta}|X_n(\theta) - X(\theta)|\xrightarrow[]{\mathbf{P}} 0,$$
where $X(\theta)$ is defined as in \eqref{X}.
Apply Corollary 3.2.3 (ii) in \cite{Vaart} and we obtain the desired convergence result.

\subsection{Proof of Theorem \ref{theorem2}}
 For $\alpha=(\alpha_1,\cdots,\alpha_p)'$ and $\beta=(\beta_1,\cdots,\beta_p)'$, let  
$Conv(\alpha,\beta)$ $:= \{\gamma=(\gamma_1,\cdots,\gamma_p)': 
\beta_i\wedge\alpha_i\leq\gamma_i\leq\beta_i\vee\alpha_i , i=1,\cdots,p\}$.
By the definition of MPLE, we have $0\leq X_n(\theta^*_n)-X_n(\theta_n)$. 
Take Taylor's expansion  of $X_n(\theta^*_n)-X_n(\theta_n)$ with respect to $\alpha_n$ and $\beta_n$ and we obtain that there exist $\tilde\alpha_n\in Conv(\alpha^*_n, \alpha_n)$ and  $\tilde\beta_n \in Conv(\beta^*_n,\beta_n)$ such that 
\begin{align*}
0
\,\leq\,&~ \sqrt{m_n}|\alpha_n^*-\alpha_n| I_1+\sqrt{m_n}|\beta_n^*-\beta_n|I_2\\
& - \frac{{m_n}}{2}|\alpha_n^*-\alpha_n|^2
\Big|\frac{1}{{m_n}} \sum_{i=1}^{m_n}\int_0^\tau Q_n(s;\tilde \alpha_n)
1_{s\leq \zeta_n}dN_{n,i}(s) \Big|\\
& - \frac{{m_n}}{2}|\beta_n^*-\beta_n|^2
\Big|\frac{1}{{m_n}} \sum_{i=1}^{m_n}\int_0^\tau Q_n(s;\tilde\beta_n)1_{s> \zeta_n}dN_{n,i}(s) \Big| \\
&+{m_n}|\zeta_n-\zeta^*_n| a_n+{m_n}|\zeta_n-\zeta^*_n| b_n,
\end{align*}
where
\begin{align*}
I_1 = &~ \Big|\frac{1}{\sqrt{m_n}}
 \sum_{i=1}^{m_n} \int_0^\tau \left(Z_{n,i}(s) - 
 {\bar Z_n(s;\alpha_n)}\right)1_{s\leq \zeta_n} d N_{n,i}(s)\Big| ,\\
 I_2=&~\Big|\frac{1}{\sqrt{m_n}}\sum_{i=1}^{m_n} \int_0^\tau \left(Z_{n,i}(s) 
- {\bar Z_n(s;\beta_n)}\right)1_{s> \zeta_n} d N_{n,i}(s)\Big|,
\end{align*}
\begin{align*}
a_n=&\frac{1}{{m_n}|{\zeta_n-\zeta^*_n|}}  \sum_{i=1}^{m_n} 
\int_0^\tau\left( (\alpha_n^*-\beta_n^*)'Z_{n,i}(s) - \log{R_{n}(s;\alpha^*_n,\beta_n^*)}\right) 1_{{\zeta_n}<s\leq{\zeta^*_n}}d N_{n,i}(s),\notag\\
b_n=& \frac{1}{{m_n}|{\zeta_n-\zeta^*_n|}} \sum_{i=1}^{m_n} \int_{0}^\tau\left( (\beta_n^*-\alpha_n^*)'Z_{n,i}(s) - \log{R_{n}(s;\beta^*_n,\alpha_n^*)}\right)  1_{{\zeta^*_n}<s\leq{\zeta_n}}d N_{n,i}(s),
\end{align*}
and $Q_n$ is defined as
\begin{equation*}\label{Qn}
Q_n(s;\alpha): = \frac{S_{n,2}(s; \alpha)}{S_{n,0}(s; \alpha)} - {\bar Z_n(s; \alpha)} ^{\otimes 2}.
\end{equation*}
Lemma \ref{lemma1} implies that $Q_n(s;\alpha)$ converges uniformly to $Q(s;\alpha)$, where
$Q(s;\alpha) = {s_{2}(s; \alpha)}/{s_{0}(s; \alpha)} -{\bar z(s; \alpha)}^{\otimes 2}$ is as defined in \eqref{QQQ}.
Let  $\sigma_0(A)$ denote the smallest eigenvalue of a matrix A.   Since $\sigma_0$ is a continuous function on $\mathbb{R}^{p\times p}$, we have $\sigma_0(Q_n)$ converges  to $\sigma_0(Q)$ and therefore $\sigma_0(Q_n(s;\tilde\alpha_n))$ and $\sigma_0(Q_n(s;\tilde\beta_n))$  are positive for all large $n$. Then by the positive definiteness of $Q_n$, we obtain 
\begin{align}
0
\,\leq\,& \sqrt{m_n}|\alpha_n^*-\alpha_n|I_1+\sqrt{m_n}|\beta_n^*-\beta_n|I_2
 - \frac{{m_n}}{2}|\alpha_n^*-\alpha_n|^2 I_3  - \frac{{m_n}}{2}|\beta_n^*-\beta_n|^2 I_4\notag \\
&+{m_n}|\zeta_n-\zeta^*_n| a_n+{m_n}|\zeta_n-\zeta^*_n| b_n, \label{dif5}
\end{align}
where 
\begin{align*}
I_3& =~\frac{1}{{m_n}} \sum_{i=1}^{m_n}\int_0^\tau \sigma_0(Q_n(s;\tilde \alpha_n))
1_{s\leq \zeta_n}dN_{n,i}(s) ,\\
I_4& = ~\frac{1}{{m_n}} \sum_{i=1}^{m_n}\int_0^\tau \sigma_0(Q_n(s;\tilde\beta_n))1_{s> \zeta_n}dN_{n,i}(s).
\end{align*}

We consider the quantities in \eqref{dif5} one by one. For $I_1$, we have 
\begin{eqnarray}\label{first}
I_1&\leq&\biggr|\frac{1}{\sqrt{m_n}} \sum_{i=1}^{m_n} 
\int_0^\tau \left(Z_{n,i}(s) - {\bar z_n(s;\alpha_n)}\right)
1_{s\leq \zeta_n} d N_{n,i}(s)\biggr| \notag\\
&&+\biggr|\frac{1}{\sqrt{m_n}} \sum_{i=1}^{m_n} 
\int_0^\tau \left( {\bar z_n(s;\alpha_n)}- {\bar Z_n(s;\alpha_n)}\right)
 1_{s\leq \zeta_n} d N_{n,i}(s)\biggr| .
 \end{eqnarray}
By the definition of $\alpha_n$ and $\zeta_n$, the first term in \eqref{first} equals 
$$ \sqrt{m_n}\Big| \int_0^{\zeta_n} dM_{n,1}(s) - \int_0^{\zeta_n}{\bar z_n(s;\alpha_n)}d M_{n,0}(s)\Big|,$$
where $M_{n,k}, k=0,1,$ is defined as in \eqref{labelM}.
Then similarly as in the derivation of \eqref{inequality1}, Lenglart's inequality implies that the above quantity is $O_{\mathbf{P}}(1).$ Consider the second term in \eqref{first}.
From the proof of Lemma \ref{lemma1}, we know that $\{Y_{n,i}(t)Z_{n,i}^{\otimes k}e^{\gamma'Z_{n,i}(t)}\}$ is manageable. Then by the boundedness property of $Z$ and inequality (7.10) in page 38 of Pollard (1990), there exists constant $B>0$ such that 
\begin{eqnarray*}
\mathbf{E} \Big[\sup_{s\in[0,\tau]}\big| {\bar z_n(s;\alpha_n)}
 - {\bar Z_n(s;\alpha_n)}\big| \Big]
\leq   \frac{B}{\sqrt{m_n}} ,
\end{eqnarray*}
which implies that the second term in \eqref{first} is also $O_{\mathbf{P}}(1).$ Therefore,  
 \begin{eqnarray*}
I_1 = O_{\mathbf P}(1). 
\end{eqnarray*}
Similarly, we obtain that
$I_2 = O_{\mathbf{P}}(1).$

Consider $I_3$ and  we have that 
\begin{eqnarray*}
&& \biggr|I_3
- \int_0^{\zeta_n} \sigma_0(Q(s;\tilde \alpha_n))dA_{n,0}(s) \biggr| \\
 &\leq &
\biggr|\frac{1}{m_n} \sum_{i=1}^{m_n}
\int_0^{\zeta_n} \left(\sigma_0(Q_n(s;\tilde \alpha_n))
- \sigma_0(Q(s;\tilde \alpha_n)) \right)dN_{n,i}(s)  \biggr| \\ 
&&+ \sup_{\zeta\in [0,\tau]} \biggr| \int_0^{\zeta} \sigma_0(Q(s;\tilde \alpha_n))dM_{n,0}(s) \biggr|.
\end{eqnarray*}
The first term in the right hand side of the above display converges to 0 due to the convergence of $\sigma_0(Q_n)$. The second term also converges to 0 in probability by  Lenglart's inequality. Thus, together with condition A1 and the convergence of $\theta_n$ to $\theta_0$, we have 
\begin{eqnarray*}
 I_3=(1+o_{\mathbf{P}}(1)) \int_0^{\zeta_0} \sigma_0(Q(s; \alpha_0))
 s_0(s;\alpha_0)d\Lambda_{0}(s) .
 \end{eqnarray*}
 Similarly, 
$ I_4=(1+o_{\mathbf{P}}(1)) \int_{\zeta_0}^\tau \sigma_0(Q(s; \beta_0))
 s_0(s;\beta_0)d\Lambda_{0}(s) . $
 
Consider $a_n$ in \eqref{dif5} and 
we have that 
\begin{align}\label{approxan}
&\Big| {a_n}-\frac{1_{{\zeta_n}<{\zeta^*_n}}}{|{\zeta_n-\zeta^*_n|}}  \Big( 
\int_{\zeta_n}^{\zeta_n^*}(\alpha_n^*-\beta_n^*)' d A_{n,1}(s) - \int_{\zeta_n}^{\zeta_n^*}
\log{r_{n}(s;\alpha^*_n,\beta_n^*)} d A_{n,0}(s)\Big)\Big| \\
&\leq
\frac{1_{{\zeta_n}<{\zeta^*_n}}}{|{\zeta_n-\zeta^*_n|}} 
\Big| \int_{\zeta_n}^{\zeta^*_n} (\alpha_n^*-\beta_n^*)'dM_{n,1}(s)\Big|
+ \frac{1_{{\zeta_n}<{\zeta^*_n}}}{|{\zeta_n-\zeta^*_n|}} 
\Big| \int_{\zeta_n}^{\zeta^*_n}
 \log{r_{n}(s;\alpha^*_n,\beta_n^*)} d M_{n,0}(s)\Big|\notag\\
 & ~~+ \frac{1_{{\zeta_n}<{\zeta^*_n}}}{|{\zeta_n-\zeta^*_n|}} 
\sup_{s\in [\zeta_n,\zeta_n^*]}\big|\log{R_{n}(s;\alpha^*_n,\beta_n^*)} 
- \log{r_{n}(s;\alpha^*_n,\beta_n^*)} \big|
  \Big| \int_{\zeta_n}^{\zeta^*_n}d M_{n,0}(s)\Big|\notag\\
& ~~+ 
 \frac{1_{{\zeta_n}<{\zeta^*_n}}}{\zeta^*_n-\zeta_n}
\sup_{s\in [\zeta_n,\zeta_n^*]}\big|\log{R_{n}(s;\alpha^*_n,\beta_n^*)} 
- \log{r_{n}(s;\alpha^*_n,\beta_n^*)} \big|
 \int_{\zeta_n}^{\zeta^*_n}d A_{n,0}(s)\notag\\
&=: a_{n,1}+a_{n,2} +a_{n,3} + a_{n,4}.\notag
\end{align}
For $a_{n,k},$ $k=1,2,3,$ by Lenglart's inequality and condition A2, we have that for any positive $\epsilon$ and $\epsilon_{n,j}, j\in {\mathbb Z}^{+}$,  there exists a constant $B>0$ such that 
\begin{eqnarray*}
\mathbb{Q}_n\Big(\sup_{m_n|\zeta_n^*-\zeta_n|>h_n} a_{n,k}>\epsilon \Big) &\leq&  
\sum_{j=1}^\infty \mathbb{Q}_n\Big(\sup_{2^{j-1}h_n\leq m_n|\zeta_n^*-\zeta_n|<2^jh_n} 
a^2_{n,k}>\epsilon^2 \Big)\\
&\leq&\sum_{j=1}^\infty \frac{\epsilon_{n,j}}{\epsilon^2} + \frac{B\rho_2}{\epsilon_{n,j} 2^j h_n},
\end{eqnarray*}
where $\rho_2$ is defined as in condition A2.
Since $\epsilon_{n,j}$ are arbitrary,  it follows that
  $\sup_{m_n|\zeta_n^*-\zeta_n|>h_n}$ $a_{n,k}\xrightarrow[]{\mathbf{P}} 0,$ $k=1,2,3$.
In addition, for $a_{n,4}$, we have 
$$\sup_{m_n|\zeta_n^*-\zeta_n|>h_n} a_{n,4} =o_{\mathbf{P}}(1)
\sup_{m_n|\zeta_n^*-\zeta_n|>h_n} 
 \frac{1_{{\zeta_n}<{\zeta^*_n}}}{\zeta^*_n-\zeta_n} \int_{\zeta_n}^{\zeta^*_n}d A_{n,0}(s)
 \xrightarrow[]{\mathbf{P}} 0.$$
Thus $\eqref{approxan}  \xrightarrow[]{\mathbf{P}} 0$. From condition A2 and Theorem \ref{theorem1}, we have
\begin{align*}
&\sup_{m_n|\zeta_n^*-\zeta_n|>h_n}
\frac{1_{{\zeta_n}<{\zeta^*_n}}}{|{\zeta_n-\zeta^*_n|}}  \biggr( 
\int_{\zeta_n}^{\zeta_n^*}(\alpha_n^*-\beta_n^*)' d A_{n,1}(s) - \int_{\zeta_n}^{\zeta_n^*}
\log{r_{n}(s;\alpha^*_n,\beta_n^*)} d A_{n,0}(s)\biggr) \\
=&\sup_{m_n|\zeta_n^*-\zeta_n|>h_n}
\frac{1_{{\zeta_n}<{\zeta^*_n}}}{|{\zeta_n-\zeta^*_n|}}  
\int_{\zeta_n}^{\zeta^*_n}\left( (\alpha_0-\beta_0)'{\bar z_n(s; \beta_0)}
- \log{r_{n}(s;\alpha_0,\beta_0)}\right)d A_{n,0}(s)+o_{\mathbf{P}}(1).
\end{align*}
Since $(\alpha_0-\beta_0)'{\bar z_n(s; \beta_0)} - \log{r_{n}(s;\alpha_0,\beta_0)}<0$ and it is continuous in a neighborhood of $\zeta_0$,  there exists a constant 
$\kappa_0<0$ such that 
 for any sequence $h_n\rightarrow\infty$ and $h_n/m_n\rightarrow 0$,
 \begin{eqnarray*}
0< -\kappa_0\rho_1\leq  \inf_{ {m_n}|\zeta_n-\zeta^*_n| >h_n}\{-a_n\} 
\leq  \sup_{ {m_n}|\zeta_n-\zeta^*_n| >h_n}\{-a_n\} \leq -\rho_2\kappa_0
\end{eqnarray*}
holds with probability tending to 1 as $n\rightarrow\infty$.
Similarly, we have 
 \begin{eqnarray*}
0<-2\kappa_0\rho_1\leq \inf_{ {m_n}|\zeta_n-\zeta^*_n| >h_n}\{-a_n-b_n\} 
\leq  \sup_{ {m_n}|\zeta_n-\zeta^*_n| >h_n}\{-a_n-b_n\} \leq -2\rho_2\kappa_0
\end{eqnarray*}
holds with probability tending to 1 as $n\rightarrow\infty$.

Combining the above derivations for \eqref{dif5}, we have that
\begin{align}\label{final}
-{m_n}|\zeta_n-\zeta^*_n| (a_n+ b_n) \leq \,&  
\sqrt{m_n}|\alpha_n^*-\alpha_n| O_{\mathbf{P}}(1)
+\sqrt{m_n}|\beta_n^*-\beta_n| O_{\mathbf{P}}(1)\notag\\
& -~ {O_{\mathbf{P}}(1)}m_n(|\alpha_n^*-\alpha_n|^2+|\beta_n^*-\beta_n|^2) \\
=~& O_{\mathbf{P}}(1).\notag
\end{align}
 Thus ${m_n}(\zeta_n-\zeta^*_n)=O_{\mathbf{P}}(1)$. 
As a consequence, \eqref{final} implies that  
\begin{eqnarray}
\sqrt{m_n}|\alpha_n^*-\alpha_n| O_{\mathbf{P}}(1)
+\sqrt{m_n}|\beta_n^*-\beta_n| O_{\mathbf{P}}(1)&&\notag\\
- {O_{\mathbf{P}}(1)}m_n(|\alpha_n^*-\alpha_n|^2+|\beta_n^*-\beta_n|^2) 
&=&  O_{\mathbf{P}}(1).\notag
\end{eqnarray}
This gives that 
 $|\sqrt{m_n}(\alpha^*_n-\alpha_n)|=O_{\mathbf{P}}(1)$ and $|\sqrt{m_n}(\beta^*_n-\beta_n)|=O_{\mathbf{P}}(1)$. 

\subsection{Proof of Theorem \ref{theorem3}}

Let 
\begin{align*}
U_{n,1} =~&  \frac{1}{\sqrt{m_n}}\sum_{i=1}^{m_n} \int_0^{\zeta_n} \left(Z_{n,i}(s) - 
{\bar z_n(s;\alpha_n)}\right) dN_{n,i}(s),\\
U_{n,2}(h_\zeta) =~& \sum_{i=1}^{m_n} \int_{\zeta_n+h_\zeta/m_n}^{\zeta_n}\left( (\beta_n-\alpha_n)'Z_{n,i}(s) - 
\log{r_{n}(s;\beta_n,\alpha_n)} \right)1_{h_\zeta<0}dN_{n,i}(s),\\
U_{n,3}(h_\zeta) =~& \sum_{i=1}^{m_n} \int_{\zeta_n+h_\zeta/m_n}^{\zeta_n}1_{h_\zeta<0}dN_{n,i}(s),\\
\end{align*}
\begin{align*}
U_{n,4}(h_\zeta) =~& \sum_{i=1}^{m_n} \int_{\zeta_n}^{\zeta_n+h_\zeta/m_n} \left( (\alpha_n-\beta_n)'Z_{n,i}(s) - 
\log{r_{n}(s;\alpha_n,\beta_n)}\right) 1_{h_\zeta > 0}dN_{n,i}(s),\\
U_{n,5}(h_\zeta) =~& \sum_{i=1}^{m_n} \int_{\zeta_n}^{\zeta_n+h_\zeta/m_n} 1_{h_\zeta > 0}dN_{n,i}(s),\\
U_{n,6} =~& \frac{1}{\sqrt{m_n}}\sum_{i=1}^{m_n} \int_{\zeta_n}^\tau \left(Z_{n,i}(s) - 
{\bar z_n(s;\beta_n)}\right) dN_{n,i}(s).
\end{align*}
We define processes $ J_n(h_\zeta) := U_{n,3}(h_\zeta) + U_{n,5}(h_\zeta)$ and
\begin{eqnarray*}
 U_n(h) &:=& h_\alpha' U_{n,1} - \frac{1}{2}h_\alpha'\Big(\int_0^{\zeta_0} Q(s; \alpha_0)s_0(s;\alpha_0)\lambda_0(s)ds\Big)h_\alpha + U_{n,2}(h_\zeta)\\
&&+ h_\beta' U_{n,6} - \frac{1}{2}h_\beta'\Big(\int_{\zeta_0}^\tau Q(s; \beta_0)s_0(s;\beta_0)\lambda_0(s)ds\Big)h_\beta + U_{n,4}(h_\zeta).
\end{eqnarray*}

The limit law of $ U_n$ and $J_n$ can be deduced from that  of $U_{n,i}$ and  is given as follows.

\begin{lemma}\label{lemma3}
Let $K\subset \mathbb{R}$ be a compact interval and $\Theta = \tilde \Theta\times K\subset \mathbb{R}^{2p+1}$ a compact set.
Then, under conditions $A1$-$A3$,  $( U_n,  J_n)$ converges weakly in the Skorohod topology to $(U,J)$ in ${\cal D}_\Theta\times {\cal D}_{K}$. 
\end{lemma}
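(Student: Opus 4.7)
The plan is to establish joint weak convergence by treating separately the ``smooth'' components $(U_{n,1},U_{n,6})$ that drive the Gaussian limits $(U_1,U_6)$ and the ``local'' components $(U_{n,2},U_{n,3},U_{n,4},U_{n,5})$ that drive the jump structure around $\zeta_n$, and then showing that the two groups are asymptotically independent because they live on disjoint time regions in the limit.

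First, for $U_{n,1}$ and $U_{n,6}$, I would write each as a stochastic integral against the counting process martingales $dN_{n,i}-Y_{n,i}\exp((\alpha_n 1_{s\le\zeta_n}+\beta_n 1_{s>\zeta_n})'Z_{n,i})\,d\Lambda_{n,0}(s)$, after subtracting a compensator that is $o_{\mathbf P}(1)$ (uniformly on the range of $h_\alpha,h_\beta$ in $\Theta$) by condition $A1$ and Lenglart's inequality, as in the derivation of \eqref{superw}. Rebolledo's martingale CLT then applies: condition $A1$ gives uniform convergence of the predictable variations to $\int_0^{\zeta_0}Q(s;\alpha_0)s_0(s;\alpha_0)\lambda_0(s)\,ds$ and $\int_{\zeta_0}^{\tau}Q(s;\beta_0)s_0(s;\beta_0)\lambda_0(s)\,ds$, and the Lindeberg condition follows from the boundedness of $Z$. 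This yields the Gaussian limits $U_1,U_6$, and joint normality (with independence) is immediate because the two martingales have orthogonal jumps on $[0,\zeta_0)$ and $(\zeta_0,\tau]$.

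Second, for the local pieces, I would rescale time via $h=m_n(s-\zeta_n)$ and study the marked point process $\Xi_n=\sum_i \delta_{(m_n(\tilde T_{n,i}-\zeta_n),\,Z_{n,i}(\tilde T_{n,i}))}\cdot 1_{\delta_{n,i}=1}$ restricted to any bounded window $[h_1,h_2]$ with $0\notin(h_1,h_2)$. Condition $A3$ is precisely the statement that the Laplace/characteristic functional of $\Xi_n$ over such windows converges to that of an inhomogeneous Poisson process with intensity $\gamma^-dh$ on $\{h<0\}$ (mark law $v^-$) and $\gamma^+dh$ on $\{h>0\}$ (mark law $v^+$), since $s_0(\zeta_0;\gamma_0+\imath t(\alpha_0-\beta_0))=\mathbf E[Y(\zeta_0)e^{\gamma_0'Z(\zeta_0)}e^{\imath t(\alpha_0-\beta_0)'Z(\zeta_0)}]$ factorizes into the Poisson intensity times the Fourier transform of $v^\pm$. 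Finite-dimensional convergence of $(U_{n,2},U_{n,3},U_{n,4},U_{n,5})$ to the claimed Poisson functionals then follows by writing each as an integral of a bounded continuous function against $\Xi_n$ (after replacing $\log R_n$ by $\log r_n$, which is uniformly $o_{\mathbf P}(1)$ near $\zeta_0$ by Lemma \ref{lemma1} and condition $A1$, and substituting $\alpha_n,\beta_n\to\alpha_0,\beta_0$ via Theorem \ref{theorem1}).

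Third, to get from finite-dimensional convergence to Skorohod convergence in $\mathcal D_\Theta\times \mathcal D_K$, I would establish tightness. The smooth components are quadratic and linear in $(h_\alpha,h_\beta)$ with coefficients that converge, so tightness is automatic. The jump processes $U_{n,3},U_{n,5}$ are monotone in $|h_\zeta|$ with conditional intensity bounded by $\rho_2$ (condition $A2$), giving tightness via Aldous's criterion; $U_{n,2},U_{n,4}$ are marked sums with uniformly bounded summands, and so inherit tightness from $U_{n,3},U_{n,5}$. Asymptotic independence between the Gaussian block and the jump block comes from deleting observations with $|\tilde T_{n,i}-\zeta_n|\le h_n/m_n$ (for any $h_n\to\infty$, $h_n/m_n\to 0$): their contribution to $U_{n,1},U_{n,6}$ is $O_{\mathbf P}(\sqrt{h_n/m_n})=o_{\mathbf P}(1)$ by condition $A2$, while outside that window the jump processes are zero. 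The main obstacle I expect is the rigorous passage from convergence of characteristic functionals on half-open rectangles to joint weak convergence in the quadrant-limit Skorohod space $\mathcal D_\Theta$; this requires the kind of careful handling of the hybrid ``c\'adl\'ag in $h_\zeta$, continuous in $(h_\alpha,h_\beta)$'' topology developed in \cite{Neuhaus} and \cite{SeijoSen2011}, plus verification that the subsidiary error terms (from replacing $R_n$ by $r_n$, $\theta_n$ by $\theta_0$, and truncating the shrinking neighborhood) are $o_{\mathbf P}(1)$ uniformly on $\Theta$.
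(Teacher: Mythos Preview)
Your outline is broadly correct and reaches the same destination as the paper, but by a genuinely different route, and with one technical point that deserves care.

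The paper does \emph{not} separate the Gaussian and jump blocks and then argue for asymptotic independence. Instead it writes down a single linear combination $W_n$ of all six pieces $(U_{n,1},\ldots,U_{n,6})$ and computes its characteristic function $\mathbf E[e^{\imath t W_n}]$ directly, using only the row-wise independence of $\{(\tilde T_{n,i},\delta_{n,i},Z_{n,i})\}$ under $\mathbb Q_n$. The product over $k$ factorizes, a Taylor expansion of the exponential handles the $m_n^{-1/2}$-scale integrals (this is a Lindeberg-type calculation, not a martingale CLT), and condition $A3$ is plugged in verbatim for the local integrals. The limiting characteristic function then visibly splits as a product of the Gaussian piece and the compound-Poisson pieces, so independence is automatic --- no truncation argument is needed. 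For tightness the paper uses Billingsley's moment criterion (Theorem 15.6), bounding $\mathbf E|U_{n,i}(h_1)-U_{n,i}(h)||U_{n,i}(h_2)-U_{n,i}(h)|$ by $B|h_2-h_1|^2$ via $A2$.

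The main caution in your version is the appeal to Rebolledo's martingale CLT. That requires $N_{n,i}$ to have a predictable compensator of the form $Y_{n,i}e^{\cdots}\,d\Lambda_{n,0}$ under $\mathbb Q_n$, but the framework in Section~\ref{SectionGeneral} does not assume this: when $\mathbb Q_n$ is the empirical distribution (as for Method~\ref{methodmofn}) the sampling distribution is purely discrete and no absolutely continuous baseline hazard exists. A Lindeberg CLT for triangular arrays of independent summands (exactly what the paper's characteristic-function expansion delivers) is the right tool at this level of generality. Your marked-point-process reading of $A3$ is correct and equivalent to the paper's use of it; your Aldous-type tightness would also work, though the paper's moment bound is shorter.
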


Next we show that processes $U_n$ and $U^*_n$ have the same asymptotic distribution. 

\begin{lemma}\label{lemma2}
Let $\Theta$ be a compact set in $\mathbb{R}^{2p+1}$. Then under conditions $A1$ and $A2$,
$$\sup_{h\in\Theta}|U_n(h)-U^*_n(h)| \xrightarrow[]{\mathbf{P}} 0.$$
\end{lemma}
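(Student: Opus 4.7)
The plan is to expand $U^*_n(h)$ around $\theta_n$ by splitting the time axis $[0,\tau]$ at the two points $\zeta_n$ and $\zeta_n + h_\zeta/m_n$, apply a second-order Taylor expansion in $h_\alpha/\sqrt{m_n}$ and $h_\beta/\sqrt{m_n}$ on the two ``regular'' pieces, and approximate the small ``jump'' piece directly. Matching term-by-term against the decomposition of $U_n(h)$ should yield the $o_{\mathbf{P}}(1)$ bound uniformly in $h \in \Theta$.

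Concretely, suppose $h_\zeta > 0$ (the opposite sign is symmetric). Write $U^*_n(h) = T_1(h_\alpha) + T_2(h) + T_3(h_\beta,h_\zeta)$, where $T_1$ integrates over $[0,\zeta_n]$ (both base and perturbed parameters use the $\alpha$-side), $T_3$ integrates over $(\zeta_n+h_\zeta/m_n,\,\tau]$ (both use the $\beta$-side), and $T_2$ is the jump piece on $(\zeta_n,\,\zeta_n+h_\zeta/m_n]$. A second-order Taylor expansion of $\gamma \mapsto \log S_{n,0}(s;\gamma)$ at $\alpha_n$ converts $T_1$ into a linear form in $h_\alpha$ with coefficient $\frac{1}{\sqrt{m_n}}\sum_i\int_0^{\zeta_n}(Z_{n,i}(s)-\bar Z_n(s;\alpha_n))\,dN_{n,i}(s)$ and a quadratic form with Hessian $\frac{1}{m_n}\sum_i\int_0^{\zeta_n} Q_n(s;\tilde\alpha_n)\,dN_{n,i}(s)$ for some $\tilde\alpha_n$ on the segment between $\alpha_n$ and $\alpha_n+h_\alpha/\sqrt{m_n}$. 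The quadratic coefficient converges in probability, uniformly in $h_\alpha$ on compact sets, to $\int_0^{\zeta_0} Q(s;\alpha_0)s_0(s;\alpha_0)\lambda_0(s)\,ds$ by A1, Lemma~\ref{lemma1}, and the consistency of $\theta_n$ from Theorem~\ref{theorem1}. The linear coefficient is identified with $U_{n,1}$ after replacing $\bar Z_n(s;\alpha_n)$ by $\bar z_n(s;\alpha_n)$, invoking the Andersen--Gill cancellation $S_{n,1}(s;\alpha_n) - \bar Z_n(s;\alpha_n) S_{n,0}(s;\alpha_n) \equiv 0$ (which recasts the score as a martingale integral) together with the uniform $O_{\mathbf{P}}(m_n^{-1/2})$ bound from Lemma~\ref{lemma1} and a Lenglart control of the martingale remainder. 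The term $T_3$ is treated in exactly the same way and produces $h_\beta' U_{n,6}$ minus the $\beta$-quadratic form, up to $o_{\mathbf{P}}(1)$.

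For the jump piece $T_2$ I would replace $\alpha_n + h_\alpha/\sqrt{m_n}$ by $\alpha_n$ and $\log R_n(s;\alpha_n,\beta_n)$ by $\log r_n(s;\alpha_n,\beta_n)$ inside the integrand. The first replacement introduces an error of order $\|h_\alpha\|m_n^{-1/2}$ per jump, the second an $O_{\mathbf{P}}(m_n^{-1/2})$ error uniformly in $s$ by Lemma~\ref{lemma1}. Multiplied by the total number of jumps on $(\zeta_n,\,\zeta_n+h_\zeta/m_n]$, which is $O_{\mathbf{P}}(1)$ uniformly in $h_\zeta$ on the compact coordinate of $\Theta$ by condition A2 (which controls the $A_{n,0}$-mass of such intervals) and by monotonicity of the jump count, the cumulative error is $o_{\mathbf{P}}(1)$. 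What remains is exactly $U_{n,4}(h_\zeta)$.

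The main obstacle is upgrading the pointwise $o_{\mathbf{P}}(1)$ bounds to uniform ones over the full compact $\Theta$. For $(h_\alpha,h_\beta)$ this is essentially automatic, as the Taylor residuals depend polynomially on these arguments and the data-dependent quadratic forms converge uniformly on compacts. For $h_\zeta$ the difficulty is genuine: one must exploit the monotonicity in $|h_\zeta|$ of both the jump count $\sum_i[N_{n,i}(\zeta_n+h_\zeta/m_n)-N_{n,i}(\zeta_n)]$ and of the deterministic mass $A_{n,0}(\zeta_n+h_\zeta/m_n)-A_{n,0}(\zeta_n)$, and combine condition A2 with a Lenglart-type inequality, along the lines of the estimates for $a_{n,k}$ in the proof of Theorem~\ref{theorem2}. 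Once this uniform $h_\zeta$-control is secured and the symmetric case $h_\zeta<0$ is handled by the same decomposition (with $U_{n,2}$ replacing $U_{n,4}$), the desired $\sup_{h\in\Theta}|U^*_n(h)-U_n(h)| \xrightarrow[]{\mathbf{P}} 0$ follows.
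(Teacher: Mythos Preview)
Your proposal is correct and follows essentially the same route as the paper: decompose $U^*_n$ according to the partition of $[0,\tau]$ at $\zeta_n$ and $\zeta_n+h_\zeta/m_n$, Taylor-expand the two ``regular'' pieces in $h_\alpha,h_\beta$ to produce the linear terms $h_\alpha'U_{n,1},\,h_\beta'U_{n,6}$ and the limiting quadratic forms, and reduce the short ``jump'' piece to $U_{n,4}$ (resp.\ $U_{n,2}$) by replacing $\alpha_n+h_\alpha/\sqrt{m_n}$ with $\alpha_n$ and $R_n$ with $r_n$ at an $O_{\mathbf P}(m_n^{-1/2})\times O_{\mathbf P}(1)$ cost. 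The paper packages both signs of $h_\zeta$ into a single four-term decomposition $u_{n,1},\dots,u_{n,4}$ and handles the boundary mismatch between the domain of $U_{n,6}$ and that of your $T_3$ by the crude bound~\eqref{cross}, but otherwise the arguments (including the monotonicity-in-$|h_\zeta|$ control you describe) coincide.
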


Thus, $( U^*_n,  J^*_n)$ also converges weakly in the Skorohod topology to $(U,J)$ in ${\cal D}_\Theta\times {\cal D}_{K}$. Then by Theorem 3.1 in \cite{SeijoSen2011b}, we have the desired conclusion. 

\section{Appendix}\label{Appendix}
This appendix contains proofs of Lemmas \ref{lemma4}, \ref{lemmancc}, \ref{lemma3}, \ref{lemma2} and \ref{lemma1}.

\begin{proof}[Proof of Lemma \ref{lemma4}]
From the definition of $U$ it is easily seen that 
\begin{eqnarray*}
\phi_\alpha &=& \biggr(\int_0^{\zeta_0} Q(s;\alpha_0)s_0(s;\alpha_0)\lambda_0(s)ds\biggr)^{-1}U_1, \\
\phi_\beta  &= & \left(\int_{\zeta_0}^\tau Q(s;\beta_0)s_0(s;\beta_0)\lambda_0(s)ds\right)^{-1}U_6,\\
\phi_\zeta  & =& \mbox{sargmax}_{h\in\mathbb{R}^{2p+1}}\{U_2(h_\zeta) + U_4(h_{\zeta})\}.
\end{eqnarray*}
Due to the independence of $U_1, U_2, U_4$ and $U_6$, $\phi_\alpha$, $\phi_\beta$, and $\phi_\zeta$ are independent.  In addition,  \eqref{dis1} and \eqref{dis2} hold. 

We now show the existence  of  $\phi_\zeta$. It suffices to show that $U_2(h_\zeta) + U_4(h_{\zeta}) \rightarrow -\infty$ as $|h_\zeta|\rightarrow \infty$. For $h_\zeta>0$,
\begin{align*}
U_4(h_\zeta) 
=&-\Gamma^+(h_\zeta)\log{r(\zeta_0;\alpha_0,\beta_0)} 
+\sum_{1\leq i\leq\Gamma^+(h_\zeta)} (\alpha_0-\beta_0)v_i^+ \\
=& \sum_{1\leq i\leq\Gamma^+(h_\zeta)} \left\{
 (\alpha_0-\beta_0)v_i^+ 
- \mathbf{E}\left[ (\alpha_0-\beta_0)v_i^+ 
\right]\right\}\\
&~~~~+
\mathbf{E}\left[  (\alpha_0-\beta_0)v_i^+ -\log{r(\zeta_0;\alpha_0,\beta_0)}\right]
 \Gamma^+(h_\zeta).
\end{align*}
Since $\Gamma^+(h_\zeta)\xrightarrow[]{}\infty$ as $h_\zeta\rightarrow\infty$ and 
\begin{eqnarray*}\label{inf}
   \mathbf{E}\left[ (\alpha_0-\beta_0)v_i^+ -\log{r(\zeta_0;\alpha_0,\beta_0)} \right]
=
 {(\alpha_0-\beta_0)'z({\zeta_0;\beta_0})}
  -\log{r(\zeta_0;\alpha_0,\beta_0)} <0,
\end{eqnarray*}
 $U_{4}(h_\zeta)\xrightarrow[]{}-\infty$ as $h_\zeta\rightarrow\infty$.
A similar argument gives $U_2(h_\zeta)\rightarrow -\infty$ as $h_\zeta\rightarrow -\infty$, which completes the proof. 
\end{proof}

\begin{proof}[Proof of Lemma \ref{lemmancc}]
For \eqref{1e}, we only need to show that the first sequence does not converge in probability. 
Take $\epsilon <1/4$. From Theorem \ref{theorem2}, there exists a constant $B_\epsilon>0$ such that  $P(n|\hat\zeta_n-\zeta_0|\leq B_\epsilon)>1-\epsilon$ for all large $n$. Choose $h>2B_\epsilon$ and let 
\begin{eqnarray*}
&&\hat E_n = \sum_{i=1}^n \int_{\hat\zeta_n}^{\hat\zeta_n +\frac{h}{n}} 
dN_{i}\left(s\right),~~
E_{n,1} = \sum_{i=1}^n \int_{\zeta_0+\frac{B_\epsilon}{n} }^{\zeta_0 +\frac{h-B_\epsilon}{n}} 
dN_{i}\left(s\right),\,\\
&&E_{n,2} =  \sum_{i=1}^n \int_{\zeta_0-\frac{B_\epsilon}{n}}^{\zeta_0 +\frac{h+B_\epsilon}{n}} 
dN_{i}\left(s\right).
\end{eqnarray*} 
Then, \begin{equation}\label{E}
P(E_{n,1}\leq \hat E_n\leq E_{n,2})
\geq P(n|\hat\zeta_n-\zeta_0|\leq B_\epsilon)>1-\epsilon.
\end{equation}
We know that for any $h_1<h_2$,
\begin{eqnarray*}
\sum_{i=1}^n  \int_{\zeta_0+\frac{h_1}{n}}^{\zeta_0 +\frac{h_2}{n}} dN_i(s)
\leadsto \mbox{Poisson}\big(\lambda_0(\zeta_0)[(h_2-h_1\vee 0)s_0(\zeta_0;\beta_0)
-(h_1\wedge 0)s_0(\zeta_0;\alpha_0)]\big).~
\end{eqnarray*}
Therefore, 
$E_{n,1} \leadsto \mbox{Poisson}(\lambda_0(\zeta_0)(h-2B_\epsilon)s_0(\zeta_0;\beta_0))$
and 
$E_{n,2} \leadsto \mbox{Poisson}(\lambda_0(\zeta_0)(h+B_\epsilon)s_0(\zeta_0;\beta_0)+\lambda_0(\zeta_0)B_\epsilon s_0(\zeta_0;\alpha_0)).$ Then by Lemma A.4 in \cite{SeijoSen2011},
 there is a constant $h_0$ such that when $h>h_0$, we can find two numbers $N_{1,h}<N_{2,h}\in \mathbb{N}$ satisfying 
$$\liminf_{n\rightarrow\infty} \mathbf{P}(E_{n,1} > N_{2,h}) >2\epsilon ~\mbox{and}~ 
\liminf_{n\rightarrow\infty} \mathbf{P}(E_{n,2} < N_{1,h}) >2\epsilon.$$
Combining with \eqref{E}, we have 
$$\mathbf{P}(\hat E_n\geq E_{n,1} > N_{2,h}, ~i.o.) >\epsilon ~\mbox{and}~ 
 \mathbf{P}(\hat E_n \leq E_{n,2} < N_{1,h}, ~i.o.) >\epsilon.$$
Then by the Hewitt-Savage 0-1 law, the permutation invariant events 
$\{\hat E_n> N_{2,h}, ~i.o.\}$ and $\{\hat E_n< N_{1,h}, ~i.o.\}$ occur with probability 1, which implies that $\hat E_n$ does not have an almost sure limit. 
A similar argument applies for any increasing sequence of natural numbers $\{n_k\}_{k=1}^\infty$ and gives that 
  $\hat E_n$ does not converge in probability. 
  
 For \eqref{2e},  consider  the real part of $\phi_i$, $Re(\phi_i)$, and define
$\hat E^{\phi}_n = \sum_{i=1}^n \int_{\hat\zeta_n}^{\hat\zeta_n +\frac{h}{n}} Re(\phi_i(s))$ 
$dN_{i}\left(s\right)$,
$E^{\phi}_{n,1} = \sum_{i=1}^n \int_{\zeta_0+\frac{B_\epsilon}{n} }^{\zeta_0 +\frac{h-B_\epsilon}{n}} 
Re(\phi_i(s))dN_{i}\left(s\right)$, and
$E^{\phi}_{n,2} =  \sum_{i=1}^n \int_{\zeta_0-\frac{B_\epsilon}{n}}^{\zeta_0 +\frac{h+B_\epsilon}{n}} 
Re(\phi_i(s))$ $dN_{i}\left(s\right).$ 
It is sufficient to show that $\hat E^{\phi}_n $ does not converge. 
Since for any $h_1<h_2$,
$\sum_{i=1}^n  \int_{\zeta_0+\frac{h_1}{n}}^{\zeta_0 +\frac{h_2}{n}}$ $Re(\phi_i(s))dN_i(s)$ converges to a compound Poisson distribution, then a similar argument as above gives the desired conclusion. 
\end{proof}

\begin{proof}[Proof of Lemma \ref{lemma3}]


It is sufficient to show the weak convergence in probability of  $(U_{n,1},\cdots, U_{n,6})$ to  $(U_{1},\cdots, U_{6})$.
We first prove the convergence of  its finite dimensional joint characteristic function.  Consider  real numbers $h_{-N}<\cdots<h_{-1} <0 = h_0 <h_1<\cdots<h_N$ and the linear combination
\begin{align*}
W_n =& ~
\mu U_{n,1} + v U_{n,6}\\
&+\sum_{-N\leq j\leq -1} \left\{q_j (U_{n,2}(h_j) - U_{n,2}(h_{j+1})) +p_j (U_{n,3}(h_j) - U_{n,3}(h_{j+1})) \right\}\\
& + \sum_{1\leq j\leq N} \left\{q_j (U_{n,4}(h_j) - U_{n,4}(h_{j-1})) +p_j (U_{n,5}(h_j) - U_{n,5}(h_{j+1}))\right\} ,
\end{align*}
where $p_j,q_j \in\mathbb{R},  j=-N,\cdots,N,$ and $\mu, v\in \mathbb{R}^{1\times p}$. 
For simplicity, we write $h_{j,n}=h_j/m_n$.

The characteristic function of $W_n$ is ${\mathbf E} [e ^{\imath tW_n}] $ and can be expressed as
\begin{align*}
& {\mathbf E}\biggr[ \exp\biggr(\imath t\mu\frac{1}{\sqrt{m_n}}\sum_{k=1}^{m_n} \int_0^{\zeta_n} \left(Z_{n,k}(s) - 
{\bar z_n(s;\alpha_n)}\right) dN_{n,k}(s) \\
&+\imath tv \int_{\zeta_n}^\tau \left(Z_{n,k}(s) - 
{\bar z_n(s;\beta_n)}\right) dN_{n,k}(s)\\
& +\imath t \sum_{j=-N}^{-1} \sum_{k=1}^{m_n} 
\int_{\zeta_n+h_{j,n}}^{\zeta_n+h_{j+1,n}}\left(q_j (\beta_n-\alpha_n)'Z_{n,k}(s) - 
q_j\log{r_{n}(s;\beta_n,\alpha_n)} + p_j\right)dN_{n,k}(s) \\
& +\imath t \sum_{j=1}^{N}  \sum_{k=1}^{m_n} \int_{\zeta_n+h_{j-1,n}}^{\zeta_n+h_{j,n}} 
\left(q_j (\alpha_n-\beta_n)'Z_{n,k}(s) - q_j
\log{r_{n}(s;\alpha_n,\beta_n)} +p_j\right) dN_{n,k}(s)\biggr)\biggr].
\end{align*}
By the independence of the observations $\{(\tilde T_{n,k}, \delta_{n,k},Z_{n,k}): k=1,\cdots,m_n\}$, ${\mathbf E} [e ^{\imath tW_n}] $ can be further written as 
\begin{align*}
&\prod_{k=1}^{m_n} {\mathbb Q}_n\biggr\{ 
1+\int_0^{\tau}\Big[e^{\imath t\frac{1}{\sqrt{m_n}} \mu \big( Z_{n,k}(s) - 
{\bar z_n(s;\alpha_n)}\big)1_{s<\zeta_n}}
 -1 \Big] dN_{n,k}(s)\\
&~~~~~~+ \int_0^{\tau} \Big[
e^{\imath tv\frac{1}{\sqrt{m_n}}  \big( Z_{n,k}(s) - {\bar z_n(s;\beta_n)}\big)
 1_{s>\zeta_n}}-1 \Big] dN_{n,k}(s)\\
&~~~~~~ +\sum_{-N\leq j\leq -1}\int_{\zeta_n+h_{j,n}}^{\zeta_n+h_{j+1,n}}\Big[e^{\imath t \left(q_j (\beta_n-\alpha_n)'Z_{n,k}(s) - 
q_j\log{r_{n}(s;\beta_n,\alpha_n)} + p_j\right)}-1\Big]dN_{n,k}(s) \\
& ~~~~~~ + \sum_{1\leq j\leq N}  \int_{\zeta_n+h_{j-1,n}}^{\zeta_n+h_{j,n}} 
\Big[e^{\imath t\left(q_j (\alpha_n-\beta_n)'Z_{n,k}(s) - q_j
\log{r_{n}(s;\alpha_n,\beta_n)} +p_j\right)} -1 \Big]dN_{n,k}(s)\biggr\}.
\end{align*}
For the first two integrals, take Taylor's expansions of the exponential functions and we have that 
${\mathbf E} [e ^{\imath t W_n}]$ equals 
\begin{align*}
& 
\prod_{k=1}^{m_n}\biggr\{ 1+ {\mathbb Q}_n \left(
\int_0^{\tau}\frac{1}{\sqrt{m_n}}it\mu \Big( Z_{n,k}(s) - 
{\bar z_n(s;\alpha_n)}\Big)1_{s<\zeta_n}
dN_{n,k}(s)\right)\\
&~+ {\mathbb Q}_n \left(
\int_0^{\tau} \frac{1}{\sqrt{m_n}} itv \Big( Z_{n,k}(s) 
- {\bar z_n(s;\beta_n)}\Big)1_{s>\zeta_n}  dN_{n,k}(s)\right)
\\
&~ - \frac{1}{{2m_n}} {\mathbb Q}_n 
\biggr(\int_0^{\tau}t^2
\mu \Big( Z_{n,k}(s) - {\bar z_n(s;\alpha_n)}\Big)^{\otimes 2}\mu'
~1_{s<\zeta_n}dN_{n,k}(s)\biggr)\\
&~ - \frac{1}{{2m_n}} {\mathbb Q}_n\biggr(\int_0^{\tau}t^2v
\Big( Z_{n,k}(s) - {\bar z_n(s;\beta_n)}\Big)^{\otimes 2}v'
~1_{s>\zeta_n} dN_{n,k}(s)\biggr) +o(m_n^{-1})\\
& + \sum_{-N\leq j\leq -1}{\mathbb Q}_n\biggr(
\int_{\zeta_n+h_{j,n}}^{\zeta_n+h_{j+1,n}}
\Big[e^{\imath t \left(q_j (\beta_n-\alpha_n)'Z_{n,k}(s) - 
q_j\log{r_{n}(s;\beta_n,\alpha_n)} + p_j\right)}-1\Big]dN_{n,k}(s)\biggr) \\
&  + 
\sum_{1\leq j\leq N}  {\mathbb Q}_n\biggr(
\int_{\zeta_n+h_{j-1,n}}^{\zeta_n+h_{j,n}} 
\Big[e^{\imath t \left(q_j (\alpha_n-\beta_n)'Z_{n,k}(s) - q_j
\log{r_{n}(s;\alpha_n,\beta_n)} +p_j\right)} -1 \Big]dN_{n,k}(s)\biggr)\biggr\}.
\end{align*}
By condition A2,  
$\mathbb{Q}_n\big(\sum_{i=1}^{m_n}\int_0^{\zeta_n} \big( Z_{n,k}(s) - {\bar z_n(s;\alpha_n)}\big)dN_{n,k}(s)\big)$ and
$\mathbb{Q}_n\big(\sum_{i=1}^{m_n}\int_{\zeta_n}^\tau \big( Z_{n,k}(s) - {\bar z_n(s;\beta_n)}\big)dN_{n,k}(s)\big)$ converge to 0. Thus
 ${\mathbf E}[e ^{\imath tW_n}]$ equals
\begin{align*}
&(1+o(1))\exp\biggr\{-
 \frac{1}{{2}} \sum_{k=1}^{m_n} \mathbb{Q}_n
 \biggr(\int_0^{\zeta_n} t^2\mu
 \big( Z_{n,k}(s) - {\bar z_n(s;\alpha_n)}\big)^{\otimes 2}\mu'~
dN_{n,k}(s)\biggr)\\
&~~~~~~~~~~~~
- \frac{1}{{2}} \sum_{k=1}^{m_n} \mathbb{Q}_n
 \biggr(\int_{\zeta_n}^\tau
  t^2 v \big( Z_{n,k}(s) - {\bar z_n(s;\beta_n)}\big)^{\otimes 2}v
~dN_{n,k}(s) \biggr)\biggr\}\\
\times&\exp\biggr\{\sum_{j=-N}^{-1}
\mathbb{Q}_n \biggr(\sum_{k=1}^{m_n} 
\int_{\zeta_n+h_{j,n}}^{\zeta_n+h_{j+1,n}}\Big[e^{\imath t \left(q_j (\beta_n-\alpha_n)'Z_{n,k}(s) - 
q_j\log{r_{n}(s;\beta_n,\alpha_n)} + p_j\right)}-1\Big] dN_{n,k}(s)
\biggr)\biggr\}\\
\times& \exp\biggr\{ \sum_{j=1}^{N} 
\mathbb{Q}_n \biggr(\sum_{k=1}^{m_n} 
\int_{\zeta_n+h_{j-1,n}}^{\zeta_n+h_{j,n}} 
\Big[e^{\imath t\left(q_j (\alpha_n-\beta_n)'Z_{n,k}(s) - q_j
\log{r_{n}(s;\alpha_n,\beta_n)} +p_j\right)} -1 \Big] dN_{n,k}(s)
\biggr)\biggr\}.
\end{align*}
It is easily seen that the first exponential component  in the above display converges to ${\mathbf E} [e ^{\imath t\mu U_{1} +\imath tvU_{6}}]$. Therefore, Lemma \ref{lemma1}
together with condition $A3$ implies that 
\begin{align*}
{\mathbf E} [e ^{\imath tW_{n}}]
=~&(1+o(1)) {\mathbf E} [e ^{\imath t\mu U_{1} +\imath tvU_{6}}]\\
&\times \exp\biggr\{  \lambda_0(\zeta_0)
\sum_{-N\leq j\leq -1} (h_{j+1}-h_{j})\\
&~~~~~~\times\Big[e^{-q_j\log{r(\zeta_0;\beta_0,\alpha_0)} + p_j} 
 ~s_{0}(\zeta_0; itq_j (\beta_0-\alpha_0)+\alpha_0)
- s_{0}(\zeta_0; \alpha_0)\Big]  \biggr\}\\
&\times \exp\biggr\{ \lambda_0(\zeta_0)
\sum_{1\leq j\leq N}  (h_{j}-h_{j-1})\\
&~~~~~~\times \Big[ e^{- q_j\log{r(\zeta_0;\alpha_0,\beta_0)} +p_j}
  ~s_{0}(\zeta_0; itq_j (\alpha_0-\beta_0)+\beta_0)
-s_{0}(\zeta_0; \beta_0)\Big]  \biggr\}.
\end{align*}

For $(U_1,\cdots, U_6)$ as defined in Section 3.2, define the linear combination
\begin{align*}
W :=& 
\mu U_{1} + v U_{6}+
\sum_{-N\leq j\leq -1} \left\{q_j (U_{n,2}(h_j) - U_{2}(h_{j+1})) +p_j (U_{3}(h_j) - U_{n,3}(h_{j+1})) \right\}\\
& + \sum_{1\leq j\leq N} \left\{q_j (U_{n,4}(h_j) - U_{4}(h_{j-1})) +p_j (U_{5}(h_j) - U_{n,5}(h_{j+1}))\right\} .
\end{align*}
By the definition of $(U_1,\cdots, U_6)$, we know that the characteristic function of $W$ has the same form as the limit of ${\mathbf E}[e ^{\imath tW_{n}}]$. Thus, we have the  weak convergence of the finite dimensional distributions. 

To further prove the weak convergence of  $(U_1,\cdots, U_6)$, we use Theorem 15.6 in \cite{Billingsley1968}.
It's sufficient to show for each $U_{n,i}, i=2,3,4,5,$ there exists a nondecreasing, continuous function $F$ such that for any $h_1<h<h_2$,
\begin{equation}\label{bll}
{\mathbf E} |U_{n,i}(h_1)-U_{n,i}(h)| |U_{n,i}(h_2)-U_{n,i}(h)|  \leq  (F(h_2)-F(h_1))^2.
\end{equation}
Consider $U_{n.2}$. For  $h_1<h<h_2<0$,
\begin{eqnarray*}
&& {\mathbf E} |U_{n,2}(h_1)-U_{n,2}(h)| |U_{n,2}(h_2)-U_{n,2}(h)| \\
&\leq& 
{\mathbf E} \sum_{i=1}^{m_n} \int_{\zeta_n+h_1/m_n}^{\zeta_n+h/m_n}\left| (\beta_n-\alpha_n)'Z_{n,i}(s) - 
\log{r_{n}(s;\beta_n,\alpha_n)} \right| dN_{n,i}(s)\\
&&~~~~\times\sum_{i=1}^{m_n} \int_{\zeta_n+h/m_n}^{\zeta_n+h_2/m_n}\left| (\beta_n-\alpha_n)'Z_{n,i}(s) - 
\log{r_{n}(s;\beta_n,\alpha_n)} \right| dN_{n,i}(s)\\
&\leq& \sup_{1\leq i\leq m_n\atop s\in[\zeta_n+\frac{h_1}{{m_n}},\zeta_n+\frac{h_2}{{m_n}}]}
 \left| (\beta_n-\alpha_n)'Z_{n,i} (s) - \log{r_{n}(s;\beta_n,\alpha_n)} \right| \\
&&~~~~~~~~~~~\times m_n^2
 {\mathbf E} \biggr[\int_{\zeta_n+h_1/m_n}^{\zeta_n+h/m_n}dN_{n,i}(s)\biggr]
  {\mathbf E}\biggr[  \int_{\zeta_n+h/m_n}^{\zeta_n+h_2/m_n} dN_{n,i}(s)\biggr]\\
 &\leq & B |h_2-h_1|^2,
\end{eqnarray*}
where $B>0$ is some constant.
Thus, \eqref{bll} holds for $U_{n,2}$. Similar arguments give that  \eqref{bll} is satisfied for  $U_{n,i}, i=3,4,5.$ Then our conclusion follows from Theorem 15.6 in \cite{Billingsley1968}.
\end{proof}

\begin{proof}[Proof of Lemma \ref{lemma2}]
For notational simplicity, we write 
$$h_{\alpha,n}=\frac{h_\alpha}{\sqrt{m_n}},
h_{\beta,n}=\frac{h_\beta}{\sqrt{m_n}}, h_{\zeta,n}=\frac{h_\zeta}{{m_n}}.$$
We start by writing $U_n^*$ as follows:
\begin{eqnarray*}
 U^*_n(h) 
 := u_{n,1} (h)+u_{n,2}(h) +u_{n,3}(h)+u_{n,4}(h)\end{eqnarray*}
where 
\begin{align*}
 u_{n,1} (h) =& \sum_{i=1}^{m_n} \int_0^{\tau}\big({h_{\alpha,n}'}Z_{n,i}(s) - 
\log{R_{n}(s;\alpha_n+h_{\alpha,n},\alpha_n)}\big)
1_{s\leq \zeta_n\wedge(\zeta_n+h_{\zeta,n}) }dN_{n,i}(s),\\
 u_{n,2} (h)=& \sum_{i=1}^{m_n} \int_0^\tau \big( (\alpha_n-\beta_n+h_{\alpha,n})' Z_{n,i}(s) \\
&~~~~~~~~~~-\log{R_{n}(s;\alpha_n+h_{\alpha,n},\beta_n)}\big)
1_{\zeta_n<s\leq \zeta_n+h_{\zeta,n} }dN_{n,i}(s),\\
 u_{n,3} (h)=& \sum_{i=1}^{m_n} \int_0^\tau \big( (\beta_n-\alpha_n+h_{\beta,n})' Z_{n,i}(s) \\
 &~~~~~~~~~~- 
\log{R_{n}(s;\beta_n+h_{\beta,n},\alpha_n)}\big)
1_{\zeta_n+h_{\zeta,n}<s\leq {\zeta_n}} dN_{n,i}(s),\\
u_{n,4}(h)=&\sum_{i=1}^{m_n} \int_0^{\tau}\big(h_{\beta,n}'Z_{n,i}(s)- 
\log{R_{n}(s;\beta_n+h_{\beta,n},\beta_n)}\big)
1_{s> \zeta_n\vee(\zeta_n+h_{\zeta,n}) }dN_{n,i}(s).
\end{align*}
For $h=(h_\alpha', h_\beta', h_\zeta)' \in\Theta$, consider the difference between $u_{n,1}$ and the first two terms in $U_n$:
\begin{eqnarray}\label{Uncal}
&&\Big| u_{n,1} (h)-h_\alpha' U_{n,1} + \frac{1}{2}h_\alpha'\Big(\int_0^{\zeta_0} Q(s; \alpha_0)s_0(s;\alpha_0)\lambda_0(s)ds\Big)h_\alpha  \Big|\notag\\
&\leq& \biggr|\sum_{i=1}^{m_n} \int_0^{\tau}\big({h_{\alpha,n}'}Z_{n,i}(s) - 
\log{R_{n}(s;\alpha_n+h_{\alpha,n},\alpha_n)}\big)
1_{ \zeta_n\wedge(\zeta_n+{h_{\zeta,n}}) <s\leq \zeta_n}dN_{n,i}(s)\biggr|\notag\\
&& + \biggr|\sum_{i=1}^{m_n} \int_0^{\zeta_n}\big(h_{\alpha,n}'Z_{n,i}(s) - 
\log{R_{n}(s;\alpha_n+h_{\alpha,n},\alpha_n)}\big)dN_{n,i}(s)-h_\alpha' U_{n,1}\notag \\
&&~~~~ + \frac{1}{2}h_\alpha'\biggr(\int_0^{\zeta_0} Q(s; \alpha_0)s_0(s;\alpha_0)\lambda_0(s)ds\biggr)h_\alpha\biggr|.
\end{eqnarray}
It is easily seen that 
\begin{align}\label{cross}
& \sup_{h\in \Theta} \biggr|\sum_{i=1}^{m_n} \int_0^{\tau}\left(h_{\alpha,n}' Z_{n,i}(s) - 
\log{R_{n}(s;\alpha_n+h_{\alpha,n},\alpha_n)}\right)
1_{ \zeta_n\wedge(\zeta_n+h_{\zeta,n}) <s\leq \zeta_n}dN_{n,i}(s)\biggr|\notag\\
\leq~& \frac{B_{1}}{\sqrt{m_n}} \sum_{i=1}^{m_n} \int_0^\tau1_{ \zeta_n- \frac{B_2}{{m_n}} <s\leq \zeta_n}dN_{n,i}(s) \xrightarrow[]{\mathbf{P}} 0,
\end{align}
where $B_1$ and $B_2$ are some constants. On the other hand, by Taylor's expansion, we have that 
\begin{eqnarray*}
&& \sum_{i=1}^{m_n} \int_0^{\zeta_n}\left(h_{\alpha,n}'Z_{n,i}(s) - 
\log{R_{n}(s;\alpha_n+h_{\alpha,n},\alpha_n)}\right)dN_{n,i}(s)\\
&=& h_\alpha'U_{n,1} - \frac{1}{2}h_\alpha'\Big(\frac{1}{n}\sum_{i=1}^{m_n} \int_0^{\zeta_n} Q_n(s; \alpha_n)dN_i(s)\Big)h_\alpha +o(1).
\end{eqnarray*}
Then by condition A1 and the uniform convergence of $Q_n$, the second term of \eqref{Uncal}  converges to 0 uniformly in probability. Thus, 
 \begin{eqnarray*}
\sup_{h\in\Theta}\Big| u_{n,1} (h)-h_\alpha' U_{n,1} + \frac{1}{2}h_\alpha'\Big(\int_0^{\zeta_0} Q(s; \alpha_0)s_0(s;\alpha_0)\lambda_0(s)ds\Big)h_\alpha  \Big| \xrightarrow[]{\mathbf{P}} 0.
\end{eqnarray*}
A similar argument gives that
$|u_{n,2} - U_{n,4}|\xrightarrow[]{\mathbf{P}} 0, ~  |u_{n,3} - U_{n,2}|\xrightarrow[]{\mathbf{P}} 0,$
and 
$$\sup_{h\in\Theta}\Big| u_{n,4} (h)- h_\beta' U_{n,6} - \frac{1}{2}h_\beta'\Big(\int_{\zeta_0}^\tau Q(s; \beta_0)s_0(s;\beta_0)\lambda_0(s)ds\Big)h_\beta \Big|\xrightarrow[]{\mathbf{P}} 0. $$
This completes our proof. 
\end{proof}

\begin{lemma}\label{lemma1}
Under condition $A1$, for $k=1,2,$ and $3$, as $n\rightarrow\infty$,
$$\sup_{t\in [0,\tau]\atop \gamma\in  \Theta_\alpha\cup\Theta_\beta}|S_{n,k}(t;\gamma) - s_{n,k}(t;\gamma)| \rightarrow 0
\mbox{ and} \sup_{t\in [0,\tau]\atop \gamma\in  \Theta_\alpha\cup\Theta_\beta}|S_{n,k}(t;\gamma) - s_{k}(t;\gamma)| \rightarrow 0.$$
\end{lemma}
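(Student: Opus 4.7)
\noindent\textbf{Proof proposal for Lemma \ref{lemma1}.} The second conclusion follows trivially from the first together with Condition A1 via the triangle inequality
\[
\sup_{t,\gamma}|S_{n,k}(t;\gamma)-s_k(t;\gamma)|\;\le\;\sup_{t,\gamma}|S_{n,k}(t;\gamma)-s_{n,k}(t;\gamma)|\;+\;\sup_{t,\gamma}|s_{n,k}(t;\gamma)-s_k(t;\gamma)|,
\]
so the real content is a uniform law of large numbers for the triangular array
\[
f_{t,\gamma}(\tilde T_{n,i},\delta_{n,i},Z_{n,i}) \;:=\; Y_{n,i}(t)\,Z_{n,i}^{\otimes k}(t)\,\exp(\gamma'Z_{n,i}(t)),
\]
indexed by $(t,\gamma)\in[0,\tau]\times(\Theta_\alpha\cup\Theta_\beta)$. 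Note that under $\mathbb{Q}_n$, $S_{n,k}(t;\gamma)$ is the empirical mean of $m_n$ i.i.d. copies of $f_{t,\gamma}$ and $s_{n,k}(t;\gamma)$ is its $\mathbb{Q}_n$-expectation, so I must establish that a suitable empirical-process maximal inequality holds uniformly over the class $\mathcal{F}_n:=\{f_{t,\gamma}\}$ with constants that do not depend on $n$.

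The plan is to verify that $\mathcal{F}_n$ is a \emph{manageable} class in Pollard's sense (\cite{Wells94}-style references; Pollard 1990, Ch.~4--7), with a uniformly bounded envelope. The envelope is obtained from boundedness: since $Z$ has bounded total variation on $[0,\tau]$, $\sup_{t}|Z(t)|\le M$ for some $M<\infty$ a.s., and since $\Theta_\alpha\cup\Theta_\beta$ is bounded, $|f_{t,\gamma}|\le M^k e^{cM}=:F$ for some $c$. For manageability I would decompose $f_{t,\gamma}$ as a product of three simpler classes: (i) $\{Y_{n,i}(t):t\in[0,\tau]\}=\{1_{\tilde T_{n,i}\ge t}\}$, a VC-subgraph class (half-line indicators) with uniformly bounded covering numbers; (ii) $\{Z^{\otimes k}(t):t\in[0,\tau]\}$, which inherits polynomial-in-$1/\varepsilon$ covering numbers from the bounded-variation assumption on $Z$; and (iii) $\{\exp(\gamma'Z(t)):(t,\gamma)\in[0,\tau]\times(\Theta_\alpha\cup\Theta_\beta)\}$, which by Lipschitzness of $\exp$ on bounded intervals composed with the finite-dimensional family $\gamma'Z(t)$ (bounded in total variation in $t$ and Lipschitz in $\gamma$) again has polynomial covering. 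The product of three uniformly bounded manageable classes with uniformly bounded envelope is manageable.

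Given manageability with envelope $F$, I would apply Pollard's maximal inequality (equation (7.10) of Pollard 1990, the same tool already invoked in the authors' proof of Theorem \ref{theorem2}) to obtain a constant $B$ such that
\[
\mathbf{E}\Big[\sup_{t\in[0,\tau],\;\gamma\in\Theta_\alpha\cup\Theta_\beta}|S_{n,k}(t;\gamma)-s_{n,k}(t;\gamma)|\Big]\;\le\;\frac{B}{\sqrt{m_n}},
\]
which, together with $m_n\to\infty$, yields the claimed uniform convergence to zero (in $L^1$, hence in probability; since the bound is deterministic and the statement is stated in a deterministic form for expectations, Markov's inequality gives the desired conclusion). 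The main obstacle is item (iii) above: handling the joint dependence on $(t,\gamma)$ in the exponential with a time-dependent covariate. One safe way around this is to first $\varepsilon$-net $\Theta_\alpha\cup\Theta_\beta$ finitely (using compactness and Lipschitzness of $\gamma\mapsto e^{\gamma'Z(t)}$ on the envelope set) and then, for each fixed $\gamma$ in the net, use the VC-subgraph structure in $t$ coming from $Y_{n,i}(t)$ combined with the bounded-variation control of $Z^{\otimes k}(t)e^{\gamma'Z(t)}$; the uniform bound on $\gamma\mapsto\exp(\gamma'Z(t))$ lets one interpolate between net points with an $O(\varepsilon)$ error, which is absorbed into $B/\sqrt{m_n}$ by sending the net resolution to zero at an appropriate rate.
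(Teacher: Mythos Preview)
Your proposal is correct and follows the same overall strategy as the paper: reduce the second claim to the first via the triangle inequality and A1, then establish a uniform law of large numbers by showing the relevant class is \emph{manageable} in Pollard's sense and invoking his maximal inequality. The decomposition into the three factors $Y_{n,i}(t)$, $Z_{n,i}^{\otimes k}(t)$, and $\exp(\gamma'Z_{n,i}(t))$ matches the paper's outline.

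The one point where the paper does something sharper than you is exactly your self-identified ``main obstacle'' (iii). Rather than $\varepsilon$-netting $\Theta_\alpha\cup\Theta_\beta$ and interpolating, the paper exploits the bounded-total-variation assumption to write $Z_{n,i}(t)=Z_{n,i}^+(t)-Z_{n,i}^-(t)$ with each piece nonnegative, nonincreasing, and uniformly bounded. Monotone bounded processes form a manageable class directly (Lemma~A.2 of Bilias--Gu--Ying), and then Pollard's stability result (his (5.2)) pushes manageability through the linear combination $\gamma'Z_{n,i}(t)$ and the Lipschitz map $\exp(\cdot)$, yielding manageability of $\{Y_{n,i}(t)e^{\gamma'Z_{n,i}(t)}\}$ jointly in $(t,\gamma)$ without any netting. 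Your covering-number-plus-net route would also work, but the monotone decomposition is cleaner and sidesteps the bookkeeping you were anticipating.
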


\begin{proof}[Proof of Lemma \ref{lemma1}]
We only need to show the first convergence result. The second result then follows from condition A1. 
We start with  the case  $k=0$.
In view of Theorem 8.3 of \cite{Pollard1990}, it suffices to show that 
$\{Y_{n,i}(t)e^{\gamma'Z_{n,i}(t)}\}$ is manageable. 
Since the total variation of $Z_{n,i}$ is bounded, we can write  $e^{\gamma'Z_{n,i}(t)} = e^{\gamma' Z_{n,i}^+(t)-\gamma' Z_{n,i}^-(t)}$, where componentwise, $ Z_{n,i}^+(t)$ and 
$Z_{n,i}^-(t)$ are nonnegative, nonincreasing, and bounded by some constant $B$.  By Lemma A.2 in \cite*{BGY}, we have the manageability of $\{Y_{n,i}(t)\}$, $ \{Z_{n,i}^+(t)\}$ and 
$\{Z_{n,i}^-(t)\}$.  Then, (5.2) in \cite{Pollard1990} implies  $\{\gamma'Z_{n,i}(t)\}$ is manageable, and further $\{Y_{n,i}(t)e^{\gamma'Z_{n,i}(t)}\}$ is manageable. Thus,
$$\sup_{t\in [0,\tau], \gamma\in  \Theta_\alpha\cup\Theta_\beta}|S_{n,0}(t;\gamma) - s_{n,0}(t;\gamma)| \rightarrow 0.$$
A similar argument yields the result for $k=1$ and $2$.
\end{proof}

\bibliographystyle{rss}
\bibliography{sequential,mybib}

\end{document}